\pgfplotsset{compat=1.14, set layers}
\definecolor{refcolor}{rgb}{0.23, 0.27, 0.29}
\newtheorem{theorem}{Theorem}[section]
\newtheorem{lemma}[theorem]{Lemma}
\newtheorem{corollary}[theorem]{Corollary}
\newtheorem{definition}[theorem]{Definition}
\newtheorem{observation}[theorem]{Observation}
\newcommand{\larr}{\xleftarrow{}}
\newcommand{\fA}{\mathcal A}
\newcommand{\fC}{\mathcal C}
\newcommand{\fG}{\mathcal G}
\newcommand{\fP}{\mathcal P}
\newcommand{\ps}{\mathcal P}
\newcommand{\undec}{\texttt{undecided}}
\newcommand{\incid}{\operatorname{inc}}
\newcommand{\aps}{\operatorname{act}}
\newcommand{\pairs}{\operatorname{pairs}}
\newcommand{\pred}{\operatorname{pred}}
\newcommand{\first}{\operatorname{first}}
\newcommand{\last}{\operatorname{last}}
\newcommand{\comp}{M}
\newcommand{\merge}{\operatorname{merge}}
\newcommand{\prev}{\operatorname{prev}}
\newcommand{\new}{\operatorname{new}}
\newcommand{\stays}{\operatorname{stay}}
\newcommand{\timey}{\operatorname{time}}
\newcommand{\fin}{\operatorname{fin}}
\newcommand{\betw}{\operatorname{between}}
\newcommand{\done}{\operatorname{done}}
\newcommand{\succc}{\operatorname{succ}}
\newcommand{\atwo}{\mathcal A'_{\operatorname{II}}}
\newcommand{\inn}{{\operatorname{in}}}
\newcommand{\out}{{\operatorname{out}}}
\newcommand{\sinn}{\Sigma_{\inn}}
\newcommand{\sout}{\Sigma_{\out}}
\newcommand{\phinn}{\phi_{\inn}}
\newcommand{\phout}{\phi_{\out}}
\newcommand{\ginn}{g_{\inn}}
\newcommand{\gout}{g_{\out}}
\newcommand{\pinn}{p_{\inn}}
\newcommand{\pout}{p_{\out}}
\newcommand{\noco}{\mathcal N}
\newcommand{\edco}{\mathcal E}
\newcommand{\gee}{g}
\newcommand{\rooot}{r}
\newcommand{\mpc}{\textsf{MPC}\xspace}
\newcommand{\lcl}{\textsf{LCL}\xspace}
\newcommand{\lcls}{\textsf{LCL}s\xspace}
\newcommand{\local}{\textsf{LOCAL}\xspace}
\newcommand{\congest}{\textsf{CONGEST}\xspace}
\newcommand{\class}{\textsf{CLASS\xspace}}
\begin{document}

\begin{center}
    {\LARGE \bf Towards a Complexity Classification of \lcl Problems in Massively Parallel Computation} \\
\vspace{1cm}

\begin{minipage}[H]{5cm}
\begin{center}
    {\large \bf Sebastian Brandt} \\
    {\large CISPA Helmholtz Center for Information Security} \\ \href{mailto:brandt@cispa.de}{\texttt{brandt@cispa.de}}
\end{center}
\end{minipage}
\begin{minipage}[H]{5cm}
\begin{center}
    {\large \bf Rustam Latypov\footnotemark} \\
    {\large Aalto University} \\ \href{mailto:rustam.latypov@aalto.fi}{\texttt{rustam.latypov@aalto.fi}}
    \vspace{5.2mm}
\end{center}
\end{minipage}
\begin{minipage}[H]{5cm}
\begin{center}
    {\large \bf Jara Uitto} \\
    {\large Aalto University} \\
    \href{mailto:jara.uitto@aalto.fi}{\texttt{jara.uitto@aalto.fi}}
    \vspace{5.2mm}
\end{center}
\end{minipage}

\vspace{4mm}
\begin{center}
    \today \\ 
\end{center}

\vspace{1cm}
\begin{minipage}[H]{13.3cm}
\begin{center}
    {\bf Abstract} \\ \vspace{4mm}
\end{center}

In this work, we develop the low-space Massively Parallel Computation (\mpc) complexity landscape for a family of fundamental graph problems on trees. We present a general method that solves most locally checkable labeling (\lcl) problems exponentially faster in the low-space \mpc model than in the \local message passing model. In particular, we show that all solvable \lcl problems on trees can be solved in $O(\log n)$ time (high-complexity regime) and that all \lcl problems on trees with deterministic complexity $n^{o(1)}$ in the \local model can be solved in $O(\log \log n)$ time (mid-complexity regime). 
We observe that obtaining a greater speed-up than from $n^{o(1)}$ to $\Theta(\log \log n)$ is conditionally impossible, since the problem of 3-coloring trees, which is a \lcl problem with \local time complexity $n^{o(1)}$, has a conditional \mpc lower bound of $\Omega(\log \log n)$ [Linial, FOCS'87; Ghaffari, Kuhn and Uitto, FOCS'19].
We emphasize that we solve \lcl problems on \textit{constant}-degree trees, and that our algorithms are deterministic, component-stable, and work in the low-space \mpc model, where local memory is $O(n^\delta)$ for $\delta \in (0,1)$ and global memory is $O(m)$. \\

For the high-complexity regime, there are two key ingredients. One is a novel $O(\log n)$-time tree rooting algorithm, which may be of independent interest. The other is a novel pointer-chain technique and analysis that allows us to solve any solvable \lcl problem on trees in $O(\log n)$ time. For the mid-complexity regime, we adapt the approach by Chang and Pettie [FOCS'17], who gave a canonical \local algorithm for solving \lcl problems on trees.

\end{minipage}
\end{center}

\vfill
\thispagestyle{empty}
\footnotetext{Supported in part by the Academy of Finland, Grant 334238}

\clearpage
\setcounter{page}{1}

\section{Introduction}

The Massively Parallel Computation (\mpc) model, introduced in~\cite{KarloffSV10} and later refined by~\cite{mpcrefine1, mpcrefine2, broadcast}, is a mathematical abstraction of modern data processing platforms such as MapReduce~\cite{dg04}, Hadoop~\cite{White:2012}, Spark~\cite{ZahariaCFSS10}, and Dryad~\cite{Isard:2007}.
Recently, tremendous progress has been made on fundamental \emph{graph problems} in this model, such as maximal independent set (MIS), maximal matching (MM), and coloring problems.
All these problems, and many others, fall under the umbrella of \emph{Locally Checkable Labeling (\lcl)} problems, which are defined through a set of feasible configurations from the viewpoint of each individual node.
These problems serve as abstractions for fundamental primitives in large-scale graph processing and have recently gained a lot of attention~\cite{lclcongest, Balliu21, detcol,spanner, balliu2020, Chang2020,GGC20}.

The goal of our work is to go beyond individual problems and initiate a \emph{systematic study} of central graph problems in the \mpc model.
The holy grail is to \emph{automate} the design of algorithms in the following sense:
Given a problem $\Pi$, we automatically obtain an \mpc algorithm that solves the problem efficiently.
In the distributed \local model of computing, such a result is known in the case of \emph{rooted trees}~\cite{Balliu21} and the techniques automatically yield provably \emph{optimal} algorithms for the class of \lcl problems.

In our work, we focus on \lcl problems in (unrooted) trees in the most restricted \emph{low-space} \mpc model with \emph{linear} total memory that is the most \emph{scalable} variant of the \mpc model.
Our results provide an automatic method that, for a wide range of \lcl problems, yields an algorithm that solves the given problem \emph{exponentially} faster than an optimal distributed counterpart.
The resulting algorithms are \emph{component-stable}, which implies that the solutions in individual connected components are independent of the other components.

\subsection{The \mpc Model}

In the \mpc model, we have $M$ machines who communicate in an all-to-all fashion. We focus on problems where the input is modeled as a graph with $n$ vertices, $m$ edges and maximum degree $\Delta$; we call this graph the \textit{input graph}. Each node has a unique ID of size $O(\log n)$ bits from a domain $\{1,2,\dots,N\}$, where $N = \text{poly}(n)$. Each node and its incident edges are hosted on a machine with $O(n^\delta)$ \textit{local memory} capacity, where $\delta \in (0,1)$ and the units of memory are words of $O(\log n)$ bits. When the local memory is bounded by $O(n^\delta)$, the model is called \textit{low-space} (or \textit{sublinear}). The number of machines is chosen such that $M = \Theta(m/n^\delta)$. For trees, where $m=\Theta(n)$, this results in $\Theta(n^{1-\delta})$ machines. For simplicity, we assume that each machine $i$ simulates one virtual machine for each node and its incident edges that $i$ hosts, such that the local memory restriction becomes that no virtual machine can use more than $O(n^\delta)$ memory. In our work, we only consider trees with constant degree, and hence we can assume that the edges incident to each node fit into the memory of the hosting virtual machine.

During the execution of an \mpc algorithm, computation is performed in synchronous, fault-tolerant rounds. In each round, every machine performs some (unbounded) computation on the locally stored data, then sends/receives messages to/from any other machine in the network. Each message is sent to exactly one other machine specified by the sending machine. All messages sent and received by each machine in each round, as well as the output, have to fit into local memory. The time complexity is the number of rounds it takes to solve a problem. Upon termination, each node (resp.\ its hosting machine) must know its own part of the solution.

The \textit{global memory} (or \textit{total memory}) is the sum over the local memory over all machines. In this work, we restrict the global memory to linear in the number of edges, i.e., $O(m)$, which is the strictest possible as it is only enough to store a constant number of copies of the input graph. Note that if we were to allow superlinear $O(m^{1+\delta})$ global memory in constant-degree trees, many \local algorithms with complexity $O(\log n)$ could be exponentially sped up in the low-space \mpc model by applying the well-known graph exponentiation technique by Lenzen and Wattenhofer~\cite{wattenhofer}. A crucial challenge that comes with the linear global memory restriction is that only a small fraction of $n^{1 - \delta}$ of the (virtual) machines can simultaneously utilize all of their available local memory.

\subsection{The Complexity Landscapes of \local and \congest}

In the last decade, there has been tremendous progress in understanding the complexities of locally checkable problems in various models of distributed and parallel computing. A prime example is the \local model~\cite{linial}, where the input graph corresponds to a message passing system, and the nodes must output their part of the solution according only to local information about the graph. Another example is the \congest model, which is a \local model variant where the message size is restricted to $O(\log n)$~\cite{peleg}. In these two models, the \textit{whole} complexity landscape of \lcl problems is now understood for some important graph families. For instance, a rich line of work~\cite{NaorS95, tree1, chang, tree2, tree3, balliu2020, Chang2020} recently came to an end when a complexity gap between $\omega(1)$ and $o(\log^* n)$ was proved~\cite{lclcomplete}, completing the randomized/deterministic complexity landscape of \lcl problems in the \local model for trees. In the randomized/deterministic \local and \congest models, recent work showed that the complexity landscapes of \lcl problems for rooted regular trees are fully understood~\cite{Balliu21}, while the complexity landscapes of \lcl problems in the \local model for rings and tori have already been know for some while~\cite{Brandt17}. Even for general (constant-degree) graphs, the LOCAL complexity landscape of \lcl problems is almost fully understood~\cite{NaorS95, sinkless16, CKP19, chang, GhaSu17, FischerG17, GHK18, BHKLOS18, tree3, RozhonG20, GhaffariGR21}, only missing a small part of the picture related to the randomized complexity of Lovász Local Lemma (LLL).

As the most relevant results related to our work, we want to highlight two papers. In the \congest model, on trees, the complexity of an \lcl problem is asymptotically equal to its complexity in the \local model whereas the same does not hold in general graphs, as shown in~\cite{lclcongest}. It is worth noting that in order to prove the asymptotic equality between \local and \congest, the authors of~\cite{lclcongest} use very similar, but independently developed, techniques to ours in the mid-complexity regime. Furthermore, we want to accentuate the relevance of the work of Chang and Pettie~\cite{chang}, as we simulate their algorithms in our mid-complexity regime.

\subsection{Our Contributions}

\begin{figure}
	\centering
	\includegraphics[width=\textwidth]{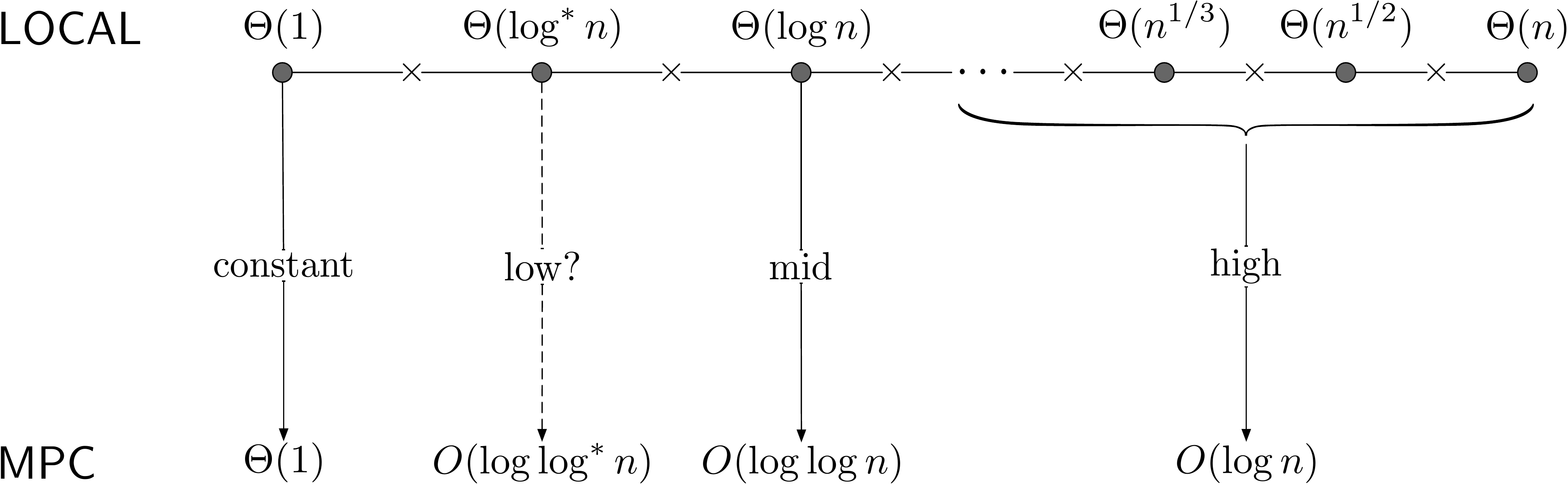}
	\caption{\textbf{Top:} The complete complexity landscape of \lcl problems for constant-degree trees in the deterministic \local model. \textbf{Bottom:} The complexity landscape of \lcl problems for constant-degree trees in the deterministic low-space \mpc model, comprised of the constant-complexity, low-complexity, mid-complexity and the high-complexity regimes.} 
	\label{fig:map}
\end{figure}

Our main contribution is showing that almost all \lcl problems (see \Cref{def:lcl}) on trees can be solved deterministically and exponentially faster in the low-space \mpc model than in the \local model. In particular, we prove the following.

\vspace{1mm}
\fcolorbox{black}{black!0.5}{\begin{minipage}{0.985\textwidth}
\textbf{High-complexity regime.} All \lcl problems on trees can be solved deterministically in $O(\log n)$ time in the low-space \mpc model using linear global memory. (\Cref{thm:high})
\end{minipage}}

\vspace{3mm}

\fcolorbox{black}{black!0.5}{\begin{minipage}{0.985\textwidth}
\textbf{Mid-complexity regime.} All \lcl problems on trees with deterministic time complexity $n^{o(1)}$ in the \local model can be solved deterministically in $O(\log \log n)$ time in the low-space \mpc model using linear global memory. (\Cref{thm:mid})
\end{minipage}}
\vspace{1mm}

A sensible way to interpret our results from the point of view of the \local model is as follows.
By the aforementioned rich line of work, we know that the complexity landscape of \lcl problems in the deterministic \local model for trees is comprised of the discrete complexity classes $\Theta(1)$, $\Theta(\log^*n)$, $\Theta(\log n)$, and $\Theta(n^{1/k})$ for any positive integer $k$.
Since all $\lcl$ problems fall into one of the aforementioned complexity classes in \local, it is natural to try to map said classes to exponentially faster \mpc classes, which is exactly how one can interpret our results. \Cref{fig:map} illustrates how our \mpc complexity regimes relate to the \local complexity classes.
We note that (i) the constant-complexity regime is trivial as we solve \lcl problems in \textit{constant}-degree trees and (ii) the low-complexity regime has been resolved in unpublished follow-up work by different authors~\cite{low-complex}.

We emphasize that all of our results use \emph{optimal} \mpc parameters in the sense that we work in the low-space setting with $O(n^\delta)$ words of local memory and $O(m)$ words of global memory. 



For 3-coloring trees, the classic work of Linial~\cite[Theorem 3.1]{linial} provides a \local lower bound of $\Omega(\log n)$, even when allowing shared randomness. Due to newer results~\cite{focs,componentstable}, this translates into a conditional $\Omega(\log \log n)$ lower bound for component-stable, low-space \mpc algorithms. This hardness result is conditioned on a widely believed conjecture in \mpc about the complexity of the connectivity problem, which asks to detect the connected components of a graph. 
Combined with the fact that $3$-coloring trees is clearly an $n^{o(1)}$-round solvable problem in \local, and our mid-complexity regime yields an $O(\log \log n)$ time complexity, we obtain a conditionally tight complexity of $\Theta(\log \log n)$.

It is worth noting that recent work~\cite{componentstable} showed a separation between component-stable and component-unstable algorithms, and that some (non-\lcl) problems can be solved much faster with unstable algorithms than with stable ones. 
While the results in~\cite{componentstable} cast uncertainty on the (unconditional) optimality of our framework, it is not clear whether the separation result holds for any \lcl problem.

\paragraph{Follow-up Work.}

In a follow-up work by a different set of authors, the low-complexity regime (\Cref{fig:map}) has been resolved.
The authors gave an $O(\log \log^* n)$ time low-space \mpc algorithm with linear global memory~\cite{low-complex}. Combining their result with our work, we obtain that all \lcl problems on trees can be solved (at least) exponentially faster in the \mpc model than in the \local model, crystallized in the following corollary.
 
\begin{corollary} \label{cor:all}
	Consider an \lcl problem on trees with deterministic time complexity $f(n)$ in the \local model. This problem can be solved deterministically in $O(\log f(n))$ time in the low-space \mpc model using linear global memory. Furthermore, assuming the connectivity conjecture, this bound is tight for component-stable algorithms.
\end{corollary}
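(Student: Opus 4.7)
The plan is to argue by cases on the deterministic \local complexity $f(n)$, using the known classification of \lcl complexities on constant-degree trees. By the rich line of work culminating in \cite{lclcomplete}, every \lcl on constant-degree trees has deterministic \local complexity in the discrete set $\{O(1),\ \Theta(\log^* n),\ \Theta(\log n),\ \Theta(n^{1/k}) : k \in \mathbb{Z}_{>0}\}$. It therefore suffices to exhibit an $O(\log f(n))$-time deterministic low-space \mpc algorithm with linear global memory for each of these four cases, after which the general claim follows by taking the appropriate case for the given problem.

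The case $f(n) = O(1)$ is trivial: a constant-time \local algorithm directly simulates in constant \mpc time, since in a constant-degree tree the radius-$O(1)$ neighborhood of every node fits inside a single virtual machine. For $f(n) = \Theta(\log^* n)$, the follow-up work \cite{low-complex} provides an $O(\log \log^* n)$-time algorithm in exactly the required model, matching $O(\log f(n))$. For $f(n) = \Theta(\log n)$, the problem lies in the $n^{o(1)}$ regime, so the mid-complexity result (\Cref{thm:mid}) yields $O(\log \log n) = O(\log f(n))$. Finally, for $f(n) = \Theta(n^{1/k})$, we have $\log f(n) = \Theta(\log n)$, and the high-complexity result (\Cref{thm:high}) yields $O(\log n) = O(\log f(n))$. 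Stitching the four cases together gives the claimed upper bound.

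The tightness statement follows from the discussion preceding the corollary. Since $3$-coloring trees is an \lcl with deterministic \local complexity $\Theta(\log n)$, for this problem $\log f(n) = \Theta(\log \log n)$, and by \cite{linial} combined with \cite{focs, componentstable} it admits a conditional $\Omega(\log \log n)$ lower bound for component-stable low-space \mpc algorithms under the connectivity conjecture. Consequently, no universal speedup beyond $O(\log f(n))$ is achievable by component-stable algorithms.

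This corollary is essentially a bookkeeping result: all of the algorithmic effort is absorbed by \Cref{thm:high}, \Cref{thm:mid}, and the follow-up \cite{low-complex}. The only step requiring mild attention is verifying that the four \local complexity classes on constant-degree trees exhaust every possibility and that the map $f \mapsto \log f$ lines up cleanly with the three upper-bound regimes; both facts are immediate from the cited completeness of the \local landscape together with the theorems proven here.
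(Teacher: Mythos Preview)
Your proposal is correct and mirrors the paper's own reasoning: the paper does not give a formal proof of this corollary but simply states it as a consequence of combining \Cref{thm:high}, \Cref{thm:mid}, and the follow-up result~\cite{low-complex}, together with the 3-coloring lower bound discussion for tightness. Your case analysis over the four discrete \local complexity classes is exactly the intended argument.
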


\subsection{Key Techniques and Roadmap}
After concluding the current section with some further related work, we start off with some preliminaries in \Cref{sec:prelim}.
In \Cref{sec:rooting}, we give a novel \mpc algorithm for rooting a tree deterministically in $O(\log n)$ time, which we use as a subroutine in the high-complexity regime. In order to make due with linear global memory, we develop a \textit{path exponentiation} technique, in which nodes perform graph exponentiation on paths while only keeping at most two extra IDs in memory. We note that the rooting algorithm may be of independent interest. It is component-stable, and compatible with arbitrary degrees and forests (\Cref{lemma:rooting}).

In \Cref{sec:high}, we provide a constructive proof for the high-complexity regime: we explicitly provide, for any solvable \lcl, an algorithm $\fA$ that has a runtime of $O(\log n)$. On a high level, algorithm $\fA$ proceeds in $3$ phases. The first phase consists of rooting the input tree by using the aforementioned algorithm as a blackbox. In the second phase, roughly speaking, the goal is to compute, for a substantial number of nodes $v$, the set of output labels that can be output at $v$ such that the label choice can be extended to a (locally) correct solution in the subtree hanging from $v$. This is done in an iterative manner, proceeding from the leaves towards the root. The last phase consists in using the computed information to solve the given \lcl from the root downwards.

While this outline sounds simple, there are a number of intricate challenges that require the development of novel techniques, both in the design of the algorithm and its analysis. For instance, the depth of the input tree can be much larger than $\Theta(\log n)$ (which prevents us from performing the above ideas in a sequential manner),
and the storage of the required completability information in a standard implementation exceeds the available memory when using graph exponentiation.
Two of our key technical contributions are 1) the design of a process that allows for interleaving graph exponentiation steps (where, during the execution of a graph exponentiation process on some part of the input tree, different parts of the tree can join in the process, or even many graph exponentiation processes executed on individual parts of the tree can be merged, simultaneously or at different times, into one process during the execution) and 2) the design of a fine-tuned potential function for the analysis of the complex algorithm resulting from addressing the aforementioned issues and the highly non-sequential behavior arising from interleaving graph exponentiation steps.
We provide a more detailed overview of the challenges and our solutions in \Cref{sec:overviewhcr}.

In \Cref{sec:rc}, we give an $O(\log \log n)$ time \mpc algorithm for the rake-and-compress decomposition, which is based on simulating the \local decomposition algorithm by Chang and Pettie~\cite{chang}. Using the decomposition, we provide an algorithm for the mid-complexity regime in \Cref{sec:mid}, which relies heavily on the \local framework for solving \lcls by Chang and Pettie~\cite{chang}. Essentially, our algorithm consist of simulating their \local algorithm while paying special attention to the memory usage. Based on our rake-and-compress decomposition, we first pre-process the graph by dividing the nodes into \emph{batches} according to which layer (or partition) they belong to such that removing batch $i$ from the decomposition reduces the number of nodes by a factor of $\Delta^{2^i}$ ($\Delta$ denoting the maximum degree of the graph). Then, we alternate between simulating the \local algorithm and performing graph exponentiation in order to achieve the desired runtime and low memory.



Note that in trees, the number $n$ of nodes and the number $m$ of edges are asymptotically equal, and we may use them interchangeably throughout the paper when reasoning about global memory.

\subsection{Further Related Work}

For many of the classic graph problems, simple $O(\log n)$ time \mpc algorithms follow from classic literature in the \local model and \textsf{PRAM}~\cite{Alon86, linial, Luby85}. In particular in the case of bounded degree graphs, it is often straightforward to simulate algorithms from other models. However, it is usually desirable to get algorithms that run \emph{much faster} than their \local counterparts.
If the \mpc algorithms are given \emph{linear} $\Theta(n)$ or even \emph{superlinear} $\Theta(n^{1+\delta})$ local memory, fast algorithms are known for many classic graph problems.

In the sublinear (or low-space) model, \cite{Chang2019} provided a randomized algorithm for the $(\Delta + 1)$-coloring problem that, combined with the new network decomposition results~\cite{RozhonG20,GhaffariGR21}, yields an $O(\log \log \log n)$ \mpc algorithm, that is exponentially faster than its \local counterpart. A recent result by Czumaj, Davies, and Parter~\cite{detcol} provides a deterministic $O(\log \log \log n)$ time algorithm for the same problem using derandomization techniques. 
For many other problems, the current state of the art in the sublinear model is still far from the aforementioned exponential improvements over the \local counterparts, at least in the case of general graphs. For example, the best known MIS, maximal matching, $(1+\epsilon)$-approximation of maximum matching, and 2-approximation of minimum vertex cover algorithms run in $\widetilde{O}(\sqrt{\log \Delta} + \sqrt{\log \log n})$ time~\cite{GU19}, whereas the best known \local algorithm has a logarithmic dependency on $\Delta$~\cite{Ghaffari16}. For restricted graph classes, such as trees and graphs with small arboricity\footnote{The arboricity of a graph is the minimum number of disjoint forests into which the edges of the graph can be partitioned.} $\alpha$, better algorithms are known~\cite{sirocco, Behnezhad19}. Through a recent work by Ghaffari, Grunau and Jin, the current state of the art for MIS and maximal matching are $O(\sqrt{\log \alpha} \cdot \log \log \alpha + \log \log n)$ time algorithms using $\widetilde{O}(n + m)$ words of global memory~\cite{GGC20}. 


As for lower bounds, \cite{focs} gave conditional lower bounds of $\Omega(\log \log n)$ for component-stable sublinear \mpc algorithms for constant approximation of maximum matching and minimum vertex cover, and MIS. In addition, the authors provided a lower bound of $\Omega(\log \log \log n)$ for LLL. Their hardness results are conditioned on a widely believed conjecture in \mpc about the complexity of the connectivity problem, which asks to detect the connected components of a graph. It is argued that disproving this conjecture would imply rather strong and surprising implications in circuit complexity~\cite{Roughgarden18}. When assuming component stability, they also argue that all known algorithms in the literature are component-stable or can easily be made component-stable with no asymptotic increase in the round complexity. However, recent work~\cite{componentstable} gave a separation between stable and unstable algorithms, and that some particular problems (e.g., computing an independent set of size $\Omega(n/\Delta)$) can be solved faster with unstable algorithms than with stable ones.

It is also worth discussing the complexity of rooting a tree, as it is an important subroutine in our high-complexity regime. On the randomized side, \cite{sirocco} gave an $O(\log d \cdot \log \log n)$ time algorithm, where $d$ is the diameter of the graph. On the deterministic side, Coy and Czumaj~\cite{coy2021deterministic}, in a concurrent work with this paper, gave an $O(\log n)$ time algorithm using derandomization methods, which is the current state of the art. In \Cref{sec:rooting} we will provide a different, but also deterministic, $O(\log n)$ time rooting algorithm. We note that \cite{pathexp} uses similar techniques, but in a more general setting and in $\omega(\log n)$ time.

\section{Preliminaries} \label{sec:prelim}

We work with undirected, finite, simple graphs $G = (V,E)$ with $n=|V|$ nodes and $m=|E|$ edges such that $E \subseteq [V]^2$ and $V \cap E = \emptyset$. Let $\deg_G(v)$ denote the degree of a node $v$ in $G$ and let $\Delta$ denote the maximum degree of $G$. The distance $d_G(v,u)$ between two vertices $v,u$ in $G$ is the length of a shortest $v - u$ path in $G$; if no such path exists, we set $d_G(v, u) \coloneqq \infty$. The greatest distance between any two vertices in $G$ is the diameter of $G$, denoted by $\text{diam}(G)$. For a subset $S \subseteq V$, we use $G[S]$ to denote the subgraph of $G$ induced by nodes in $S$. Let $G^k$, where $k \in \mathbb{N}$, denote the $k$:th power of a graph $G$, which is another graph on the same vertex set, but in which two vertices are adjacent if their distance in $G$ is at most $k$. In the context of \mpc, $G^k$ is the resulting virtual graph after performing $\log k$ steps of graph exponentiation~\cite{wattenhofer}.

For each node $v$ and for every radius $k \in \mathbb{N}$, we denote the $k$-hop (or $k$-radius) neighborhood of $v$ as $N^k(v) = \{ u \in V : d(v,u) \leq k\}$. The topology of a neighborhood $N^k(v)$ of $v$ is simply $G[N^k(v)]$. However, with slight abuse of notation, we sometimes refer to $N^k(v)$ both as the node set and the subgraph induced by node set $N^k(v)$. Neighborhood topology knowledge is often referred to as vision, e.g., node $v$ sees $N^k(v)$.

\subsection{\lcl Definitions}

In their seminal work~\cite{NaorS95}, Naor and Stockmeyer introduced the notion of a locally checkable labeling problem (\lcl problem or just \lcl for short). The definition they provide restricts attention to problems where nodes are labeled (such as vertex coloring problems), but they remark that a similar definition can be given for problems where edges are labeled (such as edge coloring problems). A modern way to define \lcl problems that captures both of the above types of problems (and combinations thereof) labels \emph{half-edges} instead, i.e., pairs $(v,e)$ where $e$ is an edge incident to vertex $v$. Let us first define a half-edge labeling formally, and then provide this modern \lcl problem definition.

\begin{definition}[Half-edge labeling]\label{def:halfedge}
    A \emph{half-edge} in a graph $G = (V,E)$ is a pair $(v,e)$, where $v \in V$ is a vertex, and $e \in E$ is an edge incident to $v$.
    A half-edge $(v,e)$ is incident to some vertex $w$ if $v = w$.
    We denote the set of half-edges of $G$ by $H = H(G)$.
    A \emph{half-edge labeling} of $G$ with labels from a set $\Sigma$ is a function $g \colon H(G) \to \Sigma$.
\end{definition}

We distinguish between two kinds of half-edge labelings: \emph{input labelings} that are part of the input and \emph{output labelings} that are provided by an algorithm executed on input-labeled instances. Throughout the paper, we will assume that any considered input graph $G$ comes with an input labeling $\ginn \colon H(G) \to \sinn$ and will refer to $\sinn$ as the \emph{set of input labels}; if the considered \lcl problem does not have input labels, we can simply assume that $\sinn = \{\bot\}$ and that each node is labeled with $\bot$. Then, \Cref{def:solve} details how a correct solution for an \lcl problem is formally specified.

\begin{definition}[\lcl] \label{def:lcl}
    An \lcl problem, \lcl for short, is a quadruple $\Pi = (\sinn, \sout, r, \fP)$ where $\sinn$ and $\sout$ are finite sets (of input and output labels, respectively), $r \geq 1$ is an integer, and $\fP$ is a finite set of labeled graphs $(P, \pinn, \pout)$. The input and output labeling of $P$ are specified by $\pinn \colon H(P) \to \sinn$ and $\pout \colon H(P) \to \sout$, respectively.\footnote{Note that the original definition given in~\cite{NaorS95} considers \emph{centered graphs}; however, since we only consider trees, considering uncentered graphs instead suffices.}
\end{definition}


\begin{definition}[Solving an \lcl] \label{def:solve}
    Recall that $N^r(v)$ denotes the subgraph of $G$ induced by all nodes at distance at most $r$ from $v$. This naturally extends to labeled graphs. A \emph{correct solution} for an \lcl problem $\Pi = (\sinn, \sout, r, \fP)$ on a graph $(G, \ginn)$ labeled with elements from $\sinn$ is a half-edge labeling $\gout \colon H(G) \to \sout$ s.t.~for each node $v \in V(G)$, the neighborhood $N^r(v)$ in $(G, \ginn, \gout)$ is isomorphic to some member of $\fP$. We require that the isomorphism respects the input and output labelings of $N^r(v)$ and the member of $\fP$. We say that an algorithm $\fA$ \emph{solves} an \lcl problem $\Pi$ on a graph class $\fG$ if it provides a correct solution for $\Pi$ for every $G \in \fG$.
\end{definition}

Note that the \lcl definitions above require that graph class $\fG$ has constant degree.

It is often useful to rephrase a given \lcl in a way that minimizes the integer $r$ in the \lcl definition. In fact, since we only consider trees, any \lcl can be rephrased in a special form, called \emph{node-edge-checkable \lcl}, where $r$ is essentially set to $1$.\footnote{Arguably, this can be seen as $r = 1/2$, which might provide a better intuition.} While the formal definition of a node-edge-checkable \lcl appears complicated, the intuition behind it is simple: essentially, we have a list of allowed output label combinations around nodes, a list of allowed output label combinations on edges, and a list of allowed input-output label combinations, all of which a correct solution for the \lcl has to satisfy.

\begin{definition}[Node-edge-checkable \lcl]\label{def:nodeedge}
    Let $\Delta$ be some non-negative integer constant. A \emph{node-edge-checkable \lcl} is a quintuple $\Pi = (\sinn, \sout, \noco, \edco, \gee)$ where $\sinn$ and $\sout$ are finite sets, $\noco = \{\noco_1, \dots, \noco_{\Delta} \}$ consists of sets $\noco_i$ of cardinality-$i$ multisets with elements from $\sout$, $\edco$ is a set of cardinality-$2$ multisets with elements from $\sout$, and $\gee \colon \sinn \to 2^{\sout}$ is a function mapping input labels to sets of output labels.
    We call $\noco_1 \cup \dots \cup \noco_{\Delta}$ and $\edco$ the \emph{node constraint} and \emph{edge constraint} of $\Pi$, respectively.
    Furthermore, we call each element of $\noco$ a \emph{node configuration}, and each element of $\edco$ an \emph{edge configuration}.
    
    For a node $v$, denote the half-edges of the form $(v,e)$ for some edge $e$ by $h_1^v, \dots, h_{\deg(v)}^v$ (in arbitrary order).
    For an edge $e$, denote the half-edges of the form $(v,e)$ for some node $v$ by $h_1^e, h_2^e$ (in arbitrary order).
    A correct solution for $\Pi$ is a half-edge labeling $\gout \colon H(G) \to \sout$ such that
    \begin{enumerate}
        \item for each node $v$, the multiset of outputs assigned by $\gout$ to $h_1^v, \dots, h_{\deg(v)}^v$ is an element of $\noco_{\deg(v)}$,
        \item for each edge $e$, the cardinality-$2$ multiset of outputs assigned by $\gout$ to $h_1^e, h_2^e$ is an element of $\edco$, and
        \item for each half-edge $h \in H(G)$, we have $\gout(h) \in \gee(\iota)$, where $\iota = \ginn(h)$ is the input label assigned to $h$.
    \end{enumerate}
\end{definition}

On trees, each \lcl $\Pi$ (with parameter $r$ in its definition) can be transformed into a node-edge-checkable \lcl $\Pi'$ by the standard technique of requiring each node $v$ to output, on each incident half-edge $h$, an encoding of its entire $r$-hop neighborhood (including input labels, output labels, and a marker indicating which of the half-edges in the encoded tree corresponds to half-edge $h$).
From the definition of $\Pi'$, it follows immediately that $\Pi'$ is equivalent to $\Pi$ in the sense that any solution for $\Pi$ can be transformed (by a deterministic distributed algorithm) in constant time into a solution for $\Pi'$, and vice versa.
Hence, for the purposes of this work, we can safely restrict our attention to node-edge-checkable \lcl{}s.

\section{Rooting a Tree}\label{sec:rooting}

In this section, we describe a novel $O(\log n)$ time deterministic low-space \mpc algorithm that roots a tree, i.e., orients the edges of a graph such that they point towards a unique root, using $O(m)$ words of global memory. This rooting method is an essential subroutine in the high-complexity regime, but it may also be of independent interest. We start by introducing the technique of \textit{path exponentiation}, which is used to contract long paths in logarithmic time in a memory efficient way. By leveraging the fact that in trees, at least half of the nodes are of degree $<3$, one could apply path exponentiation in a straightforward manner to root a tree in $O(\log^2 n)$ time. Our main contribution is pipelining this process, reducing the runtime to $O(\log n)$, and solving numerous small challenges that arise along the way.

\begin{lemma} \label{lemma:rooting}
	Trees can be rooted deterministically in $O(\log n)$ time in the low-space \mpc model using $O(m)$ words of global memory. If the input graph is a forest, the algorithm is component-stable. For a forests consisting of connected components $T_1,T_2,\dots,T_k$, the runtime becomes $O(\log |T_{\max}|)$, where $T_{\max}$ is the largest component.
\end{lemma}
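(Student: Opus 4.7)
The plan is to combine a rake-and-compress style decomposition of the tree with a path exponentiation subroutine, and to execute the resulting phases in a pipelined fashion. I would define a single phase as follows: every current leaf fixes its parent pointer to its unique neighbor and is removed (rake), and on the surviving subgraph every maximal path $v_1, \ldots, v_\ell$ of degree-$2$ nodes between two branching nodes is contracted by path exponentiation, meaning that each $v_j$ maintains at step $t$ only the identifier of the node at distance $2^t$ along the path in each direction, discarding its previous doubling pointer when the next one is written. After $O(\log \ell)$ doubling rounds every $v_j$ has learned both endpoints of its path and can commit its two incident edges to point toward a canonical endpoint (for example, the one with smaller identifier). Because each path node keeps only $O(1)$ pointers at any time, the memory used on a path of length $\ell$ stays $O(\ell)$, and summed over the graph the global memory remains $O(m)$ while the per-machine memory stays within $O(n^\delta)$.

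A standard counting argument shows that the rake-and-compress process terminates in $O(\log n)$ phases, since in every tree a constant fraction of the surviving nodes are either leaves or lie on compress paths and are therefore removed in the current phase. Running the phases one at a time would cost $O(\log \ell_{\max})$ per phase, yielding the $O(\log^2 n)$ bound the paper calls out as the naive approach. The main technical step is therefore to \emph{pipeline} the phases into a single global schedule, in which each still-alive node performs, in every \mpc round, the unique doubling or rake action it currently owes. To bound the total number of rounds I would set up a potential function that tracks, for each node $v$, the sum over the (up to $O(\log n)$) compress phases $v$ participates in of the logarithm of the then-current path length containing $v$, and argue that this potential drops by $\Omega(1)$ per round of genuine work at $v$, while the maximum over all nodes of the initial potential is $O(\log n)$.

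The hardest part of this plan, and the one that will require the most care, is exactly this pipelining analysis, because a compress path in phase $i+1$ only fully forms once the phase-$i$ raking near its endpoints has occurred, and a partially exponentiated path may need to be extended mid-computation when a neighboring phase-$i$ path finishes and its branching endpoint disappears. My intended remedy is to let two ongoing path exponentiations that become adjacent \emph{merge} at the round in which they meet: the two new endpoint nodes exchange their current doubling pointers and continue as a single process. Each merge costs $O(1)$ rounds, and any node participates in at most $O(\log n)$ merges because its logical compress path can only double in length $O(\log n)$ times before the node is raked; hence merges add at most $O(\log n)$ to the total runtime. With this schedule in place, the memory and component-stability claims are immediate: every pointer stored is local to a connected component, the root of each tree $T_i$ is chosen as the last surviving node of the rake-and-compress process on $T_i$ (ties broken by ID), the runtime on $T_i$ depends only on $|T_i|$, and the claimed $O(\log |T_{\max}|)$ bound for forests follows.
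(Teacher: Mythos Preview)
Your overall plan---rake leaves, run path exponentiation on degree-$2$ chains while keeping $O(1)$ pointers per node, and pipeline so that chains that become adjacent mid-computation can be joined---is the paper's strategy as well. Two specific steps, however, do not work as you have written them.

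First, orienting path edges ``toward a canonical endpoint (for example, the one with smaller identifier)'' during compress does not produce a rooting: the smaller-ID endpoint of a degree-$2$ chain need not lie on the root side, so you would commit edges in the wrong direction. The paper separates the two concerns. During the main loop it never orients internal path edges; when a leaf endpoint $s$ becomes virtually adjacent to the far endpoint $t$, it merely records the direction $s\to t$ at $t$ and sets the path aside. Only after the root is identified are all set-aside paths oriented, in a separate post-processing pass that is itself one more $O(\log n)$-round path exponentiation.

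Second, and more substantially, your merge analysis has a real gap. When a branching node $u$ drops to degree $2$ and two ongoing exponentiations join at $u$, the doubling pointers on the two sides are at different scales: a node $v$ whose pointer already sits at $u$ cannot double in the next round, because $u$ itself has only just started and points to its immediate neighbors. So ``each merge costs $O(1)$ rounds'' is not justified. Your bound of $O(\log n)$ merges per node ``because the path can only double $O(\log n)$ times'' is also not correct: a merge adds one former branching node plus whatever lies beyond it, which may be a single edge. The paper handles this with a \emph{midpoint} rule rather than an explicit merge: a node that was an endpoint in the previous round but now has virtual degree $2$ continues to act as an endpoint until both of its current virtual neighbors are themselves midpoints or endpoints, and only then turns into an internal (exponentiating) node. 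With this rule the analysis becomes a direct count---a constant fraction of the remaining virtual nodes disappears every round---so no potential function is needed. Your potential is in any case mis-specified: each node is compressed in exactly one phase, so the ``sum over compress phases $v$ participates in'' is a single term, and the actual difficulty is bounding the work inside that one dynamically growing path, which is precisely what the midpoint mechanism controls.
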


\begin{proof}
	The algorithm and the proof for constant-degree trees is presented in \Cref{subsection:treerooting}, where the algorithm is clearly deterministic. The extension to arbitrary-degree trees is the following. If all edges of a node fit into one machine, nothing changes from the constant-degree case. Otherwise, we can host a (local) broadcast tree (similar to \Cref{sec:broadcasttree}) for every node $v$ with degree $\omega(n^\delta)$. If some node $u$ wants to communicate with $v$, the communication happens with the machine storing edge $\{u,v\}$.
	
	Component-stability and the compatibility with forests is simple to argue about. The only communication between disconnected components happens in the beginning of post-processing, when all components wait until the root has been found in all components. Clearly, this does not affect the resulting rooting in each component. It does however affect the runtime, since smaller components may have to wait until larger components have found the root. Hence, the runtime becomes $O(\log |T_{\max}|)$.

\end{proof}

\subsection{Path exponentiation} \label{sec:pathexp}

Let us introduce the technique of path exponentiation, which is a logarithmic time technique to compress a path such that upon termination, the endpoints share a virtual edge (defined next). The technique is memory efficient in the sense that in addition to the input edges (i.e., edges incident to a node in the input graph), all nodes in a path keep at most two virtual edges in memory. Consider a path $P$ with \textit{endpoints} $s,t$ and \textit{internal nodes} in $P \setminus \{s,t\}$. Leaf node are considered to be endpoints, and degree-2 nodes are consider to be internal nodes. Nodes in $P$ always keep their input edges in memory. Path exponentiation is initialized by duplicating all edges in $P$ and calling this new path the \textit{virtual graph}. Nodes connected by a virtual edge are called virtual neighbors. A new virtual edge $\{v,w\}$ can be created by node $u$ if there previously existed virtual edges $\{u,v\}$ and $\{u,w\}$. In practice, creating a virtual edge $\{v,w\}$ entails node $u$ informing $v$ the ID of $w$ and $w$ the ID of $v$, i.e., node $u$ \textit{connects} nodes $v$ and $w$. Path exponentiation is executed only on this virtual graph. So henceforth, when talking about neighbors and edges, we refer to virtual neighbors and edges, unless specified otherwise.

In each (path) exponentiation step, endpoints and internal nodes are handled separately. During exponentiation, an internal node $u$ has exactly two neighbors and it can be one of three types: (1) neither neighbor is an endpoint (2) one neighbor is an endpoint and one is an internal node (3) both neighbors are endpoints. In each exponentiation step, an internal node $u$ does the following. 

\begin{itemize}
    \item Node $u$ communicates with its neighbors to learn if it is of type 1, 2 or 3
    \item For each node type:
    \begin{enumerate}
        \item Connects its neighbors $v$ and $w$ with an edge and removes edges $\{u,v\}$ and $\{u,w\}$.
        \item Connects its internal neighbor $v$ to its endpoint neighbor $s$ (or $t$) with an edge. It removes the edge $\{u,v\}$, but keeps the edge $\{u,s\}$ (or $\{u,t\}$) in memory.
        \item Connects its endpoint neighbors $s$ and $t$ with an edge and keeps edges $\{u,s\}$ and $\{u,t\}$ in memory.
    \end{enumerate}
\end{itemize}

If we perform the former exponentiation steps as is, both endpoints will aggregate one edge for each node in the path, which may break local and global memory restrictions. To resolve this issue, we implicitly assume the following scheme. If a node is connected to an endpoint, it keeps track if is the furthest away from said endpoint in the input graph, among all nodes that are connected to the endpoint. Immediately after initializing path exponentiation, the furthest away node is the neighbor of the endpoint in the input graph. During exponentiation:

\begin{itemize}
    \item If a node is the furthest away from an endpoint, when creating a new edge between an endpoint and an internal node, it informs both nodes of the new edge. 
    
    \item If a node is not the furthest away from an endpoint, when creating a new edge between an endpoint and an internal node, it only informs the internal node of the new edge. 
    
    \item If a node is an endpoint, upon receiving a new edge, it drops the old one. 
\end{itemize}

This scheme results in endpoints $s$ and $t$ effectively doing nothing during path exponentiation, except keeping track of the latest edge connecting them to an internal node. Eventually, exponentiation terminates when $s$ and $t$ get connected and all internal nodes have two edges, one for each endpoint. As the shortest distance between $s$ and $t$ in the virtual graph decreases by at least a factor of $3/2$ in each step, path exponentiation terminates in $O(\log n)$ time. Due to the aforementioned memory saving scheme, all nodes in $P$ keep at most two edges in memory, resulting in $O(m)$ global memory use.

\begin{figure}
    \centering
    \includegraphics[width=0.6\textwidth]{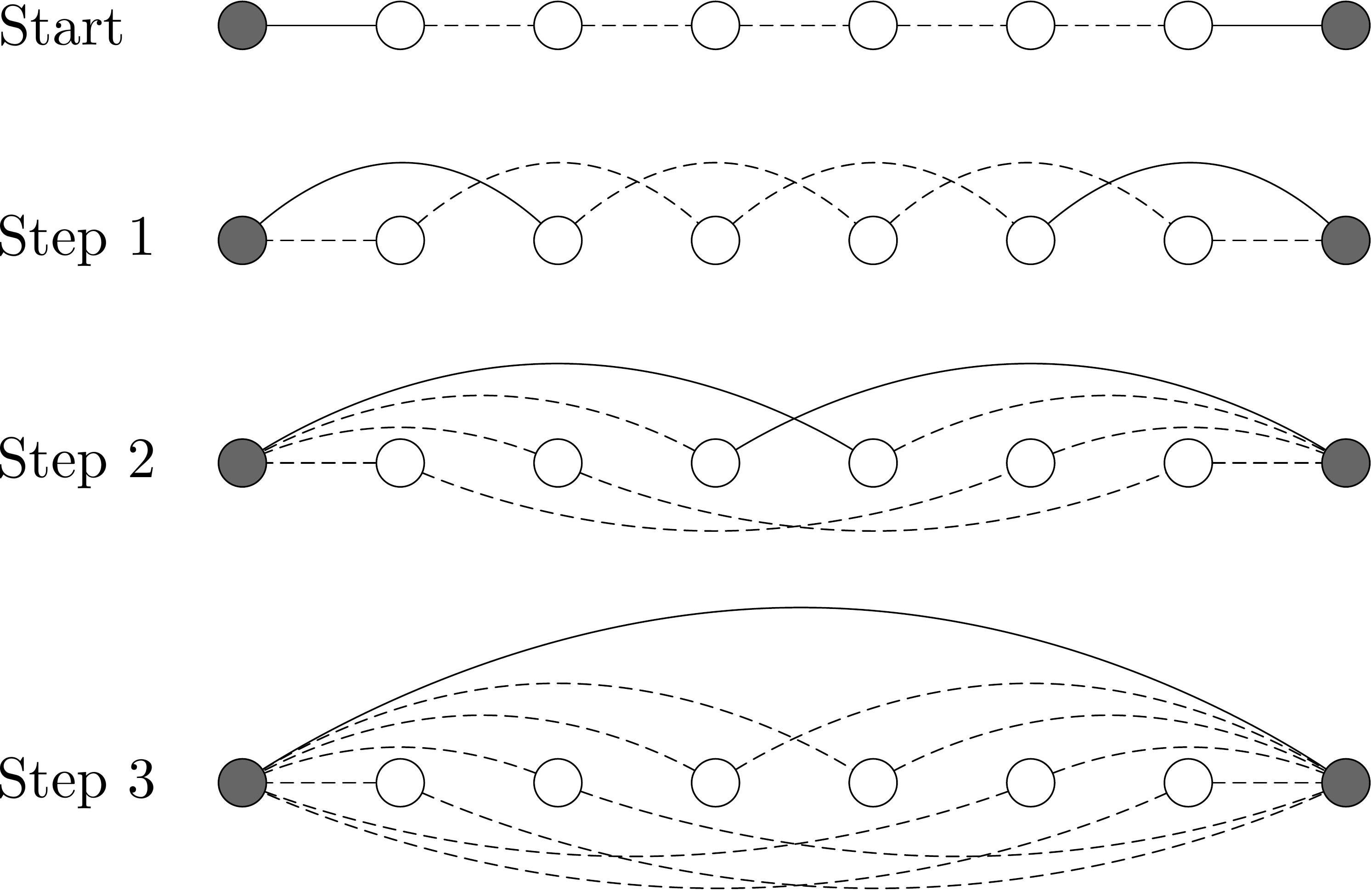}
    \caption{Path exponentiation on a path with 8 nodes. All edges are virtual, and solid edges emphasize the edges that endpoints keep track of. Exponentiation terminates in 3 steps, after which, internal nodes are connected to both endpoints and endpoints are connected by an edge.}
    \label{fig:effexp}
\end{figure}

\subsection{Tree Rooting Algorithm} \label{subsection:treerooting}

The algorithm is split into two parts. First we find the root node (\Cref{subsubsec:findroot}), during which we set unoriented paths aside. Then, during post-processing (\Cref{subsubsec:orientingpaths}), we orient said paths in parallel, resulting in a total runtime of $O(\log n)$, local memory use $O(n^\delta)$, and global memory use $O(m)$. We emphasize that we execute the following algorithms on the virtual graph, and not on the input graph. Initially, the virtual graph is an identical copy of the input graph. Also, when talking about neighbors and edges, we refer to virtual neighbors and edges, unless specified otherwise.

\subsubsection{Finding the Root} \label{subsubsec:findroot}

The high level idea is quite simple: perform path exponentiation in all current paths (note that now, endpoints are either leaf nodes or nodes of degree $\geq 3$), and when an endpoint of a path is a leaf that is connected to the other endpoint, we set the path aside (this will become apparent later). For this to work, one major issue must be addressed. An endpoint of degree $\geq 3$ can turn into a degree-$2$ node, extending the current path. The difficulty in this scenario stems from the fact that some nodes are in the middle of path exponentiation and some have not yet started. We resolve this issue by defining nodes that were endpoints of degree $\geq 3$ in the previous phase, but are nodes of degree $2$ in the current phase, as \textit{midpoints}.

\begin{enumerate}
    \item In phase $i$, each node $u$ in a path first identifies if it is an endpoint, a midpoint, or an internal node.
    
    \begin{itemize}
        \item If $u$ is a leaf node that is connected to the other endpoint, we set the path containing $u$ aside$^*$. Note that this also applies to paths of length 1.
        
        \item If $u$ is an endpoint that is not connected to the other endpoint, it does nothing except act as endpoint for the corresponding internal nodes.
        
        \item If $u$ is a midpoint such that both of its neighbors are other midpoints or endpoints, it transforms into an internal node and acts as such henceforth. 
        
        \item If $u$ is a midpoint such that at least one of its neighbors is an internal node, it does nothing except act as endpoint node for the corresponding internal nodes. 
        
        \item If $u$ is an internal node (or a midpoint that has turned into an internal node), it performs path exponentiation.
    \end{itemize}
    
    $^*$Setting path $P$ aside entails leaf node $s$ informing the other endpoint $t$ that the orientation is going to be from $s$ to $t$, so that the algorithm can proceed. The internal nodes of the path do not need to be informed that they are set aside, since they will not do anything for the remainder of the algorithm. Also, instead of $P$, we are actually setting aside $P \setminus t$, since $t$ may be of high degree and has to remain in the graph. Observe that this means that both endpoints of the path we are setting aside are leaves that know the orientation of the path. The edges of the paths remain unoriented until the root is found, after which these paths are oriented in parallel during the post-processing in \Cref{subsubsec:orientingpaths}. Note that if both endpoints of a path are leaves, the algorithm terminates and the higher ID node is picked to be the root.
\end{enumerate}

\paragraph*{Correctness.} Since we only orient paths connected to at least one leaf node, we end up with a valid orientation and a unique root.

\paragraph*{Runtime.} Consider endpoints $s$ and $t$ of some path during some phase. The aim of the algorithm is essentially to construct edge $\{s,t\}$. Now consider a current shortest (virtual) path $P_v$ between $s$ and $t$.  Observe that the nodes responsible for eventually creating edge $\{s,t\}$ constitute $P_v$. Hence, all other nodes are redundant and can be thought of as removed.

In order to analyze the number of nodes that are removed in a phase, we want to first count the number of internal nodes in paths such as $P_v$. Since a midpoint is always incident to an internal node (otherwise it would transform into an internal node), at least $1/3$ of all nodes in $P_v$ are internal nodes. This is evident from the ``worst case'' where two consecutive midpoints are followed by one internal node.

Since all internal nodes in $P_v$ perform path exponentiation, the number of internal nodes in $P_v$ drops by a factor of at least $3/2$ in one phase. Observe that in addition to removing at least $1/3$ of the internal nodes in all paths such as $P_v$, we also remove all leaf nodes. 

Since the average degree of a node in a tree is $<2$, the number of leaves in a tree is larger than the number of nodes of degree $\geq 3$. Hence, in each phase, we remove at least $1/3 \cdot 2/3 = 2/9$ of all of the nodes in the graph, and the algorithm finds the root after $O(\log n)$ phases.

\paragraph*{Memory.} The only memory usage stems from path exponentiation, where in each path, in addition to the input edges (i.e., edges incident to a node in the input graph), all nodes keep at most two virtual edges in memory. Observe that endpoints can partake in multiple path exponentiations. However, since endpoints keep track of only one virtual edge (per path exponentiation), it is easy to see that an endpoint can never have more virtual edges than input edges. Hence, local memory $O(n^\delta)$ and global memory $O(m)$ are not violated.

\subsubsection{Post-processing} \label{subsubsec:orientingpaths}

Before initializing this part, we first we have to ensure that the root finding has terminated, which can be done using the broadcast tree (\Cref{sec:broadcasttree}) in constant time. Then, we can start orienting the paths that were set aside by the root finding algorithm. Recall that they are paths where both endpoints are leaves that know the orientation. We want to orient all edges in these (possibly very long) paths in parallel.

\begin{enumerate}
    \item Consider performing path exponentiation on a path $P$ such that when an edge is created between an endpoint and an internal node, it is oriented according to the orientation information at the endpoint. Upon termination, all nodes orient their input edges according to the orientation of their virtual edges. Note that this requires nodes to keep track which virtual edge corresponds to which edge in the input graph.
\end{enumerate}

\paragraph*{Correctness, Runtime, Memory.} Observe that an oriented edge is created only by nodes that already have an oriented edge (are of type 2 or 3 in \Cref{sec:pathexp}) and hence, the orientation will be correct. As we only perform path exponentiation, the runtime is $O(\log n)$. Clearly, this only has a constant overhead compared to path exponentiation. Hence, local memory $O(n^\delta)$ and global memory $O(m)$ are not violated.

\section{The High-Complexity Regime}\label{sec:high}

In this section, we will prove that all solvable \lcl problems on trees, i.e., \emph{all \lcl problems that have a correct solution on every considered tree}, can be solved deterministically in $O(\log n)$ time in the low-space \mpc model using $O(m)$ words of global memory. Our proof is constructive: we explicitly provide, for any solvable \lcl, an algorithm that has a runtime of $O(\log n)$. In fact, our construction can be used to find an $O(\log n)$ time algorithm $\fA$ \emph{even for unsolvable \lcl{}s}, with the guarantee that on any instance that admits a correct solution the given output will be correct. Note that we can easily extend $\fA$ to output ``no solution'' for instances that do not admit a correct solution, by checking, after executing $\fA$, around each node whether the output satisfies the \lcl constraints. If for some node the \lcl constraints are not satisfied, it can broadcast ``no solution'' to all nodes using the broadcast tree of \Cref{sec:broadcasttree}.The following theorem follows by \Cref{cor:fewernodes,lem:runtimephaseone,lem:runtimephasetwo,lem:gi,lem:wellandcorrect}, and the discussion in \Cref{sec:highimple}.

\begin{theorem} \label{thm:high}
	Let $\Pi$ be an \lcl problem in constant-degree trees that admits a correct solution. There is an $O(\log n)$ time deterministic low-space \mpc algorithm for $\Pi$ that uses $O(m)$ words of global memory.
\end{theorem}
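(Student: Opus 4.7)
The plan is to follow the three-phase blueprint outlined in Section 1.4. In Phase 1 we invoke \Cref{lemma:rooting} to root the input tree in $O(\log n)$ deterministic rounds; after this phase, each non-root node has a unique parent and for every node $v$ we can speak of the subtree $T_v$ hanging from $v$. Using the reduction to node-edge-checkable form (\Cref{def:nodeedge}), a partial solution on $T_v$ is fully characterised at the boundary by the single output label placed on the half-edge from $v$ toward its parent, so I define $A(v) \subseteq \sout$ as the set of labels $\sigma$ such that some valid labeling of $T_v$ assigns $\sigma$ to the parent-directed half-edge of $v$. Given $A(u)$ for every child $u$ of $v$, the set $A(v)$ is computable in $O(1)$ local time by ranging over candidate labels $\sigma$, compatible child labels via $\edco$ and the $A(u)$'s, and then checking membership in $\noco_{\deg(v)}$ and in the input-label filter $\gee(\ginn(\cdot))$. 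Leaf values $A(\ell)$ are read off directly from $\gee$.

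Phase 2 propagates $A(\cdot)$ up to the root in $O(\log n)$ MPC rounds. Phase 3 then picks an arbitrary label in $A(r)$ (nonempty by solvability of $\Pi$) and performs a symmetric $O(\log n)$-round top-down pass in which each node, knowing the label chosen on its parent half-edge, selects a consistent extension to its children using the local data cached during Phase 2. Correctness follows directly from the definition of $A$: every downward extension is guaranteed to succeed in the whole subtree, and the node and edge constraints are satisfied at each step. Wrapping the whole procedure in a final local check and, if necessary, broadcasting ``no solution'' through the broadcast tree of \Cref{sec:broadcasttree} handles unsolvable instances.

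The whole difficulty sits in Phase 2, since the rooted tree can have depth $\omega(\log n)$. To beat this, I would apply path exponentiation (\Cref{sec:pathexp}) along maximal chains of degree-$2$ nodes, storing on each virtual edge $\{u,w\}$ a compatibility function $f_{u \to w}\colon \sout \to 2^{\sout}$ that records, for each possible label at $u$'s parent half-edge, the set of labels that the contracted chain permits at $w$'s parent half-edge. Composition of two such functions is well-defined and of constant size since $|\sout|$ is constant, so the memory-saving bookkeeping of \Cref{sec:pathexp} carries over directly to this richer payload. The genuinely new obstacle, and the main technical issue, is that different parts of the tree become ready for compression at different times: a node of degree $\geq 3$ whose children are not yet summarised must act as a blocking endpoint, and only after all of its children have been resolved does it turn into a degree-$2$ internal node and merge into its incident exponentiation processes. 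The algorithm must therefore interleave starting, extending, and merging many concurrent exponentiation processes across the tree.

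To bound the global runtime despite this irregular schedule, the heart of the argument is a carefully designed potential function $\Phi$ that combines (i) the logarithm of the longest unfinished virtual path incident to each currently active endpoint and (ii) a discrete term for high-degree nodes with still-unsummarised children, weighted in the spirit of the ``$2/9$-fraction removed per phase'' analysis of \Cref{subsubsec:findroot}. The main step to verify is that in each MPC round either a path-exponentiation substep roughly halves every longest unfinished virtual path, or a constant fraction of blocking endpoints are released to become internal path nodes, so that $\Phi$ drops by a constant factor either way and reaches $0$ after $O(\log n)$ rounds. Memory use then follows from the same per-node bounds as in \Cref{lemma:rooting}: each node stores only $O(1)$ virtual edges and each of constant payload, so local memory stays $O(n^\delta)$ and global memory stays $O(m)$. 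This interleaved-exponentiation analysis is the technically demanding piece flagged by the authors and is where I expect most of the proof effort to go.
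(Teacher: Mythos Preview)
Your three-phase outline matches the paper's, and the definition of $A(v)$ is exactly the completability information the paper tracks. The substantive gap is in your memory accounting, and it is not a detail: the claim that ``each node stores only $O(1)$ virtual edges'' is incompatible with your Phase~3. After path exponentiation on a chain with endpoints $s,t$, every internal node $v$ ends up with virtual edges only to $s$ and $t$, carrying $f_{s\to v}$ and $f_{v\to t}$. Given fixed endpoint labels $(\ell_s,\ell_t)$, each $v$ can indeed pick some $\ell_v$ with $(\ell_s,\ell_v)\in f_{s\to v}$ and $(\ell_v,\ell_t)\in f_{v\to t}$, but adjacent internal nodes choosing \emph{independently} need not be mutually consistent (already for $3$-colouring a path of length~$3$ they can collide). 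To make Phase~3 work you must be able to \emph{split} each long virtual edge back through the node that created it, which means remembering, for every virtual edge ever produced, its creator and the two shorter edges it came from. That is $\Theta(\log n)$ stored pointers per node in the worst case, i.e.\ $\Theta(n\log n)$ global memory. The paper confronts exactly this issue: before running the pointer process it performs $\Theta(\log\log n)$ rounds of an MIS-based rake-and-compress to shrink the instance to a compatibility tree on $O(n/\log n)$ nodes (\Cref{sec:trafothere}, \Cref{cor:fewernodes}), so that storing all $O(\log n)$ pointers per remaining node still fits in $O(n)$ words. Your write-up omits this reduction entirely, and without it the stated $O(m)$ global-memory bound fails.

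A secondary point: the potential you sketch (sum of logs of longest unfinished virtual paths plus a count of blocking high-degree nodes, with an either/or per-round drop) is unlikely to close. When a blocking node is released it \emph{merges} two adjacent exponentiation processes, so the ``longest path'' term can jump up in the same round the ``blocking'' term drops; the two effects do not decouple into an either/or. The paper's potential is structurally different: it sets $\Phi_i=\Phi'_i+\Phi''_i$ where $\Phi'_i$ is the number of leaves of the current active tree and $\Phi''_i=\sum_j k_{i,j}$ is a sum of recursively defined block weights bounding how many nodes of each degree-$2$ block any leaf-root pointer chain can hit (\Cref{lem:leafroot}). The decrease is proved only over \emph{two} rounds, $\Phi_{i+1}\le \tfrac{7}{8}\,\Phi_{i-1}$ (\Cref{lem:calc}), via a case split on the ratio $\Phi''_{i-1}/\Phi'_{i-1}$; a one-round constant-factor drop does not hold. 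You correctly flag this as the hard part, but the specific shape you propose would need to change substantially to go through.
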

    
\subsection{High-level Overview}\label{sec:overviewhcr}
Consider an arbitrary \lcl problem $\Pi$ that has a correct solution on all considered instances.
Throughout this section, we will assume that the \lcl is given as a node-edge-checkable \lcl (\Cref{def:nodeedge}), which we can do w.l.o.g., as observed in \Cref{sec:prelim}. In the following, we will give a slightly simplified view of the algorithm $\fA$ we will use to solve $\Pi$ in $O(\log n)$ time. On a high level, algorithm $\fA$ proceeds in $3$ phases. The first phase consists in rooting the input tree, by using the algorithm described in \Cref{sec:rooting}.

In the second phase, which we will refer to as the \emph{leaves-to-root phase}, roughly speaking, the goal is to compute, for a substantial number of edges $e = (u, v)$, the set of output labels that can be output at half-edge $(v, e)$ such that the label choice can be extended to a (locally) correct solution in the subtree hanging from $v$ via $e$.
This is done in an iterative manner, proceeding from the leaves towards the root. When, at last, the root has computed this set of output labels for each incident half-edge, it can, on each such half-edge, select an output label from the computed set such that the obtained node configuration is contained in the node constraint of $\Pi$ and the input-output constraints of $\Pi$ (given by the function $\gee$ in the definition of $\Pi$) are satisfied. Such a selection must exist due to the fact that $\Pi$ has a correct solution on the considered instance.

The last phase, which we will refer to as the \emph{root-to-leaves phase}, consists in completing the solution from the root downwards, by iteratively propagating the selected solution further towards the leaves.
With the same argumentation as at the root, certain nodes $v$ can select an output label at the half-edge leading to its parent and output labels from the sets computed on its incident half-edge leading to its children such that the obtained node configuration is contained in the node constraint of $\Pi$, the obtained edge configuration on the edge from $v$ to its parent is contained in the edge constraint of $\Pi$, and the input-output constraints of $\Pi$ are satisfied.

Unfortunately, the depth of the input tree can be $\omega(\log n)$, which prevents us from performing the outlined phases in a sequential manner (if we want to obtain an $O(\log n)$ time algorithm). In order to mitigate this issue, we will not only process the leaves of the remaining unprocessed tree in each iteration, but also the nodes of degree $2$, inspired by the rake-and-compress decomposition which guarantees that after $O(\log n)$ iterations of removing all degree-$1$ and degree-$2$ nodes all nodes have been removed. The advantage of degree-$2$ nodes over higher-degree nodes w.r.t.\ storing completability information (as in the above outline) is that they form paths, which by definition only have two endpoints; the idea, when processing such a path, is to simply store in the two endpoints the information for which pairs of labels at the two half-edges at the ends of the path there exists a correct completion of the solution inside the path. This allows to naturally add processing degree-$2$ nodes to the leaves-to-root phase of the algorithm outline provided above, while for the root-to-leaves phase, the information stored at the endpoints $s, t$ of a path (where $s$ is an ancestor of $t$) essentially allows us to start extending the current partial solution on the path itself (and thereafter on the subtrees hanging from nodes on the path) one step after the output labels at $s$ and $t$ are selected.

However, in the leaves-to-root phase, even when using graph exponentiation, processing a path of degree-$2$ nodes of length $L$ takes $\Omega(\log L)$ time, whereas the $O(\log n)$ time guarantee of the rake-and-compress technique crucially relies on the fact that each iteration can be performed in constant time.
Hence, essentially, we will only perform one step of graph exponentiation on paths in each iteration.
Here, a new obstacle arises: before the graph exponentiation is finished, new nodes (that just became degree-2 nodes due to all except one of their remaining children being conclusively processed in the most recent iteration) might join the path.
Nevertheless, we will show that this process still terminates in logarithmic time by designing a fine-tuned potential function that is inspired by the idea of counting how many nodes from certain groups of degree-$2$ nodes are contained in any fixed ``pointer chain'' from some leaf to the root. This pointer chain way of thinking is similar to the path exponentiation technique introduced in \Cref{sec:pathexp}, which solves a similar problem. 

Another issue is that we have to be able to store the completability information that we compute in the leaves-to-root phase until we use it (again) in the root-to-leaves phase.
As the number of new edges/pointers introduced by the graph exponentiation on paths can be up to logarithmic in $n$ per node (even on average), a naive implementation will result in a logarithmic overhead in the used memory.
In order to remedy this problem, we split the leaves-to-root phase (and, as a result thereof, also the root-to-leaves phase) into two subphases.
While the second subphase proceeds as explained above, the first subphase differs by processing the degree-$2$ nodes, i.e., paths, in a way that guarantees that the number of new edges introduced by the graph exponentiation (which we should rather call pointer forwarding at this point) on each path in each iteration is only a constant fraction of the length of the respective path.
This is achieved by finding, in each iteration, an MIS on each path, letting only MIS nodes forward pointers, and removing the MIS nodes afterwards.
The first subphase consists of $\Theta(\log \log n)$ iterations; we will show that this ensures that the overhead introduced by the MIS computations does not increase the overall asymptotic runtime, and that the number of remaining nodes is in $O(n/\log n)$.
The latter property ensures that the memory overhead of $O(\log n)$ edges per node introduced in the second subphase does not exceed the desired global memory of $O(m)$ words.
Lastly, in \Cref{sec:highimple} we take care of the local memory restrictions.

    
\subsection{The Algorithm}\label{sec:highalgo}
In this section, we provide the desired algorithm that can be implemented in $O(\log n)$ time in the low-space \mpc model and prove its correctness. 
The details about the precise implementation in the \mpc model are deferred to \Cref{sec:highimple}.

Let $\Pi = (\sinn, \sout, \noco, \edco, \gee)$ be the considered \lcl, and let $G$ denote the input tree.
Before describing the algorithm, we need to introduce the new notion of a \emph{compatibility tree}.
In a sense, a compatibility tree can be thought of as a realization of an \lcl on a given input tree where the constraints that two labels on an edge or $\deg(v)$ labels around a node $v$ have to satisfy (as well as which output labels can be used at which half-edge, which the \lcl encodes via input labels) are already encoded on the edge and around the node.

\begin{definition}\label{def:comptree}
    A \emph{compatibility tree} is a rooted tree $T$ (without input labels)
    where each edge $(u,v)$ is labeled with a subset $S_{uv}$ of $\sout \times \sout$, and each node $w$ is labeled with a tuple $S_w$ consisting of tuples of the form $(s_w^e)_{e \in \incid(w)}$ where $\incid(w)$ denotes the set of edges incident to $w$, and $s_w^e \in \sout$ for each $e \in \incid(w)$.
    A \emph{correct solution} for a compatibility tree is an assignment $\gout \colon H(T) \to \sout$ such that
    \begin{enumerate}
        \item for each edge $e = (u,v)$, we have $(\gout((u,e)), \gout((v,e))) \in S_{uv}$, and
        \item for each node $w$, there exists a tuple $(s_w^e)_{e \in \incid(w)} \in S_w$ such that, for each edge $e \in \incid(w)$, we have $\gout((w,e)) = s_w^e$.
    \end{enumerate}
\end{definition}

Now, we are set to describe the desired algorithm.
The algorithm starts by rooting $G$, using the method described in \Cref{sec:rooting}.
We denote the root by $\rooot$.
Then, we transform $G$ (which from now on will denote the rooted version of the input tree) into a compatibility tree $G'$ by iteratively removing nodes of degree $1$ and $2$ (while suitably updating the edge set) and assigning a subset $S_{uv}$, resp.\ $S_w$, to each remaining edge $(u,v)$, resp.\ remaining node $w$ (see \Cref{sec:trafothere}).
Next, we find a correct solution for $G'$ (see \Cref{sec:highmainalgo}), and finally we transform the obtained solution into a correct solution for \lcl $\Pi$ on $G$ (see again \Cref{sec:trafothere}).

\subsubsection{Reducing the \lcl to a Compatibility Tree with Fewer Nodes}\label{sec:trafothere}
In this section, we show how to transform the rooted tree $G$ into a compatibility tree with $O(n/\log n)$ nodes, where $n$ is the number of nodes of $G$, and how to transform any correct solution for $G'$ into a correct solution for the given \lcl $\Pi$ on $G$.
In other words, we show how to reduce the problem of solving $\Pi$ on $G$ to the problem of finding a correct solution for a compatibility tree with fewer nodes.
The idea behind this approach is that the new, smaller instance can be solved in logarithmic time without exceeding the desired global memory of $O(m)$ words.
To obtain a good overall runtime, we will also show in this section how to perform the reduction (and recover the solution) in $O(\log \log n \cdot \log^* N) = O(\log \log n \cdot \log^* n)$ rounds, modulo some implementation details left to \Cref{sec:highimple}.
Recall that $N=\text{poly}(n)$ denotes the size of our ID space.

We start by describing how to obtain $G'$ from $G$.
To this end, we will first transform $G$ into a compatibility tree $G_0$, and then iteratively derive a sequence $G_1, G_2, \dots, G_t$ of compatibility trees from $G_0$, where $t \in O(\log \log n)$ is a parameter we will choose later.

We define $G_0$ in the natural way, by essentially encoding the given \lcl $\Pi$.
The nodes and edges of $G_0$ are precisely the same as in $G$.
For any edge $(u,v)$ with input labels $\iota_u$ and $\iota_v$ at the two half-edges belonging to $(u,v)$, we set $S_{uv}$ to be the set of all pairs $(\ell, \ell') \in \sout \times \sout$ such that the multiset $\{ \ell, \ell' \}$ is contained in the edge constraint $\edco$ of $\Pi$, and we have $\ell \in \gee(\iota_u)$ and $\ell' \in \gee(\iota_v)$.
For any node $w$, we set $S_w$ to be the set of all tuples $(\ell^e)_{e \in \incid(w)}$ such that the multiset $\{ \ell^e \mid e \in \incid(w) \}$ is contained in $\noco_{\deg(w)}$.
Note that the asymmetric nature of this definition comes from the fact that we only need to require compatibility with the function $\gee$ once, in the constraints for nodes or (as we chose) for edges.
From the definition of $G_0$, we obtain directly the following observation.

\begin{observation}\label{obs:equiv}
    A half-edge labeling of $G$ is a correct solution for \lcl $\Pi$ if and only if it is a correct solution for the compatibility tree $G_0$ (under the natural isomorphism between $G$ and the graph underlying $G_0$).
\end{observation}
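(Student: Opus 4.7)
The plan is to prove both directions of the equivalence by directly unpacking the relevant definitions, since $G_0$ was constructed precisely to encode the three conditions from \Cref{def:nodeedge}. The key observation is that the input-output constraint on half-edges (condition 3 of \Cref{def:nodeedge}) has been folded into the edge sets $S_{uv}$ of $G_0$, while the node constraint is carried over verbatim to the sets $S_w$. Because both compatibility-tree conditions in \Cref{def:comptree} are strict translations of the corresponding constraints of $\Pi$, the argument reduces to checking that the definitions of $S_{uv}$ and $S_w$ match.

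For the forward direction, I would assume that $\gout$ is a correct solution for $\Pi$ and verify the two conditions of \Cref{def:comptree}. For an edge $e = (u,v)$ with outputs $\ell = \gout((u,e))$ and $\ell' = \gout((v,e))$, I check the three defining clauses of $S_{uv}$ one by one: the multiset $\{\ell, \ell'\}$ lies in $\edco$ by condition 2 of \Cref{def:nodeedge}, while $\ell \in \gee(\iota_u)$ and $\ell' \in \gee(\iota_v)$ follow from condition 3 applied to the half-edges $(u,e)$ and $(v,e)$; hence $(\ell, \ell') \in S_{uv}$. For a node $w$, the tuple $(\gout((w,e)))_{e \in \incid(w)}$ itself serves as the witness required in the second compatibility-tree condition, because its underlying multiset lies in $\noco_{\deg(w)}$ by condition 1 of \Cref{def:nodeedge}.

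For the reverse direction, I would assume that $\gout$ is a correct solution for $G_0$ and verify the three conditions of \Cref{def:nodeedge}. The node condition of $G_0$ directly provides a tuple in $S_w$ matching the outputs at $w$, whose multiset is by definition a member of $\noco_{\deg(w)}$, giving condition 1. For each edge $e = (u,v)$, the fact that $(\gout((u,e)), \gout((v,e))) \in S_{uv}$ simultaneously yields condition 2 (the multiset of the two outputs is in $\edco$) and condition 3 at both half-edges $(u,e)$ and $(v,e)$ (the outputs lie in $\gee(\iota_u)$ and $\gee(\iota_v)$ respectively). Since every half-edge of $G$ belongs to some edge, condition 3 is covered at every half-edge.

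The main ``obstacle'' here is really just notational bookkeeping: one has to be careful that the asymmetric choice made in the construction of $G_0$ (placing the $\gee$-constraint on edges rather than duplicating it on both nodes and edges) still covers every half-edge exactly once. No technical machinery is required beyond a side-by-side comparison of \Cref{def:nodeedge} and \Cref{def:comptree}, which is exactly why the statement is phrased as an observation rather than as a lemma.
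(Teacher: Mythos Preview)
Your proposal is correct and matches the paper's approach: the paper does not give a written proof at all, merely remarking that the observation follows directly from the definition of $G_0$, and your argument is exactly the routine definition-chasing that this remark alludes to.
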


We now describe how to obtain $G_i$ from $G_{i-1}$, for any $1 \leq i \leq t$.
We transform $G_{i-1}$ into $G_i$ in two steps.

In the first step, we start by finding an MIS $Z$ on the subgraph of $G_{i-1}$ induced by all nodes of degree precisely $2$.
Then, for each node $v \in Z$ with incident edges $e = (u,v)$ and $e' = (v,w)$, we remove $e$ and $e'$ from $G_{i-1}$ and replace them by a new edge $e'' = (u,w)$.
Furthermore, for the new edge, we set $S_{uw}$ to be the set of all label pairs $(\ell, \ell') \in \sout \times \sout$ such that there exist labels $\ell_1, \ell_2$ satisfying $(\ell, \ell_1) \in S_{uv}$, $(\ell_2, \ell') \in S_{vw}$, $s^e_v = \ell_1$, and $s^{e'}_v = \ell_2$.
From the perspective of the nodes $u$ and $w$, the new edge $e''$ replaces the old edges $e$ and $e'$, respectively, in the indexing hidden in the definition of the tuples $S_u$ and $S_w$.
Let us call the obtained graph $G'_{i-1}$.

In the second step, executed after the first step has finished, each edge $e^* = (x,y)$ such that $x$ is a leaf is removed together with $x$.
Moreover, for such a removed edge, we set $S_y$ to be the set of all tuples $(\ell^e_y)_{e \in \incid(y)}$ such that there exist labels $\ell, \ell'$ such that (in $G'_{i-1}$) we have $(\ell, \ell') \in S_{xy}$ and there exists some tuple $(s^e_y)_{e \in \incid(y)}$ with $s^e_y = \ell^e_y$ (for all $e \neq e^*$) and $s^{e^*}_y = \ell'$.
(Here the first occurrence of $\incid(y)$ denotes the set of edges incident to $y$ after removing $e^*$, while the second occurrence denotes the set before removing $e^*$.)
If a node $y$ of $G'_{i-1}$ has multiple children that are leaves, then we can think of removing the respective edges one by one, each time updating $S_y$.
However, for the actual computation, node $y$ can perform all of these steps at once.

We obtain the following lemma.

\begin{lemma}\label{lem:gi}
    Let $1 \leq i \leq t$.
    If there exists a correct solution for $G_{i-1}$, then there also exists a correct solution for $G_i$.
    Moreover, given any correct solution for $G_i$, we can transform it into a correct solution for $G_{i-1}$ in a constant number of rounds in the low-space MPC model using $O(m)$ words of global memory.
    Finally, given $G_{i-1}$, we can compute $G_i$ in $O(\log^* N) = O(\log^* n)$ rounds in the described setting.
\end{lemma}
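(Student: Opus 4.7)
The plan is to verify each of the three assertions of \Cref{lem:gi} by unpacking the definitions that produced $G_i$ from $G_{i-1}$. For the first claim, given a correct solution $\gout$ for $G_{i-1}$, I would simply restrict $\gout$ to the half-edges of $G_i$. For a contracted degree-$2$ node $v \in Z$ with original edges $e=(u,v)$, $e'=(v,w)$, the labels $\ell_u, \ell_1, \ell_2, \ell_w$ assigned by $\gout$ satisfy $(\ell_u,\ell_1) \in S_{uv}$, $(\ell_2,\ell_w) \in S_{vw}$, and $s^e_v=\ell_1$, $s^{e'}_v=\ell_2$ for some tuple in $S_v$, which is exactly the witness certifying $(\ell_u,\ell_w) \in S_{uw}$ for the new edge of $G_i$. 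The leaf-removal step is analogous: the tuple at $y$ surviving the restriction lies in the updated $S_y$ by construction. Hence the restriction is a correct solution for $G_i$.

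For the second claim, I would reverse the two steps in the opposite order to which they were applied. For each removed leaf $x$ with parent $y$, the tuple that the $G_i$-solution chose at $y$ belongs to the updated $S_y$, which by definition is witnessed by some $(\ell,\ell') \in S_{xy}$ together with an old tuple $(s^e_y)_e \in S_y$ extending it; the parent $y$ selects such a witness and transmits $\ell'$ to $x$, which then sets its own half-edge to $\ell$. For each contracted $v \in Z$, the label pair $(\ell_u,\ell_w)$ now appearing on the edge $(u,w)$ lies in $S_{uw}$, which by definition is witnessed by some $\ell_1,\ell_2$ compatible with $S_{uv}$, $S_{vw}$, and some tuple in the old $S_v$; node $v$ queries its two original neighbors, picks such witnesses, and labels its four original half-edges accordingly. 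Each reversal involves only communication with immediate neighbors in $G_{i-1}$ and only constant-size payloads, so it runs in $O(1)$ rounds and the extra global memory is $O(m)$.

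For the third claim, the only nontrivial ingredient is finding the MIS $Z$ on the subgraph of $G_{i-1}$ induced by its degree-$2$ nodes. Because $G_{i-1}$ is a tree (both contracting degree-$2$ nodes and removing leaves preserve tree-ness), this induced subgraph is a disjoint union of paths, on which a deterministic \local MIS computation via Cole--Vishkin-style color reduction from the IDs runs in $O(\log^* N)$ rounds. In the low-space \mpc setting, each such \local round is simulated with constant overhead by the machines hosting the involved nodes. Once $Z$ is identified, each $v \in Z$ in a single round aggregates the constant-size tables $S_{uv}, S_{vw}, S_v$ and computes $S_{uw}$ from them; likewise, each node with incident leaf-children updates $S_y$ locally in a single round. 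The main obstacle is confirming that all intermediate label sets remain of constant size throughout the construction; this holds because $|\sout|$ and the maximum degree of $G$ are both constants of the fixed \lcl $\Pi$, so every $S_{uv}$ and every $S_w$ has size bounded by a constant depending only on $\Pi$, keeping the whole simulation within the desired memory bounds.
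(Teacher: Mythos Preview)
Your proposal is correct and follows essentially the same approach as the paper: both argue the forward direction ``by definition'' (yours spells out the witness explicitly), both reverse the construction by having MIS nodes and leaf-parents pick witnesses from the defining conditions, and both appeal to the $O(\log^* N)$ \local MIS on degree-$2$ paths for the computation of $G_i$. Your version is somewhat more detailed than the paper's---in particular you make explicit that the degree-$2$ subgraph is a disjoint union of paths and that all label tables stay constant-size---but the underlying argument is the same.
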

\begin{proof}
    The first statement follows directly from the definition of $G_i$.
    For the second statement, observe that from the definition of $G_i$, it follows that any correct solution for $G_i$ provides a partial solution for $G_{i-1}$ (under the natural transformation that subdivides edges and adds the ``removed'' leaves with their incident edges) that is part of a correct solution for $G_{i-1}$ (and this factors through $G'_{i-1}$ in the obvious way).
    Hence, we can first obtain a correct solution for $G'_{i-1}$ by extending the provided solution on the removed leaves with their incident edges, and then obtain a correct solution for $G_{i-1}$ by doing the same on the subdivided edges.
    Note that the first extension can be performed by the nodes that are incident to the leaves (all of which have only constantly many output labels to determine), and the second extension by the computed nodes in the MIS $Z$ (which we can do in parallel since no two nodes in $Z$ are neighbors).
    The third statement follows from the definition of $G_i$, the fact that an MIS can be computed in $O(\log^* N) = O(\log^* n)$ rounds (already in the \local model), and the above observation about parallelization.
\end{proof}

Next, we bound the number of nodes of $G_i$, which we denote by $n_i$.

\begin{lemma}\label{lem:rcanalysis}
    For any $1 \leq i \leq t$, we have $n_i \leq 2/3 \cdot n_{i-1}$.
\end{lemma}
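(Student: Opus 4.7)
The plan is to count, separately, the two kinds of removals performed in going from $G_{i-1}$ to $G_i$, and to show that together they eliminate at least $n_{i-1}/3$ nodes. I would partition the vertices of $G_{i-1}$ by (undirected) degree into $L$ (leaves), $D$ (degree exactly $2$), and $H$ (degree at least $3$), and set $\ell=|L|$, $d=|D|$, $h=|H|$, so that $n_{i-1}=\ell+d+h$. Step~$1$ removes the MIS $Z$ of the degree-$2$ subgraph $G_{i-1}[D]$, and step~$2$ removes every leaf of the intermediate graph $G'_{i-1}$. I would aim to lower-bound $|Z|$ by $d/3$ and argue that step~$2$ still removes exactly $\ell$ nodes, for a total removal of at least $\ell + d/3$.

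For the MIS bound, I would first observe that since $G_{i-1}$ is a tree, every vertex of $D$ has at most two neighbors in $D$, so $G_{i-1}[D]$ is a disjoint union of simple paths. Then I would invoke the standard fact that any maximal independent set on a path of $k$ vertices has size at least $\lceil k/3\rceil$: otherwise, three consecutive non-MIS vertices would allow the middle one to be added, contradicting maximality. Summing over all paths yields $|Z|\ge d/3$.

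Next, I would verify that contracting a degree-$2$ node $v$ with neighbors $u,w$ into the single edge $(u,w)$ leaves the degrees of $u$ and $w$, and of every other surviving node, unchanged. Consequently, the leaf set of $G'_{i-1}$ coincides with $L$, so step~$2$ indeed removes exactly $\ell$ vertices. The final ingredient is the elementary tree inequality $\ell\ge h$, which follows by handshaking from $2(n_{i-1}-1)=\sum_v\deg(v)\ge \ell+2d+3h$. Combining this with the two previous bounds yields $\ell+d/3\ge(\ell+d+h)/3=n_{i-1}/3$, equivalent to the desired $n_i\le (2/3)\,n_{i-1}$. I do not expect a serious obstacle here: the subtlest point is making sure contractions do not distort leaf bookkeeping between steps~$1$ and~$2$, which is exactly why I would record the degree-preservation fact explicitly before combining the two contributions.
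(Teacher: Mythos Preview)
Your proposal is correct and follows essentially the same approach as the paper's proof: both partition $V(G_{i-1})$ by degree, lower-bound the MIS on the degree-$2$ part by $d/3$ (the paper phrases this as ``each MIS node covers at most three nodes''; you use the equivalent path formulation), note that contraction preserves degrees so all original leaves are removed in step~2, and combine with the tree inequality $\ell>h$ to conclude that at least $n_{i-1}/3$ nodes are removed. The only cosmetic difference is that the paper summarizes the final step as ``at least half of the degree-$\neq 2$ nodes and at least a third of the degree-$2$ nodes are removed,'' whereas you write the inequality $\ell+d/3\ge(\ell+d+h)/3$ directly.
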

\begin{proof}
    By the construction of $G_i$, all nodes that are contained in $G_{i-1}$ but not in $G_i$ are either leaves or degree-$2$ nodes in $G_{i-1}$.
    In particular, as the set of leaves is the same in $G_{i-1}$ and $G'_{i-1}$, all leaves of $G_{i-1}$ are not contained in $G_i$.
    Regarding degree-$2$ nodes in $G_{i-1}$, we observe that at least a third of them must be part of the chosen MIS since (i) each degree-$2$ node must be in the MIS or have an MIS node as neighbor, and (ii) each MIS node covers at most three nodes (in the sense that it is equal or adjacent to them).
    
    Also, since the average degree of a node in a tree is below $2$, the number of leaves in a tree is larger than the number of nodes of degree at least $3$. Hence, when going from $G_{i-1}$ to $G_i$, at least half of the nodes of degree $\neq 2$ are removed, and in total we obtain that the number of nodes that are removed is at least $1/3 \cdot n_i$, which proves the lemma.
\end{proof}

Now, by setting $t \coloneqq 2 \log \log n$ and $G' \coloneqq G_t$, we obtain the following straightforward corollary.
\begin{corollary}\label{cor:fewernodes}
    The number of nodes of $G'$ is at most $n/\log n$.
\end{corollary}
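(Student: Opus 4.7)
The plan is to chain the bound $n_i \leq (2/3) \cdot n_{i-1}$ from \Cref{lem:rcanalysis} iteratively from $i = 1$ up to $i = t = 2\log\log n$. Since $G_0$ has exactly the same node set as $G$ by construction, we have $n_0 = n$, and a straightforward induction gives $n_t \leq (2/3)^t \cdot n = (2/3)^{2\log\log n} \cdot n$. It therefore suffices to verify the single numerical inequality $(2/3)^{2\log\log n} \leq 1/\log n$, since $G' = G_t$.

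Taking base-$2$ logarithms of both sides, this is equivalent to showing $2 \log\log n \cdot \log(3/2) \geq \log\log n$, i.e., $\log_2(3/2) \geq 1/2$. The latter is immediate from $(3/2)^2 = 9/4 > 2$, so the inequality holds for every $n \geq 4$, yielding $n_t \leq n/\log n$ and hence the claim.

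Because the corollary is a purely arithmetic consequence of \Cref{lem:rcanalysis}, there is no genuine obstacle here. The only care required is a sanity check on the choice of constant in the exponent of $t$: any constant $c > 1/\log_2(3/2) \approx 1.71$ would work, and $c = 2$ gives a clean expression with a bit of slack. For very small $n$ where $\log\log n$ is ill-defined or non-integral, one rounds up to the nearest integer and absorbs the resulting constant-size boundary cases into the asymptotic bounds used elsewhere in the section.
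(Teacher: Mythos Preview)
Your proof is correct and follows exactly the approach the paper intends: iterate \Cref{lem:rcanalysis} for $t = 2\log\log n$ steps and verify the resulting numerical inequality $(2/3)^{2\log\log n} \le 1/\log n$ via $\log_2(3/2) \ge 1/2$. The paper itself omits these details and simply declares the corollary ``straightforward,'' so your write-up fills in precisely what was left implicit.
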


Moreover, by \Cref{obs:equiv,lem:gi,lem:rcanalysis}, we know that we can compute $G'$ in $O(\log \log n \cdot \log^* N) = O(\log \log n \cdot \log^* n)$ rounds, that there is a correct solution for $G'$ (provided the \lcl $\Pi$ admits a correct solution on $G$), and that we can transform any correct solution for $G'$ into a correct solution for $\Pi$ on $G$ in $O(\log \log n)$ rounds.
The stated runtimes are under the premise that we can implement all of the $O(\log \log n)$ steps without running into memory issues, which we will show to be the case in \Cref{sec:highimple}.
(Note that \Cref{lem:gi} only makes statements about single steps.)

\subsubsection{Solving the Compatibility Tree}\label{sec:highmainalgo}
In this section, we design an algorithm $\fA'$ that computes a correct solution for the compatibility tree $G'$ computed in \Cref{sec:trafothere}.
Algorithm $\fA'$ proceeds in two phases that largely correspond to the leaves-to-root and root-to-leaves phases mentioned in the high-level overview.

\paragraph{Phase I}\label{sec:phaseone}
In Phase I, we maintain a set of pointers $(u,v)$, which, roughly speaking, encode which output labels can be chosen at $u$ and $v$ such that the solution can be correctly completed on the path between $u$ and $v$ and the subtrees hanging from this path.
The goal is to increase the lengths of these pointers, until we obtain leaf-to-root pointers that allow us (in Phase II) to fix output labels at the root that can be completed to a correct solution on the whole tree.
The algorithm in Phase I proceeds in iterations $i = 1, 2, \dots$.
Before explaining the steps taken in each iteration, we need to introduce some definitions.

A pointer is simply a pair $(u, v)$ of nodes such that $v$ is an ancestor of $u$ in $G'$, i.e., such that $v$ is a node on the path from $u$ to the root $\rooot$, and $v \neq u$.
We say that a pointer $(u, v)$ \emph{starts} in $u$ and \emph{ends} in $v$; we also call $(u, v)$ an \emph{incoming} pointer when considering node $v$, and an \emph{outgoing} pointer when considering node $u$. 
On each pointer $p = (u, v)$, we store several pieces of information which are required for Phase II:
\begin{enumerate}
    \item a set $\pairs_p \subseteq \sout \times \sout$ (encoding completability information as outlined above),
    \item a node $\pred_p$ which might also be empty, i.e., $\pred_p \in V(G') \cup \{\bot\}$ (encoding the information about which node created the pointer), and
    \item the pair $(\first_p, \last_p)$ where $\first_p$ is the first edge and $\last_p$ the last edge on the unique path from $u$ to $v$ (possibly $\first_p = \last_p$).
\end{enumerate}

The initial pointer set $\ps_0$ is set to $E(G)$, where for each pointer $p = (u,v)$, we set $\pairs_p \coloneqq S_{uv}$, $\pred_p = \bot$, and $\first_p = \last_p = (u,v)$.
We also maintain a set of \emph{active pointers} which is initially set to $\aps_0 \coloneqq \ps_0$.
Throughout Phase I, we will guarantee that the set of active pointers contains, for each node $u \neq \rooot$, at most one pointer starting in $u$, and no pointers starting in the root $\rooot$.
We call a node \emph{active} in iteration $i$ if it is the root $\rooot$ or has exactly one outgoing active pointer at the end of iteration $i - 1$, i.e., one outgoing pointer in $\aps_{i-1}$.
For active nodes, we will denote the unique pointer starting in node $u \neq \rooot$ by $p(u)$.
Finally, for each node $u$ that is not a leaf, we also maintain a tuple $\comp(u) = (\comp^e(u))_{e \in \incid(u)}$ such that each element $\comp^e(u)$ is either a subset of $\sout$ or the special label $\undec$.
(The purpose of these tuples is to store completability information about subtrees hanging from $u$ via different edges).
The multiset for node $u$ is initially set to $\comp_0(u) \coloneqq (\undec, \dots, \undec)$. 

In iteration $i$, the pointer set is updated from $\ps_{i-1}$ to $\ps_i$, the active pointer set is updated from $\aps_{i-1}$ to $\aps_i$, and, for each non-leaf node $u$, the tuple $\comp_{i-1}(u)$ is updated to $\comp_i(u)$.
For each $i \geq 0$, we will ensure that $\aps_i \subseteq \ps_i$ and $\ps_i \subseteq \ps_{i+1}$ (if both are defined).
In order to specify the precise update rules, we need two further definitions.

The first definition specifies which nodes (during the iterative process explained above) should be intuitively regarded as degree-$2$ nodes (because for all incident edges (to children) except one, the corresponding subtree has already been completely processed w.r.t.\ completability information) and which as nodes of degree at least $3$.
(Degree-$1$ nodes will only play a passive role in the update rules.)
\begin{definition}[$k$-nodes]\label{def:123node}
    We call a node $u \neq r$ a $2$-node if $u$ has at least one incoming active pointer and for any two incoming active pointers $p, p'$, we have $\last_p = \last_{p'}$.
    For a $2$-node $u$, we call the unique incoming edge $e$ satisfying that for any incoming active pointer $p$ we have $\last_p = e$, the \emph{relevant in-edge} of $u$.
    We call a node $u \neq r$ a $3$-node if $u$ has (at least) two incoming active pointers $p, p'$ satisfying $\last_p \neq \last_{p'}$.
    For a node $u$ that is a $3$-node or the root $\rooot$, an incoming edge $e$ is called a \emph{relevant in-edge} of $u$ if there is an incoming active pointer $p$ satisfying $\last_p = e$.
    We call a node $u \neq r$ a $1$-node if $u$ has no incoming active pointer.
\end{definition}

The second definition provides an operation that combines two pointers into a larger one.
\begin{definition}[Merge]\label{def:merge}
    Let $v$ be a $2$-node, $p = (u, v)$ and $p' = (v, w)$ two active pointers starting and ending in $v$, respectively, and $\comp(v) = \{ \comp^e(v) \}_{e \in \incid(v)}$ the current tuple at $v$.
    (Note that our construction of the update rules will guarantee (as shown in \Cref{obs:undec}) that (for any $2$-node $v$) we have $\comp^e(v) = \undec$ if and only if $e = \last_p$ or $e = \first_{p'}$.)
    Then, we set $\merge(p, p') \coloneqq (u, w)$.
    Furthermore, we set $\pairs_{\merge(p, p')}$ to be the set of all label pairs $(\ell, \ell')$ such that there exists a tuple $(\ell^e)_{e \in \incid(v)}$ of output labels such that
    \begin{enumerate}
        \item $(\ell^e)_{e \in \incid(v)} \in S_v$,
        \item $\ell^e \in \comp^e(v)$, for each $e \in \incid(v) \setminus \{\last_p, \first_{p'}\}$, and
        \item $(\ell, \ell^{\last_p}) \in \pairs_p$ and $(\ell^{\first_{p'}}, \ell') \in \pairs_{p'}$.
    \end{enumerate}
    Finally, we set $\pred_{\merge(p, p')} \coloneqq v$, $\first_{\merge(p, p')} \coloneqq \first_p$, and $\last_{\merge(p, p')} \coloneqq \last_{p'}$.
\end{definition}

Note that the above definition ensures that the new pointer $(u, w)$ satisfies the property that $w$ is an ancestor of $u$, i.e., $(u, w)$ is indeed a pointer.
Now, we are set to define precisely how the sets and tuples we maintain throughout the iterations change.
The update rules for iteration $i\geq 1$ are as follows.
We emphasize that update rules~\ref{step2} and~\ref{step3} are executed ``in parallel'' for all $2$-nodes, $3$-nodes, and the root, i.e., there are no dependencies between these steps.

\begin{enumerate}
    \item Start by setting $\aps_i \coloneqq \ps_i \coloneqq \emptyset$, and $\comp_i(u) \coloneqq \comp_{i-1}(u)$ for each node $u$.
    \item\label{step2} For each active $2$-node $u$ (with outgoing pointer $p(u)$), and each incoming active pointer $p \in \aps_{i-1}$, add the pointer $\merge(p, p(u))$ to $\aps_i$.
    \item\label{step3} For each node $u$ that is a $3$-node or the root $\rooot$ do the following.
    Start by asking, for each relevant in-edge $e = (v, u)$ of $u$, whether there is some incoming active pointer $p = (w, u) \in \aps_{i-1}$ such that $\last_p = e$ and $w$ is a leaf.
    If the answer is ``no'' for at least one relevant in-edge, or if $u = r$, then
    \begin{enumerate}
        \item\label{itema} for all relevant in-edges $e$ for which the answer is ``no'', add all incoming active pointers $p' \in \aps_{i-1}$ with $\last_{p'} = e$ to $\aps_i$, and
        \item\label{itemb} for all relevant in-edges $e = (v, u)$ for which the answer is ``yes'', (only) change $\comp^e_i(u)$ (from $\undec$) to the set of all labels $\ell$ satisfying that there exists a pair $(\ell', \ell'') \in \pairs_p$ with $\ell'' = \ell$ and $(\ell') \in S_w$.
    \end{enumerate}
    If the answer is ``yes'' for all relevant in-edges and $u \neq r$, then change the answer to ``no'' on precisely one arbitrarily chosen relevant in-edge, and proceed as in the previous case (i.e., execute steps~\ref{itema} and~\ref{itemb}).
    \item For each node $u \neq \rooot$ that has an outgoing active pointer, set $p(u)$ to be the unique pointer in $\aps_i$ starting in $u$.
    \item Set $\ps_i \coloneqq \ps_{i-1} \cup \aps_i$.
\end{enumerate}

Phase I terminates after the first iteration $i$ satisfying $\aps_i = \emptyset$.
In order to illustrate the algorithm in Phase I, we provide an example in the following.

\begin{figure}
    \centering
    \includegraphics[width=0.88\textwidth]{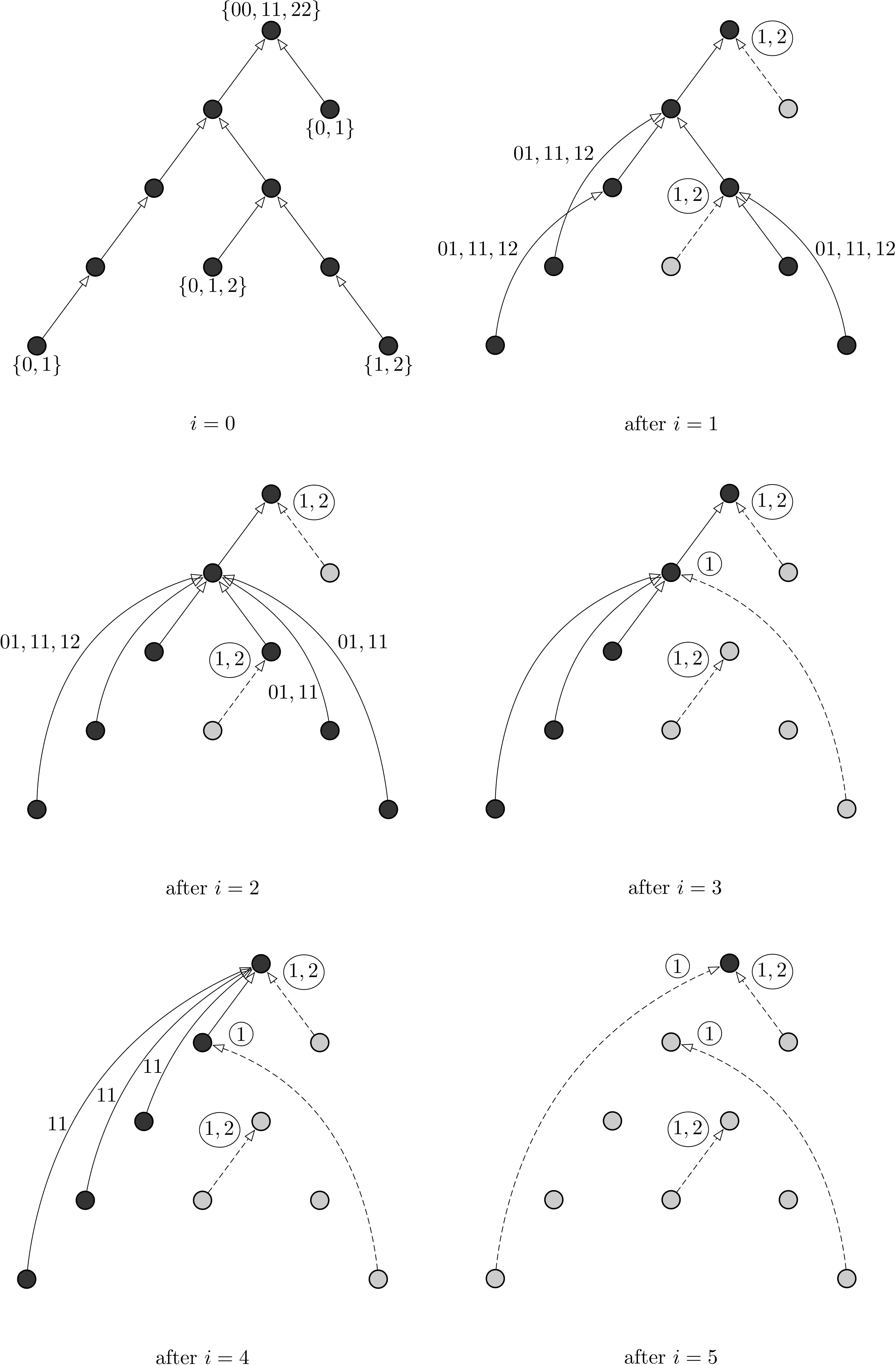}
    \caption{An example execution of the algorithm in Phase I.}
    \label{fig:pointers}
\end{figure}

\paragraph*{Example.}
Consider the compatibility tree given in \Cref{fig:pointers}, for $i=0$.
The considered output label set is $\sout = \{ 0, 1, 2 \}$.
For each edge $(u,v)$ in the tree, the associated label pair set $S_{uv}$ is defined as ${(0,1), (1,2)}$, which, for simplicity, we will write in the form $01, 12$.
For each node $w$ that is not a leaf or the root, the associated set $S_w$ is defined as the set of all tuples (of length $\deg(w)$) with pairwise distinct entries (e.g., if $w$ is of degree $3$, each tuple in $S_w$ is a permutation of $(0,1,2)$).
For the leaves and the root, the associated tuple set is given below or above the node.

In iteration $i = 1$, three merge operations are performed by the three non-root nodes of degree $2$.
In each merge operation, two pointers $p = (u,v)$, $p' = (v,w)$, each labeled with $01, 12$, are merged into a larger pointer $(u,w)$ with label $01, 11, 12$.
The label of the new pointer contains, e.g., the pair $12$ since labeling half-edge $(u,(u,v))$ with $1$ and half-edge $(w,(v,w))$ with $2$ can be completed to a labeling of all inbetween half-edges such that the labeling respects the constraints $S_{uv}$, $S_v$, and $S_{vw}$, namely by labeling $(v,(u,v))$ with $2$, and $(v,(v,w))$ with $1$.
Moreover, the two dashed pointers are removed from the set of active pointers in iteration $1$ as they start in a leaf $x$ and end in a $3$-node or the root.
Consequently, the two leaves in which the pointers start become inactive in iteration $1$ (which is illustrated by coloring them gray).
Furthermore, the two nodes $y$ in which the pointers end update their set $\comp(y)$, by setting the entry corresponding to the removed pointer from $\undec$ to the set of all labels that (when written at half-edge $(y,(x,y))$) are completable downwards, i.e., for which a label at the respective leaf $x$ exists that respects $S_x$ and $S_{xy}$.
We illustrate the entries that change from $\undec$ to some set by circling them, e.g., the changed entry of $\comp(\rooot)$ is the circled set consisting of the labels $1$ and $2$.

In iteration $i=2$, three merge operations are performed, by the two non-root $2$-nodes.
In iteration $i=3$, no merge operations are performed as there is no $2$-node.
Furthermore, the left child $z$ of the root (which is a $3$-node) obtains ``yes'' as answer for the question it asks in step~\ref{step3} of the update rules, for both relevant in-edges.
Thus, $z$ changes the answer to ``no'' for one of the relevant in-edges, arbitrarily chosen (in our case the left one).
For the other relevant in-edge $e$, all pointers $p$ with $\last_p = e$ are removed from the set of active pointers, and the corresponding entry in the set $\comp(z)$ is changed from $\undec$ to $\{ 1 \}$.
This also causes three nodes to become inactive.

In iteration $i=4$, the newly born $2$-node performs three merge operations.
In iteration $i=5$, all pointers ending in the root are removed from the set of active pointers due to the existence of a pointer starting in a leaf with the same ``last edge''.
This step causes the set of active pointers to become empty, upon which the algorithm in Phase I terminates.

\paragraph*{Properties of the Phase I Algorithm.}
In the following, we collect some properties of the defined process.
They imply, in particular, that the update rules are well-defined.

\begin{observation}\label{prop:atmostone}
    Each node has, at any point in time, at most one outgoing active pointer.
\end{observation}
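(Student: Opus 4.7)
The plan is to prove the statement by induction on the iteration index $i$, showing that for every $i \geq 0$ each node has at most one pointer in $\aps_i$ starting at it. For the base case $i = 0$, the set $\aps_0 = \ps_0 = E(G')$, regarded with edges oriented from child to parent in the rooted compatibility tree, gives every non-root node exactly one outgoing pointer and the root none, trivially satisfying the bound.

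For the inductive step I will use the key structural observation that every pointer added to $\aps_i$ in iteration $i$ starts at the same node as some pointer already in $\aps_{i-1}$: a Step~\ref{step2} merge combines $p = (x, v) \in \aps_{i-1}$ and $p(v) = (v, w)$ into $\merge(p, p(v)) = (x, w)$, which still starts at $x$; and in Step~\ref{step3} the pointers $p' \in \aps_{i-1}$ that are added to $\aps_i$ are re-inserted unchanged. Consequently, any node $x$ that was not the starting node of some pointer in $\aps_{i-1}$ cannot become the starting node of any pointer in $\aps_i$, which immediately gives the bound for such $x$.

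It remains to bound, for a node $x$ whose (by the induction hypothesis unique) outgoing pointer in $\aps_{i-1}$ is some $p(x) = (x, v)$, the number of outgoing pointers at $x$ that land in $\aps_i$. I will case split on the role of $v$ in iteration $i$. If $v$ is a $1$-node we reach a contradiction, since $v$ receives the incoming active pointer $p(x)$. If $v$ is an active $2$-node, Step~\ref{step2} at $v$ adds the single pointer $\merge(p(x), p(v))$ starting at $x$ (and only one, since by IH $p(x)$ is the unique active pointer from $x$), while Step~\ref{step3} does not apply at $v$. If $v$ is a $3$-node or the root, Step~\ref{step3} at $v$ either re-adds $p(x)$ to $\aps_i$ or discards it, contributing at most one outgoing pointer at $x$, and Step~\ref{step2} does not apply at $v$. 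Finally, if $v$ is a $2$-node that is not active, neither Step~\ref{step2} nor Step~\ref{step3} is executed at $v$, so $x$ ends up with no outgoing pointer in $\aps_i$. In every case, at most one pointer starting at $x$ lies in $\aps_i$, completing the induction. I do not anticipate any serious obstacle here: the statement follows from a careful parse of the update rules together with the fact that both admissible ways of adding a pointer to $\aps_i$ (merging and re-insertion) preserve the starting endpoint of some pointer from $\aps_{i-1}$.
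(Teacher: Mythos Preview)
Your proof is correct and follows essentially the same approach as the paper's: an induction on the iteration index showing that each active pointer is either merged (preserving its start node), re-inserted unchanged, or dropped, so the count of outgoing active pointers at any node never increases. The paper compresses your case analysis on the type of the endpoint $v$ into the single sentence ``each active pointer is either merged into a larger pointer (if it points to a $2$-node), or left unchanged or removed,'' but the underlying argument is the same.
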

\begin{proof}
    This directly follows from the fact that, in the beginning, each node has at most one outgoing active pointer, and in each iteration, each active pointer is either merged into a larger pointer (if it points to a $2$-node), or left unchanged or removed.
\end{proof}

\begin{observation}\label{prop:state}
    If a node is inactive, it can never become active again.
    If a node is an $X$-node, where $X \in \{ 1, 2, 3 \}$, it will never in the further course of the process become a $Y$-node, where $Y\in \{ 1, 2, 3 \}$ and $Y > X$.
\end{observation}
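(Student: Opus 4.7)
The plan is to prove both statements by case analysis on how the set of incoming active pointers at a fixed node $u$ can evolve from $\aps_{i-1}$ to $\aps_i$. The crucial observation is that the update rules add pointers to $\aps_i$ in exactly two ways: the merge operation in step~\ref{step2} (fired only at active $2$-nodes) and the retention in step~\ref{step3} (fired only at $3$-nodes and the root). It therefore suffices to analyse these two mechanisms.

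For the first statement I plan to argue directly that no update rule can endow a node with an outgoing active pointer unless it already has one. A pointer added to $\aps_i$ via step~\ref{step2} has the form $\merge(p, p(v))$ with $p \in \aps_{i-1}$, and by \Cref{def:merge} it starts at the source of $p$, which was already the source of an active pointer. A pointer added via step~\ref{step3} lay in $\aps_{i-1}$ already, so its source was likewise already active. Hence once a node loses its outgoing active pointer, nothing ever creates a new one for it.

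For the second statement, I plan to track the set $E_i(u) \coloneqq \{\last_q \mid q \in \aps_i,\ q \text{ ends at } u\}$, whose cardinality (together with the emptiness test) determines the $k$-node classification of $u \neq \rooot$ by \Cref{def:123node}. The main claim is the inclusion $E_i(u) \subseteq E_{i-1}(u)$ for every non-root $u$. A pointer $q = \merge(p, p(v)) \in \aps_i$ ending at $u$ via step~\ref{step2} must have $p(v) = (v, u)$ and hence $\last_q = \last_{p(v)}$, while $p(v)$ itself lay in $\aps_{i-1}$, witnessing $\last_q \in E_{i-1}(u)$. A pointer $q$ retained via step~\ref{step3} already belonged to $\aps_{i-1}$ and still ends at $u$, so $\last_q \in E_{i-1}(u)$ again; crucially, step~\ref{step3} fires at $u$ itself only when $u$ is already a $3$-node or the root, so it cannot be the vehicle that promotes $u$ upward in the classification. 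With the inclusion in hand, the second statement follows: if $u$ is a $1$-node in $\aps_{i-1}$ then $E_{i-1}(u) = \emptyset$ forces $E_i(u) = \emptyset$; if $u$ is a $2$-node with $E_{i-1}(u) = \{e\}$, then $|E_i(u)| \leq 1$, ruling out promotion to a $3$-node.

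The main technical subtlety I expect is to interpret the ``in parallel'' execution of steps~\ref{step2} and~\ref{step3} correctly, so that the $k$-node classifications and the pointers $p(v)$ used to determine which merges and retentions occur are read off of $\aps_{i-1}$ rather than any mid-iteration state; with this convention fixed, the argument reduces to straightforward bookkeeping over the two update rules. The root needs no separate treatment because \Cref{def:123node} excludes it from the classification, and transitions such as $3 \to 2$ or $2 \to 1$ are unrestricted by the statement and hence require no argument.
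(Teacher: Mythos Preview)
Your proposal is correct and follows essentially the same approach as the paper: both arguments hinge on the fact that the only \emph{new} active pointers added to $\aps_i$ arise via $\merge(\cdot,\cdot)$ (step~\ref{step2}), which preserves both the starting node (via $\first$) and the terminal edge (via $\last$), while step~\ref{step3} merely retains pointers from $\aps_{i-1}$. Your explicit bookkeeping through the sets $E_i(u)$ is a clean way to spell out the monotonicity that the paper leaves implicit in its one-line sketch.
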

\begin{proof}
    The observation follows from the definitions of the $1$-, $2$-, and $3$-nodes, the definition of $\merge(\cdot,\cdot)$, and the fact that the only new active pointers that are produced during our process are created via $\merge(\cdot,\cdot)$.
\end{proof}

\begin{observation}\label{obs:undec}
    For any $2$-node $v$ with relevant in-edge $e'$ and outgoing edge $e''$, we have $M^e(v) = \undec$ if and only if $e \in \{ e', e'' \}$. 
\end{observation}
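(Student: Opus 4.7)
The proof goes by strong induction on the iteration counter $i$, with the statement asserted at the beginning of each iteration in which $v$ is a $2$-node (referring then to the state $\aps_{i-1}$ and $\comp_{i-1}$). The two halves of the iff are handled separately, exploiting the monotonicity of node states (\Cref{prop:state}) and the fact that rule \ref{step3}(b) is the sole mechanism by which $\comp^e(v)$ ever changes.

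For the ``only if'' direction ($e \in \{e',e''\} \Rightarrow \comp^e(v) = \undec$), the case $e=e''$ is trivial: rule \ref{step3}(b) modifies only entries for relevant in-edges of $v$, while $e''$ is the outgoing edge. For $e=e'$, I argue by contradiction that rule \ref{step3}(b) has not fired at $v$ for $e'$ in any previous iteration. Assuming for contradiction that it has at the earliest iteration $j<i$, $v$ was a $3$-node or the root at $j$, $e'$ was marked ``yes'', and all active pointers at $v$ with $\last=e'$ in $\aps_{j-1}$ were not added to $\aps_j$. For $v$ to nevertheless be a $2$-node at $i$ with $e'$ as the relevant in-edge, there must be a new active pointer with $\last=e'$ at $v$ in $\aps_{i-1}$, and the only way to create one in iterations $\geq j$ is via the $\merge$ operation in rule \ref{step2} at some descendant $2$-node $u$ with $p(u)=(u,v)$ and $\last_{p(u)}=e'$; but such $p(u)$ is itself an incoming pointer at $v$ with $\last=e'$, and a backward trace along this creation chain, combined with \Cref{prop:state} and the ``from $\undec$'' clause in rule \ref{step3}(b), yields a contradiction.

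For the converse direction, fix any in-edge $e\neq e'$ of $v$; we show $\comp^e_{i-1}(v)\neq\undec$. The $2$-node assumption forces that no active pointer with $\last=e$ exists at $v$ in $\aps_{i-1}$, whereas $\aps_0$ did contain the direct edge pointer with $\last=e$ at $v$. A case analysis of the removal mechanisms in rules \ref{step2} and \ref{step3} shows that removals at $v$ via rule \ref{step2} (when $v$ was itself a $2$-node) always reintroduce pointers at $v$ with the same $\last$ through $\merge$ operations at descendant $2$-nodes, so those alone cannot account for the permanent disappearance of all $\last=e$ pointers at $v$. Hence rule \ref{step3}(b) must have fired at $v$ for $e$ at some iteration $j\leq i-1$, which by definition sets $\comp^e_j(v)$ to a non-$\undec$ value, and this value persists to iteration $i-1$.

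The main technical obstacle is the careful analysis of the parallel execution of rules \ref{step2} and \ref{step3} within a single iteration: one must preclude the scenario where \ref{step3}(b) fires at $v$ for some edge $e$ while simultaneously a $\merge$ at a descendant $2$-node recreates an active pointer at $v$ with $\last=e$, eventually placing $v$ into a $2$-node state with $e$ as relevant in-edge but $\comp^e(v)$ already set. The resolution rests on \Cref{prop:state}, the preservation of $\last$-labels under $\merge$, and a backward induction on the chain of $\merge$-created pointers, which always terminates in the initial edge pointer in $\aps_0$.
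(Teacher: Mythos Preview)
Your overall strategy is the same as the paper's: argue via the update rules that step~\ref{step3}(b) is the only mechanism that can change $\comp^e(v)$, and use \Cref{prop:state} to control the history of $v$. However, your ``converse'' direction rests on a false intermediate claim. You assert that ``removals at $v$ via rule~\ref{step2} (when $v$ was itself a $2$-node) always reintroduce pointers at $v$ with the same $\last$ through $\merge$ operations at descendant $2$-nodes.'' This is not true: take a path $a\!-\!b\!-\!c\!-\!d$ rooted at $d$. In iteration~$1$, the $2$-node $b$ merges $(a,b)$ with $(b,c)$, producing $(a,c)$; since $a$ is a leaf (a $1$-node) there is no descendant $2$-node to reintroduce any pointer at $b$, and $b$ simply becomes a $1$-node. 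So step~\ref{step2} alone \emph{can} account for the permanent disappearance of all $\last=e$ pointers at a node.

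The paper avoids this by arguing the contrapositive of what you need: rather than claiming reintroduction always happens, it observes that if $v$ had ever performed a merge then $v$ would have lost its incoming pointers and (by \Cref{prop:state}) could not be a $2$-node now. In the nontrivial case $\deg_{G'}(v)\geq 3$ (the only case where an in-edge $e\neq e'$ exists), $v$ is a $3$-node from the start and step~\ref{step2} never applies to $v$ until $v$ first becomes a $2$-node; hence during that entire prefix the only way pointers with $\last=e$ can vanish at $v$ is via step~\ref{step3}(b), which sets $\comp^e(v)$. This is a cleaner route than your reintroduction argument and does not require the backward-trace machinery. Your ``technical obstacle'' paragraph correctly identifies the delicate point about parallel execution of steps~\ref{step2} and~\ref{step3}(b), but the resolution you sketch (backward induction on the merge chain terminating in $\aps_0$) does not by itself produce the contradiction you claim; you would still need to argue why a leaf pointer $(\ell,v)$ with $\last=e$ is incompatible with a simultaneous $2$-node $u$ holding $p(u)=(u,v)$ with $\last_{p(u)}=e$.
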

\begin{proof}
    Due to the design of the update rules, if $e$ is an outgoing edge (for $v$), then $M^e(v)$ remains $\undec$ indefinitely, and if $e$ is an incoming edge, then $M^e(v)$ is set to some label set $L \neq \undec$ in the first iteration at the end of which there is no active pointer $p$ ending in $v$ and satisfying $\last_p = e$.
    (Note that we use here that there cannot have been a merge of two pointers starting and ending in $v$ so far since otherwise $v$ would not have any incoming pointers and could not be a $2$-node, by \Cref{prop:state}.)
    Since the design of the update rules ensures that once there is no active pointer $p$ ending in $v$ and satisfying $\last_p = e$, this property does not change thereafter, we obtain the lemma statement, by the definition of a relevant in-edge.
\end{proof}

\begin{observation}\label{prop:three}
    At the end of each iteration, it holds that for each active pointer $(u, v)$, the unique path from $u$ to $v$ does not contain a $3$-node, except possibly $u$ and/or $v$.
    Also, if the path from $u$ to $v$ contains a $2$-node $w \neq u$, then the relevant in-edge of $w$ is the edge incoming to $w$ that lies on this path.
\end{observation}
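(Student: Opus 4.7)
My plan is to prove both parts by induction on the iteration number $i$, with node states and relevant in-edges always evaluated at the end of the corresponding iteration. The base case $i = 0$ is trivial: $\aps_0 = E(G')$ consists only of single-edge pointers, so both claims hold vacuously.

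For the inductive step, fix $(u, v) \in \aps_i$. Either (a) $(u, v) = \merge(p, p(z))$ was newly created in step~\ref{step2} at some active $2$-node $z$ with $p = (u, z)$ and $p(z) = (z, v)$, or (b) $(u, v) \in \aps_{i-1}$ was preserved in step~\ref{step3}(\ref{itema}), in which case $v$ is a $3$-node or the root. For the first part of the observation, \Cref{prop:state} forbids any node from becoming a $3$-node during iteration $i$, so case (b) transfers directly from the induction hypothesis. In case (a), the path of $(u,v)$ is the concatenation of the paths of $p$ and $p(z)$; the induction hypothesis rules out $3$-nodes in the strict interiors of both subpaths, while $z$ itself was a $2$-node at the end of iteration $i-1$ and hence, by \Cref{prop:state}, cannot be a $3$-node at the end of iteration $i$.

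The second part is where I expect the main difficulty to lie. The key auxiliary fact I would establish is that whenever an active $2$-node $v'$ with $p(v') = (v', w)$ executes its merge, the new pointer it creates ending at $w$ has $\last$ equal to the last edge of $p(v')$'s path, i.e., it enters $w$ via the same edge as $p(v')$ itself. Now, any candidate $w \neq u$ on the path of $(u,v)$ we need to consider is a $2$-node (not a $3$-node or root) at the end of iteration $i$, so step~\ref{step3} never preserves an incoming pointer of $w$; hence every incoming pointer of $w$ in $\aps_i$ is either a surviving pointer from $\aps_{i-1}$ or a freshly merged product of the above form, and in either case its $\last$ matches the previous relevant in-edge $e_w$ of $w$. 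Thus $e_w$ is unchanged from iteration $i-1$. Combining this with the induction hypothesis --- applied to $p$ or $p(z)$ for interior $w$ in case (a), or to $(u, v) \in \aps_{i-1}$ in case (b) --- shows that $e_w$ lies on the required path. The boundary cases $w = z$ (in case (a)) and $w = v$ are handled the same way, using that the relevant in-edge at the end of iteration $i-1$ was by definition the last edge of the respective subpath entering $w$, which lies on the concatenated path of $(u,v)$.
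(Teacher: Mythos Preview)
Your approach is the paper's: induction on $i$, using \Cref{prop:state} and the invariance of a $2$-node's relevant in-edge. The paper states this in two sentences; you unpack it carefully, which is fine.

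There is one timing slip worth fixing. You justify ``step~\ref{step3} never preserves an incoming pointer of $w$'' from $w$ being a $2$-node at the end of iteration $i$, but step~\ref{step3} is applied to nodes that are $3$-nodes (or the root) at the end of iteration $i-1$. For interior $w$ and for $w=z$ this is harmless, since the induction hypothesis (part one) on the relevant subpointer already rules out $w$ being a $3$-node at $i-1$. For $w=v$, however, it fails: in case~(b) you explicitly have that $v$ was a $3$-node or the root at $i-1$, and in case~(a) nothing prevents it either, so your phrase ``the relevant in-edge at the end of iteration $i-1$'' need not refer to a single edge. The patch is immediate and simpler than your auxiliary argument: since $(u,v)\in\aps_i$ is itself an incoming active pointer of $v$ with $\last_{(u,v)}$ equal to the last edge of the path, if $v$ is a $2$-node at the end of iteration $i$ its (unique) relevant in-edge is that edge by definition. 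Also, once you have established that step~\ref{step3} does not apply at $w$, there are no ``surviving pointers from $\aps_{i-1}$'' ending at $w$; every incoming pointer in $\aps_i$ is a fresh merge product.
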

\begin{proof}
    These statements follow since they hold in the beginning of the process and do not change during the process as the merge operation is only ``performed'' by active $2$-nodes (whose relevant in-edge will lie on the path corresponding to the pointer(s) they produce).
    Here, we implicitly use \Cref{prop:state} and the fact that a $2$-node never changes its relevant in-edge (which follows with an analogous argument to the one used in the proof of \Cref{prop:state}).
\end{proof}

\begin{observation}\label{prop:unweird}
    Let $p = (u, v)$ and $p' = (w, x)$ be two active pointers at an end of an iteration, and assume that the unique paths from $u$ to $v$ and from $w$ to $x$ intersect in at least one edge.
    Then there is a directed path that contains all of these four nodes.
\end{observation}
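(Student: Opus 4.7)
The plan is to show that the four nodes $u, v, w, x$ all lie on a single vertical (root-to-leaf) chain of $G'$; once this is established, the directed path from the deepest to the shallowest of the four nodes trivially contains all of them. Fix a shared edge $e = (a,b)$ of the paths $P_1$ (from $u$ to $v$) and $P_2$ (from $w$ to $x$), where $a$ is the parent of $b$ in the rooted tree $G'$. Since $P_1$ and $P_2$ both contain $a$, the ancestors $v$ and $x$ lie on the chain from $a$ up to the root $\rooot$, so $v$ and $x$ are already comparable in the ancestor order. Hence it remains to show that $u$ and $w$ are comparable.

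Suppose, for contradiction, that $u$ and $w$ are incomparable. Since every descendant of $b$ is comparable to $b$ itself, both $u$ and $w$ must be proper descendants of $b$. Let $\ell$ be their lowest common ancestor in $G'$. Incomparability gives $\ell \notin \{u,w\}$, and since both $u$ and $w$ are descendants of $b$, so is $\ell$. Moreover, $u$ and $w$ lie in the subtrees rooted at two distinct children $c_u \neq c_w$ of $\ell$, so the edges $e_u \coloneqq (c_u, \ell)$ and $e_w \coloneqq (c_w, \ell)$ of $G'$ are distinct.

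The key step, and the one I expect to be the decisive observation, is that $\ell$ is a strict internal node of both $P_1$ and $P_2$. For $P_1$: the node $\ell$ lies on $P_1$ because $\ell$ is an ancestor of $u$ and a descendant of $v$ (as $v$ is an ancestor of $a$, hence of $b$, hence of $\ell$). Furthermore, $\ell \neq u$ since $\ell$ is a proper ancestor of $u$, and $\ell \neq v$ since $\ell$ is a descendant of $b$ while $v$ is an ancestor of the distinct node $a$. The analogous reasoning with $w, x$ in place of $u, v$ shows that $\ell$ is also strictly internal to $P_2$. Consequently, \Cref{prop:three} forces $\ell$ to be a $2$-node whose relevant in-edge is simultaneously $e_u$ (the edge of $P_1$ entering $\ell$ from below) and $e_w$ (the edge of $P_2$ entering $\ell$ from below). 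Since $c_u \neq c_w$ and a $2$-node has a unique relevant in-edge, this is the desired contradiction, and so $u$ and $w$ must be comparable.
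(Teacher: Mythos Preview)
Your reduction to the comparability of $u$ and $w$ is correct, as is the identification of the lowest common ancestor $\ell$ together with the distinct in-edges $e_u$ and $e_w$ at $\ell$. The gap lies in the sentence ``\Cref{prop:three} forces $\ell$ to be a $2$-node.'' \Cref{prop:three} only tells you that a strict internal node of an active pointer's path is not a $3$-node, and that \emph{if} such a node is a $2$-node then its relevant in-edge lies on the path; it says nothing about $1$-nodes. Nothing available at this point in the paper (only \Cref{prop:atmostone}, \Cref{prop:state}, \Cref{obs:undec}, \Cref{prop:three}) rules out $\ell$ being a $1$-node at the end of iteration $i$. Indeed, once a $2$-node merges away all of its incoming active pointers it becomes a $1$-node, yet the pointers it produced still pass through it, so a $1$-node can very well be strictly internal to an active pointer.

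The paper closes this gap with a temporal argument rather than a snapshot at iteration~$i$: since $\ell$ has the two distinct in-edges $e_u$ and $e_w$, it starts as a $3$-node, so there is a first iteration in which it ceases to be one. The update rule for $3$-nodes guarantees that at that moment at most one incoming edge of $\ell$ is traversed by any active pointer, and an invariant argument (the same one used to prove \Cref{prop:three}) shows that this property is preserved by all subsequent merges, regardless of whether $\ell$ later becomes a $2$-node or a $1$-node. This then contradicts $e_u \neq e_w$. Your proof would be complete with this invariant step added; as written, the $1$-node case is unaccounted for.
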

\begin{proof}
    Suppose for a contradiction that this is not the case, which implies that no directed path contains both $u$ and $w$, and let $y$ denote the lowest common ancestor of $u$ and $w$.
    In the beginning, node $y$ is an active $3$-node.
    At the point when $y$ stops being an active $3$-node (which has to happen due to \Cref{prop:three}), the update rules ensure that there is at most one incoming edge $e$ at $y$ such that there exist an active pointer $(z, a)$ such that the unique path from $z$ to $a$ contains $e$.
    Now, the contradiction follows from an analogous argument to the one used in the proof of \Cref{prop:three}.
\end{proof}

\begin{lemma}\label{lem:propcombi}
    At the end of each iteration it holds that (a) for each active node $u$, all nodes on the path from $u$ to $r$ are active, and (b) for any two active nodes $u, v$ such that $v$ is an ancestor of $u$, the node $x$ node $v$ points to is an ancestor of the node $w$ node $u$ points to, or $w = x$.
\end{lemma}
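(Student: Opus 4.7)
The plan is to prove (a) and (b) by a joint induction on the iteration index $i$. For the base case $i=0$, we have $\aps_0 = E(G)$, so every non-root node has its direct parent pointer and is active, and both (a) and (b) follow immediately (parent-to-parent pointers trivially respect ancestor order along tree paths).

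For the inductive step, fix an end-of-iteration-$i$ active node $u$; its outgoing pointer $p(u) = (u,v)$ must have arisen in one of two ways: (i) via a merge in update rule~\ref{step2} at an active $2$-node $v'$, combining some $(u,v') \in \aps_{i-1}$ with $p(v') = (v',v) \in \aps_{i-1}$, or (ii) by preservation, with $(u,v) \in \aps_{i-1}$ and $v$ being a $3$-node or $\rooot$ that answered ``no'' on $\last_{(u,v)}$ in update rule~\ref{step3}. Combining (i), (ii) and the inductive hypothesis on (a) at end of iteration $i-1$ exhausts the possibilities.

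For (a), the key claim is that no node on the tree path from $u$ to $\rooot$ loses its outgoing pointer during iteration $i$. The only mechanism by which a tree-ancestor $z$ of $u$ can become inactive is that its outgoing pointer $p(z) = (z,w)$ is dropped by update rule~\ref{step3} at $w$, and this requires a leaf pointer $(y,w) \in \aps_{i-1}$ with $\last_{(y,w)} = \last_{(z,w)}$. Applying \Cref{prop:unweird} to $(y,w)$ and $(z,w)$, whose paths share the final edge at $w$, forces $z$, $y$, $w$ to be collinear on a root-directed tree path; since $y$ is a leaf, $z$ must be a strict ancestor of $y$, so $z$ is an interior node of the active pointer $(y,w)$. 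Then \Cref{prop:three} implies that $z$ is not a $3$-node; and if $z$ is a $2$-node, the same proposition forces its relevant in-edge to coincide with the edge through which $y$'s path enters $z$. Combined with the inductive (b) (applied to $u$ and $y$, using in case (i) that $u$'s old target $v'$ is a strict descendant of $v$), this produces either a violation of the inductive (b) at end of iteration $i-1$, or two distinct outgoing pointers at a single node, contradicting \Cref{prop:atmostone}. Hence no such $z$ exists and (a) holds.

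For (b), I consider active $u, v$ at end of iteration $i$ with $v$ a tree-ancestor of $u$, and case-split on how $p(u)$ and $p(v)$ were obtained in iteration $i$. Merges at $2$-nodes replace the source's target by the target of the merging $2$-node's outgoing pointer, which by the inductive (b) is an ancestor of (or equal to) the old target; preservation leaves pointers unchanged; and update rule~\ref{step3} drops entire groups of pointers sharing a final edge at a $3$-node or $\rooot$, so the survivors keep a consistent ancestor ordering of their targets. The main obstacle, which I expect to be the trickiest subcase, is when $p(u)$ is extended by a merge at a $2$-node $v'$ strictly below $v$ in the tree while $p(v)$ is either kept or also extended: a potential overtake of $v$'s target by $u$'s new target would place two active pointers with a common final edge at a shared ancestor whose sources are not on a directed path, contradicting \Cref{prop:unweird} together with the inductive (b).
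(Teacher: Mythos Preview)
Your overall strategy matches the paper's: joint induction on $i$, leaning on \Cref{prop:three}, \Cref{prop:unweird}, and \Cref{prop:atmostone}. But your argument for (a) has a concrete gap. You invoke the inductive (b) on the pair $u$ and $y$ (the leaf), yet you never establish that one is an ancestor of the other. Both are descendants of $z$, but they may sit in different subtrees below $z$, so (b) does not apply. The paper avoids this by first reducing to the case where the inactive ancestor is the \emph{immediate parent} $v$ of $u$ (the first failure along the $u$-to-$\rooot$ path) and then splitting on whether $u$ lies on the leaf-to-$w$ path. The delicate subcase is when it does not: there \Cref{prop:unweird} forces $u$'s old target to be $v$ itself, and the paper must step back one further iteration, argue that $v$ was a $3$-node at the end of iteration $i-1$, and only then derive a contradiction to \Cref{prop:three} from the position of the leaf's pointer at that earlier time. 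Your sketch contains no such two-step lookback, and the ``violation of the inductive (b) or two outgoing pointers'' dichotomy you offer does not cover this case.

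For (b) you correctly single out the overtake scenario as the crux, but your proposed contradiction (``two active pointers with a common final edge at a shared ancestor whose sources are not on a directed path'') is not what the paper actually obtains, and \Cref{prop:unweird} alone does not close the case. The paper traces both pointers back to targets $y$ and $z$ at the end of iteration $i-1$, and splits on whether $z$ equals the new target $x$. If $z \neq x$ then the four nodes $y,z,x,w$ already violate (b) at the earlier iteration; if $z = x$ then one must argue that $z$ is a $3$-node (it is active, has an incoming pointer, and cannot be a $2$-node since no merge occurred at it), whence the old pointer $(y,w)$ carries $z$ as an interior $3$-node, contradicting \Cref{prop:three}. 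You should expect to need both of these subcases spelled out.
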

\begin{proof}
    Suppose for a contradiction that the lemma statement is false, and let $i$ be the first iteration such that at the end of iteration $i$ the statement is not satisfied.
    
    Consider first the case that property (b) does not hold, which implies that at the end of iteration $i$, there are active nodes $u, v$ with active pointers $p = (u, w)$, $p' = (v, x)$ such that $v$ is an ancestor of $u$, and $w$ an ancestor of $x$.
    Let $y$ and $z$ denote the nodes $u$ and $v$, respectively, were pointing to at the beginning of iteration $i$.
    Due to the minimality of $i$, we have that $z$ is an ancestor of $y$, or $y = z$, and that $y$ and $z$ are active at the beginning of round $i$.
    The pointer $(u, w)$ must be the result of a merge operation in iteration $i$, as otherwise $w = y$, which would imply that $w$ cannot be an ancestor of $x$.
    Hence, at the beginning of iteration $i$, the active pointer starting at $y$ must be $(y, w)$.
    Also, at the beginning of iteration $i$, we must have $z = x$, or the pointer starting at $z$ must be $(z, x)$ (as otherwise we could not have the active pointer $p' = (v, x)$ at the end of iteration $i$).
    In the latter case, we obtain $y \neq z$ (as otherwise $w = x$), and we see that the nodes $y, z, x, w$ satisfy that, at the beginning of iteration $i$, $z$ is an ancestor of $y$, $w$ is an ancestor of $x$, $z$ points to $x$, and $y$ points to $w$, which yields a contradiction to the minimality of $i$.
    Hence, we can assume that $z = x$.
    This implies that, at the beginning of iteration $i$, $z$ is not an active $2$-node (as otherwise $v$ would not point to $x$ at the end of iteration $i$); since $z$ is an active node (by \Cref{prop:state}) and has an incoming pointer (from $v$), it must be a $3$-node.
    Since $w$ is an ancestor of $x = a$, and we (still) have $y \neq z$, we see that for the active pointer $(y, w)$ at the beginning of iteration $i$, the path from $y$ to $w$ contains a $3$-node that is distinct from both $y$ and $w$, yielding a contradiction to \Cref{prop:three}.
    
    Now consider the second case, namely that property (a) does not hold, which implies that at the end of iteration $i$, there are two vertices $u, v$ such that $v$ is the parent of $u$, $u$ is active, and $v$ is inactive.
    Due to the minimality of $i$ and \Cref{prop:state}, $v$ (as well as $u$) must have been active at the beginning of iteration $i$.
    By the design of the update rules, the only way in which $v$ can have become inactive at the end of iteration $i$ is that the node $w$ node $v$ points to at the beginning of iteration $i$ is a $3$-node that some leaf $x$ satisfying $\last_{(v,w)} = \last_{(x,w)}$ points to as well, at the beginning of iteration $i$.
    Let $y$ denote the end of the active pointer starting in $u$ at the beginning of iteration $i$.
    
    If $u$ lies on the path from $x$ to $w$, then, by property (b), we have that $y$ is an ancestor of $w$, or $y = w$.
    The former cannot be true, as otherwise we would have an active pointer (from $u$ to $y$) at the beginning of iteration $i$ such that the corresponding path contains an internal node that is a $3$-node (namely $w$), which would contradict \Cref{prop:three}.
    However, also the latter cannot be true, as otherwise $u$ would have become inactive at the end of iteration $i$ since $\last_{(u,y)} = \last_{(x,w)}$.
    Hence, $u$ does not lie on the path from $x$ to $w$.
    
    If $y \neq v$, we the two active pointers $(u, y)$ and $(x, w)$ at the beginning of iteration $i$ yield a contradiction to \Cref{prop:unweird}.
    Hence, $y = v$.
    Observe that $i \geq 2$, as at at the beginning of iteration $1$, the only pointers we have are the directed edges of $G'$, and the pointer $(x, w)$ is not such a pointer (as it contains the internal node $v$; we have $x \neq v$ as $x$ is a leaf while $v$ has a child, namely $u$).
    Hence, iteration $i-1$ exists, and at the beginning of iteration $i-1$, node $u$ must have pointed to node $v$ and node $v$ cannot have been a $2$-node, as otherwise we could not have an active pointer $(u,v)$ at the beginning of iteration $i$.
    Since, at the beginning of iteration $i - 1$, node $v$ had an incoming active pointer (from $u$), it cannot have been a $1$-node either, so it must have been a $3$-node.
    Now consider the node $z$ leaf $x$ was pointing to at the beginning of iteration $i-1$.
    As the active pointer starting in $x$ at the beginning of iteration $i$ is $(x,w)$, there are only $3$ possibilities, due to the design of the update rules: 1) $z = w$, or 2) $z \neq v$ lies on the path from $v$ to $w$, or 3) $z \neq v$ lies on the path from $x$ to $v$ and there is an active pointer from $z$ to $w$.
    In either case, we obtain a contradiction to \Cref{prop:three}.
\end{proof}

\begin{lemma}\label{lem:noin23}
    When a node stops being an active $3$-node, it becomes an active $2$-node.
    When a node stops being an active $2$-node, it becomes an active $1$-node.
    When a node stops being an active node, it turns from an active $1$-node into an inactive $1$-node, and remains an inactive $1$-node until the end of Phase I.
    In particular, there are no inactive $2$- or $3$-nodes.
\end{lemma}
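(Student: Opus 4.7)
The plan is to establish the three transition statements in turn, using the earlier observations and \Cref{lem:propcombi}. \Cref{prop:state} already forces all type changes to go $3 \to 2 \to 1$ and all activity changes to go active to inactive, so the content of the lemma is that losing one type level never skips further down and is never simultaneously accompanied by deactivation. For a non-root node $u$ that is an active $3$-node at the end of iteration $i-1$, step~3 in iteration $i$ guarantees that at least one relevant in-edge of $u$ ends up ``no'' (the ``change to no'' rule fires precisely when all answers are initially ``yes''), so pointers with that $\last$ are added to $\aps_i$; combined with \Cref{prop:state}, this implies that $u$ is a $2$-node at the end of iteration $i$ if it is no longer a $3$-node. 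For activity, $p(u) = (u,w)$ is automatically preserved when $w$ is a $2$-node (via the merge at $w$) or when $w$ is a $3$-node or the root and $\last_{p(u)}$ is ``no''; the ``yes'' case, however, would supply a leaf $x$ with $(x,w) \in \aps_{i-1}$ and $\last_{(x,w)} = \last_{p(u)}$, \Cref{prop:unweird} would put $u$ internally on the directed path from $x$ to $w$, and that contradicts \Cref{prop:three} since $u$ is a $3$-node.

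For the second statement, the type part is identical (\Cref{prop:state} pushes $u$ from $2$ to $1$). The activity part is the main obstacle, since the previous paragraph's contradiction via \Cref{prop:three} fails: a $2$-node is permitted to lie internally on the $x$-to-$w$ path. The plan here is to track an incoming pointer $(y,u) \in \aps_{i-1}$, which must exist because $u$ is a $2$-node. \Cref{prop:three} applied to $(x,w)$ identifies $u$'s relevant in-edge as the edge on the $x$-side, and then \Cref{prop:unweird} applied to $(y,u)$ and $(x,w)$ forces $y = x$ or $y$ to be a strict ancestor of $x$. The first case gives $x$ two distinct outgoing pointers $(x,u)$ and $(x,w)$, contradicting \Cref{prop:atmostone}. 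In the second case, \Cref{lem:propcombi}(b) applied to the active ancestor-descendant pair $(y,x)$ would require $y$'s target $u$ to be an ancestor of (or equal to) $x$'s target $w$, contradicting that $w$ is a strict ancestor of $u$. Hence ``yes'' is impossible, $p(u)$ survives, and $u$ is an active $1$-node.

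The third statement and its addendum are then essentially a corollary. If a node $u$ becomes inactive in iteration $i$, the preceding two statements forbid it from having been an active $3$- or $2$-node at the end of iteration $i-1$, so $u$ was an active $1$-node; \Cref{prop:state} keeps it a $1$-node and inactive for the rest of Phase I. The non-existence of inactive $2$- or $3$-nodes follows by the same trichotomy applied to every iteration.
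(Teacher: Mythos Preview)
Your proof is correct and relies on the same toolkit as the paper (\Cref{prop:state}, \Cref{prop:three}, \Cref{prop:unweird}, \Cref{prop:atmostone}, \Cref{lem:propcombi}), but the organization differs. The paper first proves the contrapositive-style statement ``any node that becomes inactive was a $1$-node at the start of that iteration'' in one shot---ruling out $3$-nodes and $2$-nodes (with wrong or right relevant in-edge) via \Cref{prop:three} and \Cref{lem:propcombi}---and only then reads off the two type-transition statements from the update rules. You instead handle each transition separately and prove activity preservation directly: for $3$-nodes you get the contradiction straight from \Cref{prop:three}, while for $2$-nodes you chase an incoming pointer $(y,u)$ and use \Cref{prop:unweird} plus \Cref{lem:propcombi}(b) to reach a contradiction. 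Your $2$-node argument makes the role of \Cref{lem:propcombi}(b) more explicit than the paper's somewhat terse ``yielding a contradiction to \Cref{lem:propcombi}''; conversely, the paper's upfront handling of deactivation avoids repeating the ``what is the type of the target $w$'' case split. Both routes are equally valid and of comparable length.
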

\begin{proof}
    Consider an active node $u$ that becomes inactive at the end of some iteration $i$.
    By the update rules, $u$ can only become inactive due to having an active pointer to some node $v \neq u$ at the beginning of iteration $i$, and $v$ having another incoming active pointer from some leaf $w$ (where, potentially, $w = u$) such that $\last_{(u, v)} = \last_{(w, v)}$.
    In particular, $u$ lies on the path from $w$ to $v$.
    At the beginning of iteration $i$, node $u$ cannot be a $3$-node or a $2$-node with the relevant in-edge not lying on the path from $w$ to $v$, since otherwise $u \neq w$, and the active pointer $(w, v)$ together with node $u$ would yield a contradiction to \Cref{prop:three}.
    At the beginning of iteration $i$, node $u$ also cannot be a $2$-node with the relevant in-edge lying on the path from $w$ to $v$ as otherwise $u$ would have an incoming active pointer from some node $x \neq w$ on the path from $w$ to $u$, yielding a contradiction to \Cref{lem:propcombi}.
    Hence, $u$ is a $1$-node at the beginning of iteration $i$.
    By \Cref{prop:state}, the inactive node that $u$ becomes at the end of iteration $i$ must be a $1$-node, and $u$ will remain an inactive $1$-node.
    
    Now consider an active $3$-node that stops being an active $3$-node at the end of some iteration $i$.
    By the above discussion, $u$ is still active at the end of iteration $i$, and, by the design of the update rules, $u$ retains at least one relevant in-edge, which implies that it becomes a $2$-node.
    
    Finally, consider an active $2$-node that stops being an active $2$-node at the end of some iteration $i$.
    Again, we obtain that $u$ is still active at the end of iteration $i$, and, again by the design of the update rules, we see that there can be at most one edge $e$ incoming at $u$ such that there exists an active pointer $p$ satisfying $\last_p = e$, which implies that $u$ is a $1$-node at the end of iteration $i$ (as $u$ stops being an active $2$-node).
\end{proof}

Due to \Cref{lem:noin23}, we will not have to distinguish between active and inactive $2$-nodes (or $3$-nodes) in the remainder of the paper as we know that such nodes cannot be inactive.

\paragraph*{Bounding the Number of Iterations.}

In the following, we will fix some notation that is required to prove that the number of iterations until Phase I terminates is in $O(\log n)$.
We will denote the induced tree consisting of \emph{active} nodes at the end of iteration $i$ by $T_i$; we set $T_0 \coloneqq G'$.
Due to \Cref{lem:propcombi}, we know that $T_i$ is indeed a (rooted) subtree of $G'$, and that its root is the root of $G'$, namely $\rooot$; we also know that, for any $i \geq 1$, tree $T_i$ is an induced subtree of $T_{i-1}$, due to \Cref{prop:state}.
For each $T_i$, we denote the maximal connected components consisting of non-root degree-$2$ nodes by $B_{i,1}, \dots, B_{i,z_i}$, in an arbitrary, but fixed, order.
Here $z_i$ denotes the number of such maximal connected components in $T_i$.
We call the $B_{i,j}$ \emph{blocks} of $T_i$.
For simplicity, we will also use $B_{i,j}$ to denote the set of nodes of $B_{i,j}$.
In the following we will collect some insights about the $T_i$ and $B_{i,j}$.

\begin{lemma}\label{lem:leaves}
    Consider some iteration $i \geq 1$.
    Any leaf $u \neq \rooot$ of $T_i$ is also a leaf of $T_{i-1}$.
    Moreover, any leaf $u \neq \rooot$ of $T_i$ is also a leaf of $G'$.
\end{lemma}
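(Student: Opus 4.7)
The second assertion follows from the first by induction on $i$, since any leaf of $T_i$ is, by \Cref{prop:state}, present in every $T_j$ with $j \leq i$. My plan therefore focuses on the first assertion and proceeds by contradiction: assume $u \neq \rooot$ is a leaf of $T_i$ but has some child $v$ in $G'$ that is active in $T_{i-1}$ yet becomes inactive during iteration $i$.

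I begin by extracting the structural constraints imposed on $v$. By \Cref{lem:noin23}, $v$ must have been an active $1$-node at the end of iteration $i-1$, with a unique outgoing pointer $p(v) = (v, w)$. Inspecting the update rules, the only mechanism that can remove the outgoing pointer of an active $1$-node in a single iteration is step 3 applied at $w$; hence $w$ is a $3$-node or the root, and the step 3 decision at $w$ drops all incoming active pointers whose last edge equals $\last_{p(v)}$.

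Next I extract constraints on $u$. Since $u$ is a leaf of $T_i$ and $u \neq \rooot$, every child of $u$ in $G'$ is inactive at the end of iteration $i$, and the contrapositive of \Cref{lem:propcombi}(a) propagates this to every deeper descendant. Since the source of any active pointer must itself be active, $u$ has no incoming active pointer at the end of iteration $i$, so $u$ is a $1$-node then. Applying \Cref{lem:noin23} backwards rules out $u$ being a $3$-node at the end of iteration $i-1$, because the transition $3 \to 1$ among active states is forbidden. In particular $u \neq w$, so $w$ is a strict ancestor of $u$, and the path from $v$ to $w$ passes through $u$ as an internal node.

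The substantive step, which I expect to be the main obstacle, is turning the above setup into a contradiction by showing that $u$ itself must also die in iteration $i$. Let $w'$ denote the end of $p(u)$. By \Cref{lem:propcombi}(b), either $w' = w$ or $w'$ is a strict ancestor of $w$. In the first case, since $u$ lies on the path from $v$ to $w$, the pointers $p(u)$ and $p(v)$ share the same last edge (the unique edge entering $w$ from the common subtree containing $u$ and $v$); hence the same step 3 decision at $w$ that drops $p(v)$ also drops $p(u)$, and because $w$ is a $3$-node or the root rather than a $2$-node, step 2 does not apply at $w$, so no merge can supply $u$ with a replacement outgoing pointer, forcing $u$ inactive and contradicting $u \in T_i$. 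In the second case, $w$ is an internal node of the path from $u$ to $w'$ and is a $3$-node (it cannot be the root, since then $w' = w$), which directly contradicts \Cref{prop:three}. Both cases yield the desired contradiction.
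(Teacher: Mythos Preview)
Your proof is correct and follows essentially the same route as the paper's: both argue by contradiction, identify the child $v$ and the target $w$ of its pointer, use \Cref{lem:noin23} to rule out $u$ being a $3$-node (hence $u \neq w$), and then show via \Cref{lem:propcombi}(b) and \Cref{prop:three} that $p(u)$ must also end at $w$ with the same last edge, so the same step~3 decision deactivates $u$. The only cosmetic difference is that you make the case split $w' = w$ versus $w'$ a strict ancestor of $w$ explicit, whereas the paper compresses this into a single sentence citing \Cref{prop:three} and \Cref{lem:propcombi} together.
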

\begin{proof}
    For a contradiction, suppose that, for some $i \geq 1$, tree $T_i$ contains a leaf $u \neq \rooot$ that is not a leaf of $T_{i-1}$.
    Note that $u$ cannot have any incoming active pointer at the end of iteration $i$, and therefore must be a $1$-node at that point in time.
    Let $v$ be a child of $u$ in $T_{i-1}$, and let $w$ denote the node $v$ is pointing to at the end of iteration $i-1$.
    Due to our assumption, $v$ is active at the end of iteration $i-1$, but inactive at the end of iteration $i$.
    Due to the design of our update rules, the only way in which this can happen is that at the end of iteration $i-1$, $w$ is a $3$-node or the root, and there is an active pointer $(x, w)$ from some leaf $x$ with $\last_{(x, w)} = \last_{(v, w)}$.
    Observe that, by \Cref{lem:noin23}, $u$ cannot be a $3$-node at the end of iteration $i - 1$ (as it is a $1$-node at the end of iteration $i$), which implies $w \neq u$.
    Hence, $w$ is an ancestor of $u$, and, by \Cref{prop:three} and \Cref{lem:propcombi}, it follows that at the end of iteration $i-1$, the active pointer starting at $u$ must end in $w$, and $\last_{(u, w)} = \last_{(x, w)}$.
    But this implies, again by the design of the update rules, that if $v$ becomes inactive at the end of iteration $i$, then so does $u$.
    This yields a contradiction to the fact that $u$ is active at the end of iteration $i$.
    
    Since we showed that any leaf $u \neq \rooot$ in the tree of active nodes at the end of some iteration is also a leaf in the tree of active nodes at the end of the previous iteration, we obtain, by applying this argumentation iteratively, that $u$ must also be a leaf in $T_0 = G'$.
\end{proof}

\begin{corollary}\label{cor:niceblocks}
    Consider some iteration $i \geq 1$, and some block $B_{i,j}$.
    Let $u$ be a node in $B_{i,j}$.
    If $u$ is in some block $B_{i-1,j'}$, then $B_{i-1,j'} \subseteq B_{i,j}$ (considered as node sets).
\end{corollary}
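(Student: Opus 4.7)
The plan is to view $B_{i-1,j'}$ as a path $v_1 - v_2 - \cdots - v_k$ ordered by the ancestor relation (so that $v_{j+1}$ is the parent of $v_j$), and to write $u = v_a$. What I want to show is that every $v_j$ is active in $T_i$ and has degree exactly $2$ in $T_i$. Once I have this, the path $v_1 - v_2 - \cdots - v_k$ lies in $T_i$ as a connected path of non-root degree-$2$ nodes passing through $u$, and the maximality of $B_{i,j}$ forces $\{v_1, \dots, v_k\} \subseteq B_{i,j}$.

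The upward part $v_{a+1}, \dots, v_k$ is immediate from \Cref{lem:propcombi}(a): these nodes, together with the parent $p_k$ of $v_k$ (which lies just outside the block), are all ancestors of the active node $u$ in $G'$, and are therefore active in $T_i$. Since $T_i$ is an induced subgraph of $T_{i-1}$ and each $v_j$ with $j \geq a+1$ has exactly the same two $T_{i-1}$-neighbors available in $T_i$ (namely $v_{j-1}$ and either $v_{j+1}$ or $p_k$), these $v_j$ are all degree-$2$ in $T_i$.

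The downward part $v_{a-1}, \dots, v_1$ is where the main difficulty lies, because \Cref{lem:propcombi} only constrains ancestors. My plan is a descent induction on $b$ from $b = a$ down to $b = 1$, with hypothesis ``$v_b$ is active and degree-$2$ in $T_i$''. The base case $b = a$ is exactly the assumption $u \in B_{i,j}$. For the step from $v_b$ to $v_{b-1}$, the fact that $v_b$ is degree-$2$ in $T_i$ directly forces its $T_{i-1}$-neighbor $v_{b-1}$ to be active in $T_i$, since $T_i$ is an induced subgraph of $T_{i-1}$. To upgrade $v_{b-1}$ from ``active'' to ``degree-$2$'', I inspect $v_{b-1}$'s other $T_{i-1}$-neighbor, which is either $v_{b-2}$ (if $b \geq 3$) or the external child $c_1$ of $v_1$ (if $b = 2$). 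The key observation is this: if that neighbor were inactive in $T_i$, then $v_{b-1}$ would have $T_i$-degree $1$ and hence be a non-root leaf of $T_i$; but \Cref{lem:leaves} forbids this unless $v_{b-1}$ is also a leaf of $G'$, which it is not, because every node of $B_{i-1,j'}$, being a non-root degree-$2$ node of $T_{i-1}$, has at least one child in $G'$. This closes the induction, and combining both directions with the maximality of $B_{i,j}$ yields the claimed inclusion.
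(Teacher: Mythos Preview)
Your proof is correct and follows essentially the same approach as the paper's: both split into the ``upward'' case (handled via \Cref{lem:propcombi}) and the ``downward'' case (handled via \Cref{lem:leaves}), concluding that every node of $B_{i-1,j'}$ remains a non-root degree-$2$ node in $T_i$ and hence lies in $B_{i,j}$. Your version is more explicit---you spell out the descent induction and the reason why a would-be degree-$1$ node contradicts \Cref{lem:leaves}---whereas the paper compresses the downward case into a single sentence, but the underlying argument is the same.
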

\begin{proof}
    Let $u$ be as described in the lemma, and suppose, for a contradiction, that there is some node $v \neq u$ satisfying $v \in B_{i-1,j'}$ and $v \notin B_{i,j}$.
    By the definition of blocks, either $v$ is an ancestor of $u$, or $u$ is an ancestor of $v$.
    In the former case, observe that, due to \Cref{lem:propcombi}, all ancestors of $u$ in $T_{i-1}$ are also contained in $T_i$, which implies that $v$ is a degree-$2$ node in $T_i$ belonging to $B_{i,j}$, yielding a contradiction.
    In the latter case, observe that $v$ and its child in $T_{i-1}$ must be contained in $T_i$ as otherwise some node on the path from $v$ to $u$ must be a leaf in $T_i$ while not being a leaf in $T_{i-1}$, which would contradict \Cref{lem:leaves}.
    Now we obtain a contradiction in an analogous way to the previous case.
\end{proof}

For each block $B_{i,j}$ with $i \geq 1$, we denote the set of blocks $B_{i-1,j'}$ that have non-empty intersection with $B_{i,j}$ by $\prev(B_{i,j})$.
Due to \Cref{cor:niceblocks}, we know that the union of all node sets contained in $\prev(B_{i,j})$ is a subset of $B_{i,j}$.
Moreover, for each block $B_{i,j}$ with $i \geq 1$, we denote the set of vertices in $B_{i,j}$ that are not contained in some $B \in \prev(B_{i,j})$ by $\new(B_{i,j})$.

We will also need the notion of a pointer chain.
\begin{definition}[pointer chain]
    A \emph{pointer chain} (from a node $c_0$ to a node $c_y$) at the end of some iteration $i$ is a finite sequence $C = (c_0, \dots, c_y)$ of nodes such that for any $1 \leq j \leq y$, there is an active pointer $(c_{j-1}, c_j)$ at the end of iteration $i$.
    We call a pointer chain a \emph{leaf-root pointer chain} if $c_0$ is a leaf in $G'$ and $c_y = \rooot$.
\end{definition}
Note that any pointer chain at the end of some iteration $i$ consists only of active nodes, i.e., of nodes from $T_i$ (this holds for the last node in the pointer chain due to \Cref{lem:propcombi}).

\begin{observation}
    For any iteration $i$, and any leaf $u \neq \rooot$ in $T_i$, there exists a leaf-root pointer chain from $u$ to $\rooot$ at the end of iteration $i$.
\end{observation}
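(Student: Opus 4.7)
The plan is to build the desired pointer chain greedily, starting at $u$ and repeatedly following the unique outgoing active pointer, and then to verify that every node along the way (other than possibly $\rooot$) actually has such an outgoing pointer, so that the process cannot get stuck before reaching the root.

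Concretely, I would set $c_0 \coloneqq u$ and, inductively, having defined $c_j$, terminate the chain if $c_j = \rooot$ and otherwise let $c_{j+1}$ be the unique node with $(c_j, c_{j+1}) \in \aps_i$. The uniqueness of this outgoing active pointer (when it exists) comes from \Cref{prop:atmostone}. Note also that by \Cref{lem:leaves}, $u$ is in fact a leaf of $G'$, so the resulting chain will indeed qualify as a leaf-root pointer chain.

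The key step, and the only mildly delicate point, is showing that every $c_j$ with $c_j \neq \rooot$ does have an outgoing active pointer. For this I would argue by induction on $j$ that each $c_j$ is an ancestor of $u$ in $G'$: by the very definition of a pointer, $c_1$ is an ancestor of $c_0 = u$, and in the inductive step $c_{j+1}$ is an ancestor of $c_j$, hence by transitivity an ancestor of $u$. Thus $c_j$ lies on the unique path from $u$ to $\rooot$ in $G'$, and then \Cref{lem:propcombi}(a) guarantees that $c_j$ is active at the end of iteration $i$. Since $c_j \neq \rooot$, the invariant maintained throughout Phase~I (that every active non-root node has exactly one outgoing active pointer) yields the required pointer, so $c_{j+1}$ is well-defined.

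Finally, termination at $\rooot$ is immediate: the ancestors of $u$ in the finite tree $G'$ form a finite chain, and each step $c_j \mapsto c_{j+1}$ moves strictly upward in this chain (because $c_{j+1}$ is an ancestor of $c_j$ and $c_{j+1} \neq c_j$ by the definition of a pointer). Thus the process must reach $\rooot$ after finitely many steps, producing the desired leaf-root pointer chain $(c_0, c_1, \dots, c_y) = (u, \dots, \rooot)$. I do not anticipate a real obstacle here; the statement is essentially a direct consequence of \Cref{prop:atmostone} and \Cref{lem:propcombi}(a), together with the trivial fact that pointers go to proper ancestors.
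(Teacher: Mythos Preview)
Your proposal is correct and follows essentially the same approach as the paper, which simply notes that the statement follows directly from the definition of a pointer together with \Cref{lem:propcombi} and \Cref{lem:leaves}. You have spelled out in detail the greedy chain-building argument that the paper's one-line proof is implicitly invoking, using exactly the same ingredients (plus the harmless extra citation of \Cref{prop:atmostone} for uniqueness).
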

\begin{proof}
    This follows directly from the definition of a pointer and \Cref{lem:propcombi,lem:leaves}.
\end{proof}

In order to maintain a certain guarantee (given in \Cref{lem:leafroot}) throughout Phase I (that will help us to bound the number of iterations), we will need to assign an integer value $k_{i,j}$ to each block $B_{i,j}$ that, roughly speaking, provides an upper bound for the number of nodes from $B_{i,j}$ contained in any leaf-root pointer chain.
Define $\prev(k_{i,j})$ to be the set of all indices $j'$ such that $B_{i-1,j'} \in \prev(B_{i,j})$.
For each block $B_{0,j}$, we set $k_{0,j} \coloneqq |B_{0,j}|$.
For each block $B_{i,j}$ with $i \geq 1$, we set
\[
  k_{i,j} \coloneqq |\new(B_{i,j})| + 1/2 \cdot \sum_{j' \in \prev(k_{i,j})} k_{i-1, j'}\enspace.
\]

\begin{lemma}\label{lem:leafroot}
    Consider a leaf-root pointer chain $C = (c_0, \dots, c_y = \rooot)$ at the end of some iteration $i \geq 0$ (where we set the end of iteration $0$ to be the starting point of our process).
    For each block $B_{i,j}$, the number of nodes contained in $C \cap B_{i,j}$ is at most $k_{i,j}$.
\end{lemma}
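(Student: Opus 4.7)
The plan is to prove the lemma by induction on $i$. The base case $i = 0$ is immediate from $k_{0,j} = |B_{0,j}|$. For the inductive step, given a leaf-root chain $C$ at end of iteration $i$, I would construct an ``expanded'' leaf-root chain $C'$ at end of iteration $i-1$ by inserting, for each pointer $p \in C$ produced by a merge in iteration $i$, the merge node $\pred_p$ between the two endpoints of $p$; pointers unchanged from $\aps_{i-1}$ (re-added via step~3a) need no insertion. The definition of $\merge$ guarantees that every inserted adjacent pair corresponds to a pointer in $\aps_{i-1}$, so $C'$ is a valid leaf-root chain at end of iteration $i-1$.

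By \Cref{cor:niceblocks}, $B_{i,j}$ is the disjoint union of $\new(B_{i,j})$ and the blocks $B_{i-1,j'}$ with $j' \in \prev(k_{i,j})$, so
\[
|C \cap B_{i,j}| = |C \cap \new(B_{i,j})| + \sum_{j' \in \prev(k_{i,j})} |C \cap B_{i-1,j'}|\enspace .
\]
The first term is bounded by $|\new(B_{i,j})|$. The key ``halving claim'' is that $|C \cap B_{i-1,j'}| \leq 1/2 \cdot |C' \cap B_{i-1,j'}|$ for every $j' \in \prev(k_{i,j})$; combining with the inductive hypothesis $|C' \cap B_{i-1,j'}| \leq k_{i-1,j'}$ and summing yields $|C \cap B_{i,j}| \leq k_{i,j}$, as required.

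The main obstacle is the halving claim. Enumerate the nodes of $C' \cap B_{i-1,j'}$ in the order they appear along $C'$ as $x_1, \ldots, x_m$, and let $x_0, x_{m+1}$ be their immediate $C'$-neighbors (which lie outside $B_{i-1,j'}$). Each $x_j$ has an incoming pointer from $C'$ and, being in a block, has graph-degree $2$ in $T_{i-1}$; since a $3$-node necessarily has graph-degree $\geq 3$ (it has at least two active children along different edges), \Cref{def:123node} forces each $x_j$ to be a label $2$-node at end of iteration $i-1$, and hence each $x_j$ performs a merge in iteration $i$ producing $(x_{j-1}, x_{j+1}) \in \aps_i$. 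Observing that label $2$-nodes always have graph-degree $2$ (so they lie in blocks), while pointers in $\aps_{i-1}$ cannot cross block boundaries (any separator would have to be a $3$-node, forbidden in the interior of a pointer's path by \Cref{prop:three}), one sees that $x_0$ cannot lie in any block and so is either a leaf of $G'$ or a $3$-node---in particular not a label $2$-node---so $x_0 \in C$. A direct verification then shows that whenever $2k+1 \leq m$, the unique outgoing pointer of $x_{2k}$ in $\aps_i$ is $(x_{2k}, x_{2k+2})$: the old pointer $(x_{2k}, x_{2k+1})$ is consumed by $x_{2k+1}$'s merge and not re-added via step~3a (because $x_{2k+1}$ is a $2$-node, not a $3$-node or root), and no other operation in iteration $i$ creates a new pointer starting at $x_{2k}$. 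Consequently $C$ visits only the even-indexed $x_{2k}$ within $B_{i-1,j'}$, giving $|C \cap B_{i-1,j'}| \leq \lfloor m/2 \rfloor \leq m/2$ and completing the halving claim.
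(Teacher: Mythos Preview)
Your proposal is correct and follows essentially the same approach as the paper's proof: both argue by induction on $i$, relate the chain $C$ at the end of iteration $i$ to the (unique) chain $C'$ from $c_0$ at the end of iteration $i-1$, and establish the halving claim by showing that within each old block the $x_j$ are $2$-nodes and that the boundary node $x_0$ is not a $2$-node (hence lies in $C$), forcing $C$ to skip every other $x_j$. The only cosmetic difference is that the paper defines $C'$ directly and observes $C$ is a subsequence, whereas you construct $C'$ from $C$ by inserting merge nodes---these yield the same object by \Cref{prop:atmostone}.
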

\begin{proof}
    We prove the statement by induction in $i$.
    For $i = 0$, the statement trivially holds, by the definition of $k_{0,j}$.
    Now, consider some $i \geq 1$, and assume that the statement holds for $i-1$.
    Let $C = (c_0, \dots, c_y)$ be an arbitrary leaf-root pointer chain at the end of iteration $i$, and let $C' = (c'_0 = c_0, c'_1, \dots, c'_{y'})$ denote the leaf-root pointer chain starting at $c_0$ at the end of iteration $i-1$.
    By the design of the update rules, the definition of the function $\merge(\cdot, \cdot)$, and \Cref{lem:propcombi}, the sequence $C$ is a subsequence of $C'$, i.e., $C$ is obtained from $C'$ by removing elements.
    Furthermore, we observe that any (non-root) degree-$2$ node in $T_{i-1}$ with an incoming active pointer at the end of iteration $i-1$ must be a $2$-node (by the definition of a $2$-node), and any (non-root) node $u$ of degree at least $3$ in $T_{i-1}$ must be a $3$-node at the end of iteration $i-1$ (as, for each child $v$ of $u$ in $T_{i-1}$, there must be an active pointer $(v, u)$, due to \Cref{prop:three} and \Cref{prop:unweird}).
    
    Consider an arbitrary block $B_{i,j}$, and an arbitrary block $B_{i-1,j'} \in \prev(B_{i,j})$.
    Due to the definitions of a pointer and a block, the nodes in $C' \cap B_{i-1,j'}$ form a subsequence of $C'$ consisting of consecutive nodes $c'_p, \dots, c'_q$.
    By the definition of $\merge(\cdot, \cdot)$ and the fact that the nodes in $C'' \coloneqq (c'_p, \dots, c'_q)$ are degree-$2$ nodes in $T_{i-1}$ (and hence $2$-nodes at the end of iteration $T_{i-1}$), we see that for any two consecutive nodes in $C''$, at most one of the nodes is contained in $C$ (by the design of the update rules).
    Moreover, as $\merge$ operations are only ``performed'' by $2$-nodes, any node in $C'$ that is a (non-root) node of degree at least $3$ in $T_{i-1}$ (and hence a $3$-node) will be contained in $C$, which implies that $c'_p$ is not contained in $C$ (as $c'_{p-1}$ is contained in $C$ and points to $c'_{p+1}$ at the end of iteration $i$).
    Hence, we can conclude that for each $B_{i-1,j'}$, we have $|C \cap B_{i-1,j'}| \leq 1/2 \cdot |C' \cap B_{i-1,j'}| \leq 1/2 \cdot k_{i-1,j'}$.
    
    By \Cref{cor:niceblocks}, we have
    \[
        B_{i,j} = \new(B_{i,j}) \cup \bigcup_{B \in \prev(B_{i,j})} B \enspace,
    \]
    which yields
    \[
        \left|C \cap B_{i,j}\right| \leq \left|\new(B_{i,j})\right| + \sum_{B \in \prev(B_{i,j})} \left|C \cap B\right| \leq \left|\new(B_{i,j})\right| + \sum_{j' \in \prev(k_{i,j})} \left(1/2 \cdot k_{i-1,j'}\right) = k_{i,j}
    \]
    as desired.
\end{proof}

In order to bound the number of iterations in Phase I, we will make use of a potential function argument.
Recall that $z_i$ denotes the number of blocks of $T_i$.
For each $i \geq 0$ (such that Phase I has not terminated after $i - 1$ iterations), set $\Phi_i \coloneqq \Phi'_i + \Phi''_i$, where $\Phi'_i$ is the number of leaves in $T_i$, and $\Phi''_i \coloneqq \sum_{1 \leq j \leq z_i} k_{i,j}$.

\begin{lemma}\label{lem:calc}
    Consider any iteration $i \geq 1$ such that Phase I does not terminate at the end of iteration $i$ or $i+1$.
    Then $\Phi_{i+1} \leq 7/8 \cdot \Phi_{i-1}$.
\end{lemma}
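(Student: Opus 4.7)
The plan is to unroll the defining recursion for $k_{i,j}$ twice in order to control $\Phi''_{i+1}$ by $\Phi''_{i-1}$ together with two ``new block node'' counts, and then to tie those counts to the decrease in $\Phi'$. First I would sum the identity $k_{i,j} = |\new(B_{i,j})| + \tfrac{1}{2} \sum_{j' \in \prev(k_{i,j})} k_{i-1,j'}$ over $j$. By \Cref{cor:niceblocks} each old block lies in at most one new block, and together with \Cref{lem:leaves,lem:noin23} (which forbid $2$-nodes becoming inactive and forbid new leaves appearing from non-leaves) this implies every $B_{i-1,j'}$ lies in exactly one $B_{i,j}$. The double sum therefore collapses to $\Phi''_{i-1}$, yielding $\Phi''_i = N_i + \tfrac{1}{2} \Phi''_{i-1}$ with $N_i := \sum_j |\new(B_{i,j})|$, and iterating once more gives $\Phi''_{i+1} = N_{i+1} + \tfrac{1}{2} N_i + \tfrac{1}{4} \Phi''_{i-1}$.

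Next I would identify $N_i$ structurally. Since $2$-nodes remain $2$-nodes (\Cref{lem:noin23}) and leaves remain leaves (\Cref{lem:leaves}) between $T_{i-1}$ and $T_i$, any node counted by $\new(B_{i,j})$ must have been a $3$-node of $T_{i-1}$ that dropped to degree $2$ in $T_i$. For such a transitioning $3$-node $u$ of degree $d \geq 3$, update rule~3 forces $u$ to lose at least $d - 2 \geq 1$ relevant in-edges in iteration $i$, and each lost edge is triggered by the removal of the outgoing active pointer of some leaf of $T_{i-1}$ sitting in the corresponding discarded subtree; that leaf then becomes inactive in iteration $i$. Distinct transitioning $3$-nodes are blamed for disjoint sets of leaves, so $N_i \leq \Phi'_{i-1} - \Phi'_i$ and analogously $N_{i+1} \leq \Phi'_i - \Phi'_{i+1}$. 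Combined with $\Phi'_{i+1} \leq \Phi'_i \leq \Phi'_{i-1}$ from \Cref{lem:leaves}, this already yields
\[
    \Phi_{i+1} \leq \Phi'_{i+1} + N_{i+1} + \tfrac{1}{2} N_i + \tfrac{1}{4} \Phi''_{i-1} \leq \tfrac{1}{2} \Phi'_{i-1} + \tfrac{1}{2} \Phi'_i + \tfrac{1}{4} \Phi''_{i-1}.
\]

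The final step is the algebraic reduction to the factor $\tfrac{7}{8}$, and this is where I expect the main obstacle. In the block-dominated regime $\Phi'_{i-1} \leq 5 \Phi''_{i-1}$, the bound $\Phi_{i+1} \leq \Phi'_{i-1} + \tfrac{1}{4} \Phi''_{i-1} \leq \tfrac{7}{8}(\Phi'_{i-1} + \Phi''_{i-1})$ drops out immediately. The delicate regime is the leaf-dominated one $\Phi'_{i-1} > 5 \Phi''_{i-1}$, where the trivial bound $\Phi'_i \leq \Phi'_{i-1}$ is not tight enough. Here I would argue that a constant fraction of leaves of $T_{i-1}$ must be inactivated already in iteration $i$: every leaf surviving iteration $i$ either has its outgoing pointer ending at a $2$-node (whose total population is controlled by \Cref{lem:leafroot} and hence essentially by $\Phi''_{i-1}$), or is the at-most-one leaf per ``all-yes'' $3$-node spared by the flipping rule in update step~3. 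Quantifying these two survivor populations and substituting the resulting sharper bound $\Phi'_i \leq c \cdot \Phi'_{i-1}$, for a sufficiently small constant $c$, into the displayed inequality above should close the gap to exactly $\tfrac{7}{8}$. Making this structural leaf-counting tight, while correctly accounting for the interaction between the flipping rule, the $\comp$-bookkeeping, and the evolving tree structure across the two iterations, is the most intricate part of the argument.
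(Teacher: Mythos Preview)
Your plan is correct and follows essentially the same approach as the paper. The two key ingredients---the recursion $\Phi''_{i} \leq \tfrac{1}{2}\Phi''_{i-1} + N_i$ with $N_i \leq \Phi'_{i-1} - \Phi'_i$, and the ``survivor'' argument bounding $\Phi'_{i}$ via leaves whose pointer ends in a $2$-node versus a $3$-node/root---are exactly what the paper uses. Your organisation differs only cosmetically: you unroll the $\Phi''$-recursion twice up front and then case-split on $\Phi'_{i-1} \lessgtr 5\Phi''_{i-1}$, whereas the paper keeps a single-step recursion and splits on $\Phi''_{i-1} \gtrless \tfrac{1}{3}\Phi'_{i-1}$, obtaining $\Phi_i \leq \tfrac{7}{8}\Phi_{i-1}$ in one case and $\Phi_{i+1} \leq \tfrac{5}{6}\Phi_{i-1}$ in the other.

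On the part you flag as uncertain: the paper's survivor bound is simply $\Phi'_{i} \leq \tfrac{1}{2}(\Phi'_{i-1} + \Phi''_{i-1})$. The set $\stays(i-1)$ of leaves pointing to a $2$-node has size at most $\Phi''_{i-1}$, because distinct such leaves point into distinct blocks (by \Cref{prop:three}) and each such block contributes $k_{i-1,j} \geq 1$ (via \Cref{lem:leafroot}); and of the remaining leaves (those pointing to a $3$-node or the root) at least half are deactivated, since each such target keeps at most one ``yes'' in-edge alive and has at least two relevant in-edges. Plugging $\Phi'_i \leq \tfrac{1}{2}(\Phi'_{i-1} + \Phi''_{i-1})$ into your displayed bound in the leaf-dominated regime immediately closes the gap, so no further ``intricate'' accounting is needed.
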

\begin{proof}
    For $i' \in \{ i-1, i \}$, let $X_j$ denote the number of leaves that are contained in $T_{i'}$, but not in $T_{i'+1}$.
    By \Cref{lem:leaves}, $X_{i'} = \Phi'_{i'} - \Phi'_{i'+1}$.
    
    By the definition of the $k_{a,j}$ and $\new(B_{a,j})$ (as well as \Cref{cor:niceblocks}), we have $\Phi''_{i'+1} \leq 1/2 \cdot \Phi''_{i'} + |\new(i'+1)|$, where $\new(i'+1)$ denotes the set of (non-root) nodes in $T_{i'+1}$ that have degree $2$ in $T_{i'+1}$ but not in $T_{i'}$.
    Recall (from the proof of \Cref{lem:leafroot}) that any (non-root) node of degree at least $3$ in $T_{i'}$ must be a $3$-node at the end of iteration $i'$, and observe that any leaf in $T_{i'}$ must be a $1$-node at the end of iteration $i'$.
    By the design of the update rules, it follows that to any node $u$ from the set $\new(i'+1)$, we can assign a leaf $f_u$ of $T_{i'}$ such that there is an active pointer $(f_u,u)$ at the end of iteration $i'$, and $f_u$ becomes inactive at the end of iteration $i'+1$.
    As $f_u \neq f_{u'}$ for any two nodes $u \neq u'$ from $\new(i'+1)$ (due to \Cref{prop:atmostone}), we obtain $|\new(i'+1)| \leq X_{i'}$, which implies
    \[
        \Phi''_{i'+1} \leq 1/2 \cdot \Phi''_{i'} + X_{i'} \enspace.
    \]
    As the next step, we bound $\Phi'_{i'+1}$ in terms of $\Phi''_{i'}$.
    Let $\stays(i')$ be the set of all leafs $u$ of $T_{i'}$ such that at the end of iteration $i'$ the active pointer starting in $u$ does not end in a $3$-node or the root.
    Since all (non-root) nodes that have degree at least $3$ in $T_{i'}$ are $3$-nodes at the end of iteration $i'$ (as already observed above), any node $u \in \stays(i')$ must point to some (non-root) degree-$2$ node $f_u$ in $T_{i'}$.
    For any two distinct nodes $u, u' \in \stays{i'}$, the nodes $f_u$ and $f_{u'}$ must lie in different blocks of $T_{i'}$ (due to \Cref{prop:three}), and each block $B_{i',j}$ containing such a node $f_u$ must satisfy $k_{i',j} \geq 1$ (as the leaf-root pointer chain starting in $u$ contains at least one node of $B_{i',j}$, namely $f_u$).
    Hence, $|\stays(i')| \leq \Phi''_{i'}$.
    Moreover, the design of the update rules ensures that out of all the leaves in $T_{i'}$ that point to a $3$-node or the root at the end of iteration $i'$, at least half will become inactive at the end of iteration $i'+1$.
    Thus, we obtain
    \[
       \Phi'_{i'+1} \leq |\stays(i')| + 1/2 \cdot \left(\Phi'_{i'} - |\stays(i')|\right) = 1/2 \cdot \left( \Phi'_{i'} + |\stays(i')| \right) \leq 1/2 \cdot \left( \Phi'_{i'} + \Phi''_{i'} \right) \enspace.
    \]
    
    To finish our calculations, we consider two cases.
    Let us first consider the case that $\Phi''_{i-1} \geq 1/3 \cdot \Phi'_{i-1}$, which implies $\Phi'_{i-1} \leq 3/4 \cdot \Phi_{i-1}$.
    Then, using the equations and inequalities derived above, we obtain
    \begin{align*}
        \Phi_i = \Phi'_i + \Phi''_i &\leq \left( \Phi'_{i-1} - X_{i-1} \right) + \left( 1/2 \cdot \Phi''_{i-1} + X_{i-1} \right)\\
        &= \Phi'_{i-1} + 1/2 \cdot \Phi''_{i-1} = 1/2 \cdot \Phi_{i-1} + 1/2 \cdot \Phi'_{i-1} \leq 7/8 \cdot \Phi_{i-1} \enspace.
    \end{align*}
    Similarly to above, we see that
    \[
        \Phi_{i+1} \leq \Phi'_i + 1/2 \cdot \Phi''_i \leq \Phi_i \enspace,
    \]
    which implies
    \[
        \Phi_{i+1} \leq 7/8 \cdot \Phi_{i-1} \enspace.
    \]
    
    Now, consider the case that $\Phi''_{i-1} < 1/3 \cdot \Phi'_{i-1}$.
    Using the inequalities derived earlier, we obtain
    \[
        \Phi'_i \leq 1/2 \cdot \left( \Phi'_{i-1} + \Phi''_{i-1} \right) \leq 2/3 \cdot \Phi'_{i-1}
    \]
    and
    \[
        \Phi''_i \leq 1/2 \cdot \Phi''_{i-1} + X_{i-1} \leq \Phi''_{i-1} + \Phi'_{i-1} - \Phi'_i \enspace.
    \]
    Similarly to the previous case, we see that
    \begin{align*}
        \Phi_{i+1} \leq \Phi'_i + 1/2 \cdot \Phi''_i \leq &\Phi'_i + 1/2 \cdot \left( \Phi''_{i-1} + \Phi'_{i-1} - \Phi'_i \right)\\
        &= 1/2 \cdot \Phi''_{i-1} + 1/2 \cdot \left( \Phi'_{i-1} + \Phi'_i \right)\\
        &\leq 1/2 \cdot \Phi''_{i-1} + 5/6 \cdot \Phi'_{i-1} \leq 5/6 \cdot \Phi_{i-1} \enspace.
    \end{align*}
    Hence, in both cases, we have $\Phi_{i+1} \leq 7/8 \cdot \Phi_{i-1}$, as desired.
\end{proof}

Using \Cref{lem:calc}, we are finally able to bound the number of iterations in Phase I.

\begin{lemma}\label{lem:runtimephaseone}
    Phase I terminates after $O(\log n)$ iterations.
\end{lemma}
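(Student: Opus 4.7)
The plan is to use the potential function $\Phi_i$ together with \Cref{lem:calc} to obtain geometric decay, and then argue that small potential forces Phase I to terminate. I will proceed in three short steps.

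First, I would bound the initial potential: $\Phi_0 = \Phi'_0 + \Phi''_0$, where $\Phi'_0$ is the number of leaves in $T_0 = G'$, hence at most $n$, and $\Phi''_0 = \sum_j k_{0,j} = \sum_j |B_{0,j}|$, which is at most the number of non-root degree-$2$ nodes in $G'$, also at most $n$. Thus $\Phi_0 \leq 2n$. Next, applying \Cref{lem:calc} iteratively, for every pair of iterations the potential shrinks by a factor of $7/8$: for the subsequence $\Phi_0, \Phi_2, \Phi_4, \dots$ we get $\Phi_{2k} \leq (7/8)^k \Phi_0$, and the odd-indexed subsequence decays analogously. Hence there exists some $i^* \in O(\log n)$ such that $\Phi_{i^*} < 1$.

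The key step is to convert ``$\Phi_i < 1$'' into termination. Observe that $\Phi_i' \in \mathbb{Z}_{\geq 0}$ counts leaves of $T_i$, so $\Phi_i < 1$ forces $\Phi_i' = 0$, i.e., $T_i$ has no non-root leaves. Since the root $r$ is always active (by \Cref{lem:propcombi}), the only way for $T_i$ to lack non-root leaves is that $T_i$ consists of the root alone (or is empty); in particular, $T_i$ contains no non-root degree-$2$ nodes either, so there are no blocks and $\Phi_i'' = 0$ as well, consistent with $\Phi_i < 1$. By \Cref{prop:atmostone}, only non-root active nodes can have outgoing active pointers, so $\aps_{i^*} = \emptyset$, and Phase I has terminated by definition at the end of iteration $i^*$.

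The only subtle point is that \Cref{lem:calc} requires Phase I to not terminate during the two iterations in question; but this is no obstacle, since the statement we want to prove is exactly termination, so we may assume for contradiction that Phase I does not terminate within the first $c \log n$ iterations (for a suitable constant $c$) and derive $\Phi_{i^*} < 1$ along with non-termination, contradicting the argument above. The main obstacle is therefore purely bookkeeping---ensuring the two-step recursion from \Cref{lem:calc} is applied cleanly on both parities---rather than any new structural insight, since all the heavy lifting has already been done in \Cref{lem:leafroot,lem:calc}.
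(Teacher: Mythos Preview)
Your proposal is correct and follows essentially the same approach as the paper: bound $\Phi_0 \leq 2n$, apply \Cref{lem:calc} to obtain geometric decay, conclude $\Phi_{i^*} < 1$ for some $i^* \in O(\log n)$, and then argue that $\Phi'_{i^*} = 0$ forces $T_{i^*}$ to have no non-root leaves and hence no active pointers. One small quibble: the fact that the root has no outgoing active pointer is not a consequence of \Cref{prop:atmostone} (which only bounds the number of outgoing active pointers per node by one) but is guaranteed directly by the algorithm's initialization and update rules; this does not affect the correctness of your argument.
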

\begin{proof}
    Suppose for a contradiction that there is no constant $c$ such that Phase I always terminates after at most $c \cdot \log n$ iterations.
    Observe that $\Phi_0 = \Phi'_0 + \Phi''_0 \leq n + n = 2n$.\footnote{Note that the compatibility tree $G'$ has actually only $O(n/\log n)$ nodes, by \Cref{cor:fewernodes}, but upper bounding this by $n$ suffices.}
    By \Cref{lem:calc}, there exists some constant $c$ such that $\Phi_{c \cdot \log n} < 1$.
    By the definition of $\Phi_i$, it follows that the tree $T_{c \cdot \log n}$ of active nodes obtained after $c \cdot \log n$ iterations does not contain any leaves (apart from, potentially, the root).
    This implies that there is no active pointer after $c \cdot \log n$ iterations, which implies that Phase $1$ terminates after at most $c \cdot \log n$ iterations, yielding a contradiction.
\end{proof}

\paragraph{Phase II}\label{sec:phasetwo}

Let $\ps_{\fin}$ denote the set of pointers at the end of the last iteration of Phase I.
In Phase II, we will go through the pointers of some subset of $\ps_{\fin}$ in some order and ``fix'' them, i.e., for each such pointer $p = (u, v)$ we assign to the two half-edges $(u, \first_p)$ and $(v, \last_p)$ a label from $\sout$ each.
In order to describe the order in which we process the pointers, we group the pointers we want to process into sets $\timey(1), \timey(2), \dots$.
We will process each of the pointers in set $\timey(i)$ in parallel in iteration $i$.

Define $\timey(1)$ to be the set of all pointers $p = (u, \rooot) \in \ps_{\fin}$ for which $u$ is a leaf.
For each $i \geq 2$, define $\timey(i)$ to be the set of all pointers $p'$ such that there exists a pointer $p = (u, v) \in \timey(i-1)$ satisfying
\begin{enumerate}
    \item $p' = (u, \pred_p)$,
    \item $p' = (\pred_p, v)$, or
    \item $p' = (w, \pred_p)$ where $w$ is a leaf and the edge $\last_{p'}$ does not lie on the path from $u$ to $v$.
\end{enumerate}

In the following, we collect some insights about the pointers in $\timey(i)$.
We start with a helper lemma that provides information about the leaf-root pointers produced in Phase I.

\begin{lemma}\label{lem:attheroot}
    For each edge $e$ incoming to the root $\rooot$, there is precisely one pointer $p = (u, \rooot) \in \ps_{\fin}$ such that $u$ is a leaf and $\last_p = e$.
\end{lemma}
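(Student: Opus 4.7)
I will prove existence and uniqueness separately.

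For \emph{existence}, my plan is an induction on iteration $i$: as long as no leaf pointer $(u, \rooot)$ with $u$ a leaf and $\last_p = e$ has appeared in any $\aps_k$ for $k < i$, the set $\aps_i$ contains at least one active pointer (not necessarily a leaf pointer) ending at $\rooot$ via $e$. The base case $i = 0$ holds because the initial pointer $(v_j, \rooot) \in \aps_0$ has $\last_p = e$, where $v_j$ denotes the child of $\rooot$ via $e$. For the inductive step, any such pointer is either extended via a merge at a 2-node in step~2 (yielding a longer pointer with the same $\last_p = e$ by the definition of $\merge$), or re-added to $\aps_i$ via step~3(a) at $\rooot$ (since the answer for $e$ is ``no'' by the inductive hypothesis); in both cases, $\aps_i$ still contains such a pointer. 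Since Phase I terminates with $\aps_\fin = \emptyset$ by \Cref{lem:runtimephaseone}, the invariant must break at some iteration $i^*$, which can only occur if a leaf pointer with $\last_p = e$ first appears in $\aps_{i^*}$. This leaf pointer is then contained in $\ps_{i^*} \subseteq \ps_\fin$.

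For \emph{uniqueness}, suppose for contradiction that there exist two distinct leaf pointers $p_k = (u_k, \rooot) \in \ps_\fin$ for $k \in \{1,2\}$ with $\last_{p_k} = e$ and $u_1 \neq u_2$. Let $y$ be the LCA of $u_1, u_2$ in $G'$. Since both paths to $\rooot$ end with edge $e$ and thus pass through $v_j$, the node $y$ lies in the subtree rooted at $v_j$; in particular $y \neq \rooot$, and since $u_1, u_2$ are distinct leaves, $y \neq u_1, u_2$. Hence $y$ is strictly interior to both paths in $G'$ and has two distinct children in $G'$ on the paths to $u_1$ and $u_2$, respectively. For each $k$, the pointer $(u_k, \rooot)$ was built up through a sequence of merges, and consequently $u_k$'s pointer must have been extended past $y$ at some iteration; by tracing the merge structure this requires a merge at $y$ at some iteration where $y$ was an active 2-node. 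At this iteration, $y$'s relevant in-edge must be $(c_{u_k}, y)$, where $c_{u_k}$ is the child of $y$ on the path to $u_k$ in $G'$: had $u_k$ been in a different subtree of $y$, then $u_k$'s pointer ending in $y$ would have been removed during $y$'s resolution from 3-node to 2-node, leaving $u_k$ inactive with no hope of producing a pointer in $\ps_\fin$ reaching $\rooot$. By the argument in the proof of \Cref{obs:undec}, a 2-node's relevant in-edge does not change over time, so $c_{u_1} = c_{u_2}$, contradicting $y$ being the LCA of $u_1, u_2$ in $G'$.

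The main obstacle in the uniqueness part is the asynchronous nature of Phase I: $p_1$ and $p_2$ need not coexist in $\aps$ at any single iteration (indeed, step~3 at $\rooot$ typically removes a leaf pointer from $\aps$ in the iteration following its first appearance, so the iterations at which $p_1$ and $p_2$ are first added to $\aps$ generally differ). This prevents a direct application of the single-iteration LCA argument combined with \Cref{prop:three}. My plan circumvents this by appealing to the fixedness of a 2-node's relevant in-edge across iterations (a structural invariant from \Cref{obs:undec}'s proof), which decouples the merges involved in extending $p_1$ and $p_2$ and yields a contradiction even when these extensions occur at different iterations.
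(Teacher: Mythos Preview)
Your existence argument is essentially the paper's: an active pointer ending at $\rooot$ via $e$ persists (extended by merges or kept by step~3(a) at $\rooot$) until a leaf pointer via $e$ appears, and since Phase I terminates with an empty active set, such a leaf pointer must eventually appear.

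Your uniqueness argument takes a different route. The paper's argument is shorter: let $i$ be the first iteration at which a leaf pointer $(u,\rooot)$ with $\last_p = e$ is in $\aps_i$; by \Cref{prop:unweird} (two active pointers sharing an edge have all four endpoints on one directed path), there can be only one such leaf pointer at iteration $i$, since two distinct leaves are never comparable. In iteration $i+1$, step~3 at $\rooot$ sees answer ``yes'' for $e$ and drops all active pointers via $e$; one also checks (via \Cref{prop:unweird} and \Cref{lem:propcombi}) that no $2$-node with outgoing pointer to $\rooot$ via $e$ can coexist in $\aps_i$ with the leaf pointer, so no merge in iteration $i+1$ manufactures a new pointer via $e$ either. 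Hence $\aps_{i+1}$ contains no pointer ending at $\rooot$ via $e$, and none is ever added thereafter.

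Your LCA approach can be made to work, but as written it has a gap. You assert that extending $u_k$'s pointer past $y$ ``requires a merge at $y$,'' and justify it by reasoning about ``$u_k$'s pointer ending in $y$.'' But $u_k$'s active pointer need not ever end at $y$: it can jump over $y$ in a single iteration via a merge at some strict descendant $w$ of $y$ whose outgoing active pointer $(w,z)$ already reaches a strict ancestor $z$ of $y$. In that scenario there is no merge at $y$ on $u_k$'s pointer, and your sentence ``$u_k$'s pointer ending in $y$ would have been removed during $y$'s resolution'' simply does not apply, because $u_k$'s pointer was never at $y$ when $y$ resolved. What is true, and suffices for your strategy, is a minimality argument: take the first iteration $i^*$ at which \emph{any} active pointer starting in the $c_{u_k}$-subtree of $y$ ends at a strict ancestor of $y$; that pointer must be produced in iteration $i^*$ by a merge, and by minimality the merging $2$-node cannot lie strictly below or strictly above $y$, so it is $y$ itself, forcing $y$'s relevant in-edge at that moment to be $(c_{u_k},y)$. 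With this patch your fixed-in-edge contradiction goes through. The resulting proof is correct but heavier than the paper's direct single-iteration use of \Cref{prop:unweird}.
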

\begin{proof}
    Fix an arbitrary edge $e$ incoming to the root $\rooot$.
    In the beginning of Phase I, there is an active pointer $p'$ that ends in $\rooot$ and satisfies $\last_{p'} = e$.
    Due to the design of the update rules for Phase I, this can only change once such a pointer that additionally starts in a leaf has been added to the pointer set.
    As at the end of Phase I, there is no active pointer left, it follows that there is at least one pointer $p$ that ends in $\rooot$ and satisfies $\last_{p} = e$.
    
    In order to show that there is at most such pointer, consider the first iteration $i$ in which such a pointer appeared in the set of active pointers (and therefore also in the set of pointers).
    By \Cref{prop:unweird}, there can only be one such pointer at the end of iteration $i$.
    Moreover, due to the design of the update rules, no pointer $p''$ satisfying $\last_{p''} = e$ is added to the set of pointers in any later iteration (as no active pointer $p'''$ satisfying $\last_{p'''} = e$ remains at the end of iteration $i+1$).
    It follows that for edge $e$, there is precisely one pointer as described in the lemma.
\end{proof}

We continue by proving a lemma that highlights which pointers in $\timey(i)$ are ``produced'' by some pointer in $\timey(i-1)$.

\begin{lemma}\label{lem:nicesplit}
    Let $p = (u,v)$ be a pointer in $\ps_{\fin}$ with $\pred_p \neq \bot$.
    Then $\pred_p$ has degree at least $2$ in $G'$.
    Moreover,
    \begin{enumerate}
        \item if $\pred_p$ has degree $2$, then $(u, \pred_p), (\pred_p, v) \in \ps_{\fin}$, and
        \item if $\pred_p$ has degree at least $3$, then $(u, \pred_p), (\pred_p, v) \in \ps_{\fin}$, and for each edge $e$ incoming at $\pred_p$ that does not lie on the path from $u$ to $v$, there is exactly one pointer $p' = (w, \pred_p) \in \ps_{\fin}$ such that $w$ is a leaf and $\last_{p'} = e$.
    \end{enumerate}
\end{lemma}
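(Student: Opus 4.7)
The plan is to unpack the three assertions in order, exploiting that $p$ was produced by an explicit merge at $\pred_p$ and invoking the structural invariants of Phase~I.

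Since $\pred_p \neq \bot$, pointer $p$ was created by a merge, so by \Cref{def:merge} there is an iteration $i_0$ at which $\pred_p$ is an active $2$-node with $(u, \pred_p), (\pred_p, v) \in \aps_{i_0 - 1}$ and $p = \merge((u, \pred_p), (\pred_p, v))$. Monotonicity of the pointer set ($\ps_i \subseteq \ps_{i+1}$) then places both $(u, \pred_p)$ and $(\pred_p, v)$ into $\ps_{\fin}$. Any $2$-node has both an incoming and an outgoing active pointer, so $\pred_p$ has incident incoming and outgoing edges in $G'$ and hence $\deg_{G'}(\pred_p) \geq 2$. This finishes the degree claim, the whole of case~1, and the first half of case~2.

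For the existence part of case~2, suppose $\deg_{G'}(\pred_p) \geq 3$. Then $\pred_p$ starts Phase~I as an active $3$-node, so by \Cref{lem:noin23} reaching the $2$-node state of iteration $i_0$ requires step~3 to be invoked at $\pred_p$ beforehand. At the moment $\pred_p$ becomes a $2$-node, its unique remaining relevant in-edge equals $\last_{(u, \pred_p)}$ and thus lies on the path from $u$ to $v$. Each in-edge $e$ not on that path must therefore have been resolved earlier via the ``yes'' branch~(b) of step~3, which is triggered exactly by the presence of an active leaf-pointer $(w, \pred_p)$ with $\last = e$; by monotonicity this pointer lies in $\ps_{\fin}$.

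The main obstacle is uniqueness, which I plan to settle via a lowest-common-ancestor argument. Assume for contradiction that two distinct leaves $w_1 \neq w_2$ of $G'$, both in the subtree of $\pred_p$ across $e$, yield leaf-pointers $(w_1, \pred_p)$ and $(w_2, \pred_p)$ in $\ps_{\fin}$ with last edge $e$, and let $a$ be their LCA in $G'$. Then $a$ has two disjoint child-subtrees (toward $w_1$ and toward $w_2$) plus an edge to its parent, so $\deg_{G'}(a) \geq 3$ and $a$ starts Phase~I as an active $3$-node. At each iteration during which either leaf-pointer is active, its path passes through $a$ as an interior node, so by \Cref{prop:three} node $a$ is not a $3$-node then; hence by \Cref{lem:noin23}, $a$ has previously transitioned to a $2$-node. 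Once an in-edge of $a$ is resolved no new active pointer can appear through it (the severed child and its entire subtree become inactive by \Cref{lem:propcombi}(a) and remain so by \Cref{prop:state}), so the relevant in-edge of the $2$-node $a$ is locked in and lies on the side of exactly one of $w_1, w_2$. The opposite child-edge of $a$ was resolved while $a$ was still a $3$-node via step~3's ``yes'' branch, rendering the opposite subtree, and the opposite leaf, inactive forever. That leaf can therefore never start an active pointer ending in $\pred_p$, contradicting membership of the corresponding leaf-pointer in $\ps_{\fin}$.
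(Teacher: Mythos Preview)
Your argument for the degree bound, for $(u,\pred_p),(\pred_p,v)\in\ps_{\fin}$, and for the \emph{existence} of a leaf-pointer through each off-path in-edge of $\pred_p$ is essentially the paper's.

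For \emph{uniqueness} you take a genuinely different route. The paper recycles the proof of \Cref{lem:attheroot}: at the first iteration in which a leaf-pointer $(w,\pred_p)$ with $\last=e$ becomes active, \Cref{prop:unweird} forces it to be unique (two such pointers would place two distinct leaves on a common directed path), and in the very next iteration step~3(b) at $\pred_p$ clears all active pointers with last edge $e$, so no further one can ever be created. The only extra wrinkle compared to the root case is that active pointers through $e$ could in principle vanish via a merge at $\pred_p$ rather than via step~3; the paper dispatches this by noting that merges at $\pred_p$ occur in a single iteration and only along its relevant in-edge, which lies on the $u$--$v$ path and hence is not $e$. This is shorter and uses only \Cref{prop:unweird} plus the update rules. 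Your LCA argument is more structural and also works, but leans on more of the Phase~I invariants.

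One step in your uniqueness argument needs more than the lemma you cite. You write that once an in-edge $e'$ of $a$ is resolved, ``the severed child and its entire subtree become inactive by \Cref{lem:propcombi}(a)''. \Cref{lem:propcombi}(a) gives you the subtree only after you know the child $c'$ itself is inactive, and that is not immediate: you need that at the resolution step $a$ is still a $3$-node, so by \Cref{prop:three} (and \Cref{lem:propcombi}(b)) the outgoing active pointer of $c'$ must end at $a$, and this pointer is not carried into the new active set (step~3 answers ``yes'' on $e'$, and step~2 cannot recreate it since $a$ is not a $2$-node). Also, your phrasing ``lies on the side of exactly one of $w_1,w_2$'' overlooks the harmless case in which the surviving relevant in-edge of $a$ is a third edge, whence both sides are resolved and the contradiction is only stronger. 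With these two points filled in, your argument goes through.
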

\begin{proof}
    Since $\pred_p \neq \bot$, the pointer $p$ must be the result of a merge operation, which, by the definition of $\merge(\cdot,\cdot)$, implies that $\pred_p$ is an internal vertex of the path from $u$ to $v$.
    Hence, $\deg(\pred_p) \geq 2$.
    
    First, consider the case that $\deg(\pred_p) = 2$.
    From the design of the update rules of Phase I and the definition of $\merge(\cdot,\cdot)$, it follows directly that at some point during Phase I, there must have existed active pointers $(u, \pred_p), (\pred_p, v)$.
    This implies $(u, \pred_p), (\pred_p, v) \in \ps_{\fin}$.
    
    Now, consider the case that $\deg(\pred_p) \geq 3$.
    Analogously to the previous case, we obtain $(u, \pred_p), (\pred_p, v) \in \ps_{\fin}$.
    Now what is left to be shown follows from an analogous argumentation to the one provided in the proof of \Cref{lem:attheroot}, with only one difference: for the considered edge $e$ incoming at $\pred_p$, it could also be the case that, at the point in Phase I where the property that there is an active pointer $p''$ that ends in $\pred_p$ and satisfies $\last_{p''} = e$ becomes false, this happens due to step~\ref{step2}, and not due to step~\ref{step3}, of the update rules.
    However, observe that any vertex $x$ can ``perform'' merge operations in at most one iteration during Phase I (where the node considered to perform a merge operation is the node $\pred_{p'''}$ where $p'''$ is the pointer created during the merge operation) since, by the design of the update rules, the merge operations of that iteration will make sure that no active pointer that ends in $x$ remains (which cannot change thereafter).
    Observe further that for $x = \pred_p$, each of those merge operations must have merged two pointers where the one incoming at $\pred_p$ (let us call it $q$) satisfies $\last_q = e'$ where $e'$ is the edge incoming at $\pred_p$ that lies on the path from $u$ to $v$.
    Hence, the aforementioned difference only applies to pointers $q$ ending at $\pred_p$ satisfying $\last_q = e'$, and since the lemma statement only concerns pointers $p'$ with $\last_{p'} \neq e'$, that difference is irrelevant, and we can simply apply the argumentation from the proof of \Cref{lem:attheroot}.
\end{proof}

    

The next lemma shows that the sets $\timey(i)$ yield a partition of the edge set in a natural way.
For this result we need to introduce a bit of notation.
We call a pointer $(u,v)$ such that $(u,v)$ is an edge of $G'$ a \emph{basic} pointer.
Moreover, we denote by $\done(i)$ the set of all basic pointers contained in $\timey(1) \cup \dots \cup \timey(i)$.
For simplicity, also define $\done(0) \coloneqq \emptyset$.
Finally, for any two nodes $u, v$ such that $v$ is an ancestor of $u$, we denote by $\betw(u,v)$ the set of all edges $(w,x)$ such that 1) $(w,x) = (y,v)$, or 2) $y$ is an ancestor of $w$, but $u$ is not an ancestor of $w$, where $y$ is the child of $v$ that lies on the path from $u$ to $v$.
In other words, $\betw(u,v)$ is the set of all edges that can be reached both from $u$ without crossing $v$, and from $v$ without crossing $u$.
For simplicity, for any pointer $p = (u,v)$, we also define $\betw(p) \coloneqq \betw(u,v)$.

\begin{lemma}\label{lem:between}
    Consider any $i \geq 1$, and any edge $e = (u,v) \in E(G')$.
    If $\done(i-1)$ does not contain the pointer $p = (u,v)$, then there is exactly one pointer $(w,x) \in \timey(i)$ such that $e \in \betw(w,x)$.
    If $\done(i-1)$ contains the pointer $p = (u,v)$, then there is no pointer $(w,x) \in \timey(i)$ such that $e \in \betw(w,x)$.
\end{lemma}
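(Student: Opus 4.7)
I will prove the lemma by induction on $i$, strengthening the statement as follows: at the end of iteration $i$, the family $\{\betw(p)\}_{p\in\timey(i)}$ forms a \emph{partition} of $E(G')\setminus\done(i-1)$. This already implies both bullet points: if the basic pointer $(u,v)$ lies in $\done(i-1)$ the edge is missing from the union, and otherwise it lies in $\betw$ of exactly one pointer of $\timey(i)$.

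\textbf{Base case $i=1$.} Here $\done(0)=\emptyset$, and $\timey(1)$ is, by \Cref{lem:attheroot}, in bijection with the edges incoming at $\rooot$ via $p\mapsto\last_p$. For a pointer $p=(w,\rooot)\in\timey(1)$ with $\last_p=e^\star=(y,\rooot)$, the definition of $\betw$ gives that $\betw(p)$ consists of $e^\star$ together with all edges in the subtree rooted at $y$ (the subtree below the leaf $w$ is empty since $w$ is a leaf). Since the subtrees rooted at the children of $\rooot$ partition $E(G')\setminus\{\text{root-edges}\}$ and the root-edges are accounted for by the tops $(y,\rooot)$, the family $\{\betw(p)\}_{p\in\timey(1)}$ partitions $E(G')$.

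\textbf{Inductive step.} Assume the claim for $i-1$ and consider $i\ge 2$. I will do this in two parts. First, for a \emph{basic} pointer $p=(u,v)\in\timey(i-1)$ we have $\pred_p=\bot$, so $p$ spawns no children in $\timey(i)$; and directly from the definition of $\betw$, $\betw(p)=\{(u,v)\}$. Second, for a \emph{non-basic} pointer $p=(u,v)\in\timey(i-1)$ with $\pred_p=m$, I claim the key decomposition
\[
\betw(u,v)\;=\;\betw(u,m)\;\sqcup\;\betw(m,v)\;\sqcup\;\bigsqcup_{e'\in E'}\betw(w_{e'},m),
\]
where $E'$ is the set of edges incoming at $m$ that do \emph{not} lie on the path from $u$ to $v$, and, by \Cref{lem:nicesplit}, $w_{e'}$ is the unique leaf with a pointer $(w_{e'},m)\in\ps_{\fin}$ satisfying $\last_{(w_{e'},m)}=e'$. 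The decomposition is a routine verification from the definition of $\betw$: the path from $u$ to $v$ splits at $m$ into the subpaths $u\!\to\!m$ and $m\!\to\!v$, the ``hanging'' subtrees below nodes strictly between $u$ and $m$ (resp.\ $m$ and $v$) are exactly $\betw(u,m)$ (resp.\ $\betw(m,v)$) modulo the top edges, and the remaining hanging pieces at $m$ itself are precisely the subtrees below the other children of $m$, each of which is $\betw(w_{e'},m)$ because $w_{e'}$ is a leaf. The three types of pointers listed in rules (1)--(3) of the definition of $\timey(i)$ match the three groups in the decomposition exactly; \Cref{lem:nicesplit} guarantees that they actually lie in $\ps_{\fin}$, and so in $\timey(i)$.

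\textbf{Combining.} By the induction hypothesis, $\{\betw(p)\}_{p\in\timey(i-1)}$ partitions $E(G')\setminus\done(i-2)$. Removing the basic pointers of $\timey(i-1)$ (whose betweens are exactly the singleton edges making up $\done(i-1)\setminus\done(i-2)$) and replacing each non-basic pointer by its three groups of children via the decomposition above yields a family whose betweens partition $E(G')\setminus\done(i-1)$; this family is precisely $\timey(i)$. Uniqueness across different parents in $\timey(i-1)$ follows from the IH-partition (disjoint parent betweens give disjoint child betweens), and uniqueness within the same parent follows from the displayed decomposition. This completes the induction and the lemma.

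\textbf{Main obstacle.} The substantive step is verifying the displayed decomposition and matching its three summands to the three clauses in the definition of $\timey(i)$, in particular using \Cref{lem:nicesplit} to cover exactly the edges incoming at $m$ off the path from $u$ to $v$ (and no others) by leaf-root pointers. The rest of the argument is bookkeeping by induction.
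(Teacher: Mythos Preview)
Your proof is correct and follows essentially the same approach as the paper's: both proceed by induction on $i$, use \Cref{lem:attheroot} for the base case, and for the step invoke \Cref{lem:nicesplit} together with the decomposition $\betw(u,v)=\betw(u,m)\sqcup\betw(m,v)\sqcup\bigsqcup_{e'}\betw(w_{e'},m)$ at $m=\pred_p$. Your reformulation as a partition statement on $E(G')\setminus\done(i-1)$ is a slightly cleaner packaging, but the content is identical.
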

\begin{proof}
    We prove the statement by induction in $i$.
    For $i = 1$, we have $\done(i-1) = \emptyset$, so $p \notin \done(i-1)$.
    By \Cref{lem:attheroot}, for each edge $e'$ incident to $\rooot$, there is exactly one pointer $p_{e'} \in \timey(1)$ with $\last_{p_{e'}} = e'$, and $p_{e'}$ is guaranteed to be a leaf-root pointer.
    By the definition of $\betw(\cdot)$, it follows that $e$ is contained in $\betw(p_{e'})$ where $e'$ is the unique edge incident to $\rooot$ that lies on the path connecting $e$ with $\rooot$, and that $e$ is not contained in $\betw(p'')$ for any pointer $p'' \neq p_{e'}$ from $\timey(1)$.
    This covers the base of the induction.
    
    For the induction step assume that the lemma statement holds for $i-1$ (where $i \geq 2$); we aim to show that it then also holds for $i$.
    Consider first the case that $\done(i-1)$ does not contain the pointer $p = (u,v)$.
    Then also $\done(i-2)$ does not contain the pointer $p = (u,v)$, and the induction hypothesis guarantees that there is exactly one pointer $p' = (y,z) \in \timey(i-1)$ such that $e \in \betw(p')$.
    By the definitions of $\timey(i)$ and $\betw(\cdot)$, the only pointers $p'' \in \timey(i)$ that could possibly satisfy $e \in \betw(p'')$ are $(y, \pred_{p'})$, $(\pred_{p'}, z)$, and some $(a, \pred_{p'})$ where $a$ is a leaf and $\last_{(a, \pred_{p'})}$ does not lie on the path from $y$ to $z$.
    Now, \Cref{lem:nicesplit} (together with the definition of $\timey(i)$) guarantees that $e$ is contained in $\betw(p'')$ for exactly one of those possible choices for $p''$, since the sets $\betw(y, \pred_{p'})$, $\betw(\pred_{p'}, z)$, $\betw(b_1, \pred_{p'}), \dots, \betw(b_{\deg(\pred_{p'})-2}, \pred_{p'})$ are pairwise disjoint and their union is $\betw(y, z)$.
    (Here, $b_1, \dots, b_{\deg(\pred_{p'})-2}$ are the starting vertices of the precisely $\deg(\pred_{p'})-2$ pointers $p''' \in \timey(i)$ ending in $\pred_{p'}$ and satisfying that $\last_{p'''}$ does not lie on the path from $y$ to $z$, whose existence is guaranteed by \Cref{lem:nicesplit}.)
    Hence, there is exactly one pointer $(w,x) \in \timey(i)$ such that $e \in \betw(w,x)$, as desired.
    
Now consider the second case, i.e., that $\done(i-1)$ contains the pointer $p = (u,v)$.
    By the induction hypothesis, there is no pointer $p' \neq p$ contained in $\timey(i-1)$ such that $e \in \betw(p')$.
    By the definitions of $\timey(i)$ and $\betw(\cdot)$ (and the fact that $\pred_p = \bot$), it follows that there is no pointer $(w,x) \in \timey(i)$ such that $e \in \betw(w,x)$, as desired.
\end{proof}

For any $i \geq 1$, and any pointer $p = (u,v) \in \timey(i)$, define $\succc(p)$ to be the set of all pointers $p' \in \timey(i+1)$ satisfying
\begin{enumerate}
    \item $p' = (u, \pred_p)$,
    \item $p' = (\pred_p, v)$, or
    \item $p' = (w, \pred_p)$ where $w$ is a leaf and the edge $\last_{p'}$ does not lie on the path from $u$ to $v$.
\end{enumerate}
If $\pred_p = \bot$, set $\succc(p) \coloneqq \emptyset$.
We obtain the following observation.

\begin{observation}\label{obs:summary}
    For any $i \geq 2$, and any pointer $p' \in \timey(i)$, there is exactly one pointer $p \in \timey(i-1)$ such that $p' \in \succc(p)$.
    Moreover, for any $i \geq 1$, and any pointer $p = (u,v) \in \timey(1)$ with $\pred_p \neq \bot$, we have $\succc(p) = \{ (u, \pred_p), (\pred_p, v), p_1, \dots, p_{\deg(\pred_p)-2} \}$ where each $p_j$ is a pointer starting in a leaf, ending in $\pred_p$, and satisfying $\last_{p_j} = e_j$, where $e_1, \dots, e_{\deg(\pred_p)-2}$ are the $\deg(\pred_p)-2$ edges incoming to $\pred_p$ that do not lie on the path from $u$ to $v$. 
\end{observation}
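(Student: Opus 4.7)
The plan is to deduce both parts directly from the structural groundwork already laid in Lemmas~\ref{lem:nicesplit} and~\ref{lem:between}, combined with the definitions of $\timey(i)$ and $\succc(\cdot)$. Existence in the first part is immediate: the defining rules of $\timey(i)$ state that every $p' \in \timey(i)$ arises, via one of the three cases, from some $p \in \timey(i-1)$, and these cases are literally the three cases defining $\succc(p)$. So some $p$ with $p' \in \succc(p)$ exists.

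For uniqueness, I would first record that all three rules defining $\succc(\cdot)$ are \emph{$\betw$-monotone}: if $p' \in \succc(p)$ with $p = (u,v)$, then $\betw(p') \subseteq \betw(p)$. For rules (1) and (2) this is because $u$--$\pred_p$ and $\pred_p$--$v$ are a prefix and a suffix, respectively, of the $u$--$v$ path (the endpoint conventions in the definition of $\betw$ line up correctly). For rule (3), the leaf $w$ descends into $\pred_p$ via an in-edge off the $u$--$v$ path, so every edge in $\betw(w, \pred_p)$ hangs off $\pred_p$ in a direction not traversed by $p$, and hence lies in $\betw(u, v)$. With monotonicity in hand, suppose $p' \in \succc(p_a) \cap \succc(p_b)$ for distinct $p_a, p_b \in \timey(i-1)$. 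Pick any edge $e \in \betw(p')$, which is non-empty because $p'$ has distinct endpoints. Then $e$ lies in both $\betw(p_a)$ and $\betw(p_b)$, which contradicts Lemma~\ref{lem:between} applied at iteration $i-1$: at most one pointer of $\timey(i-1)$ can cover $e$ in its $\betw$-set. (If the basic pointer for $e$ were already in $\done(i-2)$, Lemma~\ref{lem:between} would preclude any $\timey(i-1)$ pointer from covering $e$, yielding the same contradiction.)

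For the second part, I would read off $\succc(p)$ directly from its three defining rules and check that each candidate is both forced to exist and is the only one of its kind. Rules (1) and (2) single out the pointers $(u, \pred_p)$ and $(\pred_p, v)$, both of which lie in $\ps_{\fin}$ by Lemma~\ref{lem:nicesplit}. Rule (3) picks out leaf-pointers ending in $\pred_p$ whose last edge avoids the $u$--$v$ path; the second clause of Lemma~\ref{lem:nicesplit} gives, for each of the $\deg(\pred_p) - 2$ in-edges of $\pred_p$ not on the $u$--$v$ path, exactly one such leaf-pointer in $\ps_{\fin}$, yielding the $p_1, \dots, p_{\deg(\pred_p)-2}$ in the statement. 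In the case $\deg(\pred_p) = 2$, rule (3) contributes nothing and the first clause of Lemma~\ref{lem:nicesplit} shows $\succc(p) = \{(u, \pred_p), (\pred_p, v)\}$, which matches the degenerate form of the listed set.

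The main obstacle I anticipate is verifying $\betw$-monotonicity under rule (3), since the successor pointer there branches off the $u$--$v$ path into a side subtree rather than shortening the path itself. This requires tracking carefully, against the formal definition of $\betw(u,v)$ in terms of the child $y$ of $v$ on the path, that an edge reachable from $w$ toward $\pred_p$ (with $w$ a leaf in a side subtree of $\pred_p$) indeed satisfies ``$y$ is a strict ancestor of its lower endpoint while $u$ is not''. Once this is settled, the rest is a straightforward application of Lemmas~\ref{lem:between} and~\ref{lem:nicesplit}.
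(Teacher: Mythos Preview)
Your proposal is correct and follows essentially the same approach as the paper, which simply cites the definition of $\timey(\cdot)$ together with Lemmas~\ref{lem:nicesplit} and~\ref{lem:between}; you have just unpacked these references, with the $\betw$-monotonicity step being the natural way to make the appeal to Lemma~\ref{lem:between} explicit for uniqueness.
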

\begin{proof}
    The observation follows from the definition of $\timey(\cdot)$ and \Cref{lem:nicesplit,lem:between}.
\end{proof}

\paragraph*{The Algorithm Description.}

Now we are set to describe how our algorithm $\atwo$ proceeds in Phase II.
After providing the description, we will prove that the algorithm is well-defined and correct.

The algorithm proceeds in iterations $i = 1, 2, \dots$ where in each iteration $i$, we process all the pointers contained in $\timey(i)$.
When processing a pointer $p = (u,v)$, we assign some output label from $\sout$ to each so-far-unlabeled half-edge from $\{ (u, \first_p), (\last_p, v) \}$.
Due to \Cref{obs:summary}, it suffices to explain
\begin{itemize}
    \item[(a)]\label{point1} how we choose those output labels for each pointer in $\timey(1)$, and
    \item[(b)]\label{point2} for each already processed pointer $p$ with $\pred_p \neq \bot$, how we choose those output labels for each pointer in $\succc(p)$.
\end{itemize}
For point (a), let $p_1 = (u_1, \rooot), \dots, p_k = (u_k, \rooot)$ denote the pointers in $\timey(1)$.
By \Cref{lem:attheroot}, we know that $k = \deg(\rooot)$ and, for each edge $e$ incident to $\rooot$, there is precisely one pointer $p_j$ with $\last_{p_j} = e$.
Recall \Cref{def:comptree,def:merge}.
We first assign labels to the half-edges incident to $\rooot$.
More precisely, for each edge $e \in \incid(\rooot)$, assign to half-edge $(\rooot, e)$ some label $\gout((\rooot, e)) \coloneqq \ell^e$ such that, for the obtained tuple $(\ell^e)_{e \in \incid(\rooot)}$, we have
\begin{enumerate}
    \item $(\ell^e)_{e \in \incid(\rooot)} \in S_{\rooot}$, and
    \item $\ell^e \in \comp^e(\rooot)$, for each $e \in \incid(\rooot)$.
\end{enumerate}
Now, for each pointer $p_j$, we assign to half-edge $(u_j, \first_{p_j})$ some label $\gout((u_j, \first_{p_j})) \coloneqq \ell^*$ such that $(\ell^*, \ell^{\last_{p_j}}) \in \pairs_{p_j}$ and $(\ell^*) \in S_{u_j}$.

For point (b), let $p = (u,v)$ denote an already processed pointer with $\pred_p \neq \bot$.
Note that, for the two pointers $p' \coloneqq (u, \pred_p)$ and $p'' \coloneqq (\pred_p, v)$, the half-edges $(u, \first_{p'})$ and $(\last_{p''}, v)$ have already been assigned output labels since $p$ has already been processed; denote those output labels by $\ell$ and $\ell'$, respectively.
However, by \Cref{lem:between,obs:summary}, these are the only half-edges that are already labeled, out of all the half-edges that ``by definition'' have to be labeled after processing the pointers in $\succc(p)$.
Out of these unlabeled half-edges, we first assign an output to all half-edges incident to $\pred_p$.
Concretely, for each edge $e \in \incid(\pred_p)$, assign to half-edge $(\pred_p, e)$ some label $\gout((\pred_p, e)) \coloneqq \ell^e$ such that, for the obtained tuple $(\ell^e)_{e \in \incid(\pred_p)}$, we have
\begin{enumerate}
    \item $(\ell^e)_{e \in \incid(\pred_p)} \in S_{\pred_p}$,
    \item $\ell^e \in \comp^e(\pred_p)$, for each $e \in \incid(\pred_p) \setminus \{\last_{p'}, \first_{p''}\}$, and
    \item $(\ell, \ell^{\last_{p'}}) \in \pairs_{p'}$ and $(\ell^{\first_{p''}}, \ell') \in \pairs_{p''}$.
\end{enumerate}
Finally, for each pointer $p''' = (w, \pred_p)$ where $w$ is a leaf and $\last_{p'''}$ does not lie on the path from $u$ to $v$, we assign to half-edge $(w, \first_{p'''})$ some label $\gout((w, \first_{p'''})) \coloneqq \ell^*$ such that $(\ell^*, \ell^{\last_{p'''}}) \in \pairs_{p'''}$ and $(\ell^*) \in S_{w}$.
By \Cref{obs:summary}, this finishes the processing of all the pointers in $\succc(p)$.

The algorithm in Phase II terminates in the first iteration $i$ in which $\timey(i) = \emptyset$.
This concludes the description of $\atwo$.

\paragraph*{Well-Definedness and Correctness.}

From the description of the algorithm in Phase II it is not clear that the labels with certain properties the algorithm is supposed to output do actually exist.
In order to show that the algorithm is well-defined, we first need a helper lemma based on the following definitions.

\begin{definition}
    Let $e = (u, v)$ be an edge in $G'$.
    We denote the set of vertices that have ancestor $u$ or are equal to $u$ by $V_e$, the set of edges with at least one endpoint in $V_e$ by $E_e$, and the set of half-edges $(w, e')$ with $e' \in E_e$ by $H_e$.
    A labeling of the half-edges in $H_e$ (with a label from $\sout$ each) is a \emph{correct solution} for $H_e$ if it is a correct solution on the compatibility subtree $CS_e$ induced by $V_e \cup \{ u \}$ where we do not have any constraint for node $u$ (in the definition of a correct solution for a compatibility tree (see \Cref{def:comptree})).
\end{definition}

\begin{definition}
    Let $u,v$ be two nodes such that $v$ is an ancestor of $u$.
    We denote the set of nodes that are an endpoint of some edge in $\betw(u,v)$ by $V_{u,v}$, and the set of half-edges $(w, e)$ with $e \in \betw(u,v)$ by $H_{u,v}$.
    A labeling of the half-edges in $H_{u,v}$ (with a label from $\sout$ each) is a \emph{correct solution} for $H_{u,v}$ if it is a correct solution on the compatibility subtree $CS_{u,v}$ induced by $V_{u,v}$ where we do not have any constraint for nodes $u$ and $v$ (in the definition of a correct solution for a compatibility tree).
\end{definition}

\begin{lemma}\label{lem:encodecomplete}
    At the beginning of Phase I (i.e., for $i = 0$), and after any iteration $i$, the following two properties hold.
    \begin{enumerate}
        \item For any half-edge $(u,e)$ satisfying $M^e(u) \neq \undec$, the set $M^e(u)$ contains precisely the output labels $\ell$ such that there exists a labeling of the half-edges in $H_e$ that is a correct solution on $CS_e$ and labels $(u,e)$ with $\ell$.
        \item For any pointer $p = (v,w) \in \ps_i$, the set $\pairs_p$ contains precisely the pairs $(\ell, \ell')$ of output labels such that there exists a labeling of the half-edges in $H_{v,w}$ that is a correct solution on $CS_{v,w}$ and labels $(v,\first_p)$ with $\ell$ and $(\last_p,w)$ with $\ell'$.
    \end{enumerate}
\end{lemma}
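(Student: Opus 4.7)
The plan is to prove both properties simultaneously by induction on $i$. For the base case $i = 0$, property 1 is vacuous since every entry of every $\comp(u)$ equals $\undec$ initially. For property 2, the only pointers in $\ps_0$ are the edges $(v,w) \in E(G')$ with $\pairs_p = S_{vw}$, and in this case $H_{v,w}$ consists only of the two half-edges of $(v,w)$, so $CS_{v,w}$ imposes exactly the constraint $(\gout((v,(v,w))), \gout((w,(v,w)))) \in S_{vw}$, as required.

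For the inductive step, assume both properties hold at the end of iteration $i-1$, and consider iteration $i$. Property 1 can only change through step~\ref{itemb}, where, for a $3$-node or the root $u$ and a relevant in-edge $e = (v,u)$ with the ``yes'' answer, $\comp^e_i(u)$ is set from $\undec$ to the set derived from a pointer $p = (w,u) \in \aps_{i-1}$ with $w$ a leaf and $\last_p = e$. By the inductive hypothesis for property 2 applied to $p$, $\pairs_p$ captures exactly the pairs $(\ell',\ell'')$ of labels at the endpoints of $p$ that admit a correct solution on $CS_{w,u}$. Since $w$ is a leaf of $G'$, extending any such completion to the single half-edge at $w$ corresponds precisely to requiring $(\ell') \in S_w$. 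Because $H_e = H_{w,u}$ when $w$ is a leaf child-subtree of $u$ via $e$, the set defined in step~\ref{itemb} equals exactly the set of labels $\ell$ that are extensible to a correct solution on $CS_e$ labeling $(u,e)$ with $\ell$, proving property 1.

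For property 2, unchanged pointers (those already in $\ps_{i-1}$) retain the claim by the inductive hypothesis since $G'$ and its constraints do not change. The only new pointers in $\ps_i$ are those added in step~\ref{step2} via $\merge(p,p')$ where $v$ is an active $2$-node with $p = (u,v)$, $p' = (v,w)$. The key observation is that by \Cref{obs:undec}, at this point $\comp^e(v) \neq \undec$ for every $e \in \incid(v) \setminus \{\last_p, \first_{p'}\}$, so by the inductive hypothesis for property 1, each such $\comp^e(v)$ captures exactly the valid labels at $(v,e)$ completable over the subtree reached via $e$. Moreover, the subtrees hanging from $v$ through its non-path edges are disjoint from $H_{u,w}$'s other components and can be completed independently; and by the inductive hypothesis for property 2 applied to $p$ and $p'$, $\pairs_p$ and $\pairs_{p'}$ encode exactly the valid completions on $CS_{u,v}$ and $CS_{v,w}$, respectively. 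Since $\betw(u,w) = \betw(u,v) \sqcup \betw(v,w) \sqcup \{ e : e \in \incid(v) \setminus \{\last_p,\first_{p'}\}\}$ together with the node constraint $S_v$ at $v$, the three conditions in \Cref{def:merge} faithfully encode the existence of a correct solution on $CS_{u,w}$ with labels $\ell$ and $\ell'$ at the two endpoint half-edges, giving property 2.

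The main obstacle is the bookkeeping in the merge step: one must carefully check that the partition of $\betw(u,w)$ described above is correct (this follows from \Cref{prop:three}, which guarantees that the path from $u$ to $w$ contains no $3$-node other than possibly the endpoints), and that combining the ``independent'' completions on the three pieces through the single gluing constraint $S_v$ at $v$ is equivalent to the existence of a single global correct solution on $CS_{u,w}$. This equivalence is a standard decomposition argument for tree-structured constraint satisfaction, but it must be spelled out to confirm that no constraint at any interior node or edge of $CS_{u,w}$ is missed and that no spurious constraint is introduced.
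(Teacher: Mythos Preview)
Your proof is correct and follows the same inductive approach as the paper's (very terse) argument, which simply appeals to the definitions of the merge operation and step~\ref{itemb} for the inductive step without spelling out the details you provide. One minor imprecision to fix: in your decomposition of $\betw(u,w)$, the third piece should be $\bigsqcup_{e \in \incid(v) \setminus \{\last_p,\first_{p'}\}} E_e$ (the full subtrees hanging from $v$ via those edges), not just the incident edges themselves---though your surrounding prose about ``the subtrees hanging from $v$'' and your use of property~1 make clear you understand this.
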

\begin{proof}
    We prove the statement by induction in the first iteration $i$ in which $M^e(u)$ was set to some label set $L \neq \undec$, resp.\ in which $p$ was added to the set of pointers.
    The base of the induction is implied by the initialization of $M^e(u)$ and the pointer set at time $i = 0$.
    The induction step directly follows from the precise definition of the two steps in the update rules that create new pointers and change the values $M^e(u)$, namely step~\ref{step2} (which relies on the precise definition of the merge operation) and step~\ref{itemb}, respectively.
\end{proof}

Now, we are set to show that the algorithm for Phase II is well-defined and correct.

\begin{lemma}\label{lem:wellandcorrect}
    Assuming that a correct solution for the compatibility tree $G'$ exists, the algorithm $\atwo$ for Phase II is well-defined and correct.
\end{lemma}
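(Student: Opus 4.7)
The plan is to prove both well-definedness and correctness together by induction on the iteration number $i$, maintaining the invariant that the partial half-edge labeling produced by $\atwo$ after iteration $i$ extends to a correct solution for the compatibility tree $G'$. The base case $i=0$ is the hypothesis of the lemma. Well-definedness at iteration $i$ will follow from the invariant by instantiating the labels the algorithm must pick as the values a witnessing extension $\gout^\star$ assigns to the corresponding half-edges: these values satisfy the algorithm's conditions via \Cref{def:comptree} and \Cref{def:merge} together with the two properties of \Cref{lem:encodecomplete}. For correctness, I will use \Cref{lem:attheroot,lem:nicesplit,lem:between} to argue that upon termination every half-edge of $G'$ has been labeled, at which point the invariant says the produced labeling is itself a correct solution.

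For iteration $1$, I will fix an extension $\gout^\star$ and set $\ell^e := \gout^\star((\rooot,e))$ for each $e \in \incid(\rooot)$ and $\ell^\star := \gout^\star((u_j, \first_{p_j}))$ at each root pointer $p_j = (u_j, \rooot)$. \Cref{def:comptree} yields $(\ell^e) \in S_\rooot$ and $(\ell^\star) \in S_{u_j}$, while \Cref{lem:encodecomplete} yields $\ell^e \in M^e(\rooot)$ and $(\ell^\star, \ell^{\last_{p_j}}) \in \pairs_{p_j}$, so the required choices exist. For preservation, given any algorithmic choice of labels, I will apply \Cref{lem:encodecomplete} in the reverse direction: each $\ell^e \in M^e(\rooot)$ produces a correct solution on $CS_e$ labeling $(\rooot,e)$ with $\ell^e$. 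These per-edge solutions share only the vertex $\rooot$, which carries no constraint inside any single $CS_e$, and therefore glue into a correct solution on $G'$ because the assembled root tuple lies in $S_\rooot$. Each leaf label $\ell^\star$ is similarly compatible via \Cref{lem:encodecomplete} applied to $\pairs_{p_j}$.

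For iteration $i \geq 2$, \Cref{obs:summary} lets me process each pointer in $\timey(i)$ as part of $\succc(p)$ for some $p = (u,v) \in \timey(i-1)$ with $\pred_p \neq \bot$. By the invariant there is an extension $\gout^\star$, and its restriction to $H_{u,v}$ witnesses $(\ell,\ell') \in \pairs_p$ via \Cref{lem:encodecomplete}, where $\ell,\ell'$ are the labels previously fixed at $(u, \first_p)$ and $(\last_p, v)$. Instantiating the new labels at $\pred_p$'s half-edges and at the branching leaf starts by $\gout^\star$ again gives valid choices. Preservation hinges on the decomposition
\[
    \betw(u,v) = \betw(u, \pred_p) \sqcup \betw(\pred_p, v) \sqcup \bigsqcup_j H_{e_j},
\]
where $e_j$ ranges over the incoming edges of $\pred_p$ off the $u$-to-$v$ path; these pieces meet only at $\pred_p$. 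Independent reverse-direction applications of \Cref{lem:encodecomplete} on each piece (driven by $(\ell, \ell^{\last_{p'}}) \in \pairs_{p'}$, $(\ell^{\first_{p''}}, \ell') \in \pairs_{p''}$, and $\ell^e \in M^e(\pred_p)$ on the branching edges) will yield sub-solutions that glue at $\pred_p$ since the chosen tuple lies in $S_{\pred_p}$; combined with $\gout^\star$ on the complement of $\betw(u,v)$, this produces the required global extension.

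The main technical obstacle will be the gluing argument in the inductive step for $i \geq 2$: one must check that independent applications of \Cref{lem:encodecomplete} on the three types of subregions of $\betw(u,v)$ can genuinely be combined into a correct solution of $G'$ after the algorithm's free (but constrained) choice at $\pred_p$. This will rely on the node-edge-checkable structure of $G'$, on the absence of an internal constraint at $\pred_p$ inside each subregion (by the definitions of $\betw$ and $H_e$), and crucially on the way \Cref{def:merge} defines $\pairs_{\merge(p',p'')}$ so that any pair in $\pairs_p$ admits a compatible tuple at $\pred_p$ simultaneously satisfying $S_{\pred_p}$, the pair constraints on $p'$ and $p''$, and $M^e(\pred_p)$ on the remaining incident edges --- which is exactly what conditions~(1)--(3) of the algorithm require.
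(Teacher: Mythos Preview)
Your approach is essentially the paper's: maintain by induction the invariant that the current partial labeling extends to a correct global solution, use \Cref{lem:encodecomplete} in both directions (existence of valid choices from a witnessing extension, and extendability after any algorithmic choice), decompose $\betw(u,v)$ around $\pred_p$ and glue the pieces via the node constraint $S_{\pred_p}$, and conclude by arguing via \Cref{lem:between} that every basic pointer is eventually processed so every half-edge gets labeled.

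One small but genuine omission: before you can invoke \Cref{lem:encodecomplete} to conclude $\ell^e \in M^e(\rooot)$ or $\ell^e \in M^e(\pred_p)$, you must first verify that $M^e(\cdot) \neq \undec$ on the relevant half-edges, since the first clause of \Cref{lem:encodecomplete} is explicitly conditional on this hypothesis. Without that check the algorithm's instruction ``choose $\ell^e \in \comp^e(u)$'' is not even meaningful, so well-definedness is not yet established. The paper handles this as a separate preliminary step, using \Cref{lem:attheroot} (together with the update rules and termination of Phase~I) for edges at the root, and \Cref{obs:undec} for the off-path edges at $\pred_p$ (after observing that $\pred_p$ was a $2$-node with its relevant in-edge on the $u$--$v$ path at the moment the merge creating $p$ occurred). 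You should add this check; the rest of your argument then goes through exactly as in the paper.
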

\begin{proof}
    As the first step, we show that each time $\atwo$ is supposed to choose some output label that has to be contained in $\comp^e(u)$ for some half-edge $(u,e)$, we have $\comp^e(u) \neq \undec$.
    From the description of $\atwo$, it follows that when the above situation occurs, the half-edge $(u,e)$ in question is 1) incident to the root, or 2) incident to some node of the form $\pred_p$ for some pointer $p = (v,w) \in \ps_{\fin}$ and $e$ does not lie on the path from $v$ to $w$.
    If $(u, e)$ is incident to the root, i.e., if $u = \rooot$, then \Cref{lem:attheroot}, together with the design of the update rules and the fact that at the end of Phase I no active pointer remains, implies that $\comp^e(u) \neq \undec$.
    In the other case, observe that the existence of $p$ implies that at some point during Phase I, $\pred_p$ was a $2$-node with relevant in-edge on the path from $v$ to $w$ (due to the design of the update rules).
    Now, \Cref{obs:undec} yields the desired inequality.
    
    Next, we show by induction that at each step of $\atwo$, labels with the required properties are available and the partial solution produced by choosing such labels is part of some correct global solution.
    Note that, due to \Cref{lem:between,obs:summary}, it suffices to show the induction step for each set $\succc(p)$ of pointers separately as the processing of two distinct sets $\succc(p), \succc(p')$ is independent of each other (due to the facts that the half-edges considered when processing $\succc(p)$ are separated by some already selected output label from the half-edges considered when processing $\succc(p')$, and that the correctness of a solution for a compatibility tree is defined via constraints on edges and constraints on nodes).
    For simplicity, we will use the notation from the description of $\atwo$ in the following.
    
    For the base of the induction, observe that for the first step of point (a), there is a choice of output labels with the described properties due to \Cref{lem:encodecomplete} (in conjunction with the very first step of this proof) and the fact that a correct solution for $G'$ exists.
    Moreover, the obtained partial solution is part of some correct global solution due to the properties required in the first step of point (a) (and \Cref{lem:encodecomplete}).
    Observe also that the same holds for the second step of point (a): \Cref{lem:encodecomplete} together with the fact that there exists a correct global solution that respects the partial coloring computed so far ensures that labels with the required properties exist; the properties (and \Cref{lem:encodecomplete}) in turn imply that the new obtained partial solution is still part of some correct global solution.
    (A bit more concretely, the fact that the partial solution after the first step of point (a) is extendable to a correct global solution implies that there must be a label $\ell^*$ as described in the second step of point (a) since $\pairs_{p_j}$ contains all label pairs that can be completed to a correct solution inside the subtree hanging from $\last_{p_j}$ (by \Cref{lem:encodecomplete}) and the condition $(\ell^*) \in S_{u_j}$ just states that the output is correct ``at $u_j$''; the fact that the resulting output label pair (at the half-edges $(r,\first_{p_j})$ and $(\last_{p_j},u_j)$) is contained in $\pairs_{p_j}$ implies that the new partial solution can still be extended to a correct global solution, again due to the characterization of $\pairs_{p_j}$ given in \Cref{lem:encodecomplete}.)
    
    For the induction step, an analogous argumentation shows that, also for point (b), the extendability of the obtained partial solution implies the availability of labels with the stated properties, and the properties of the labels imply the extendability of the new obtained partial solution to a correct global solution.
    
    To prove the correctness of the algorithm and conclude the proof, given the above, it suffices to show that each half-edge becomes labeled at some point.
    To this end, observe that \Cref{lem:between,obs:summary}, and the definition of $\timey(\cdot)$ imply that any pointer $p \in \ps_{\fin}$ is contained in at most one $\timey(i)$.
    Since there are only finitely many pointers in $\ps_{\fin}$, there must be some positive $i$ such that $\timey(i) = \emptyset$; hence $\atwo$ terminates.
    Since \Cref{lem:between} implies that all basic pointers have been processed when $\atwo$ terminates, we obtain the desired statement that each half-edge becomes labeled.
\end{proof}

\paragraph*{Bounding the Number of Iterations.}

What is left to be done is to bound the number of iterations in Phase II.

\begin{lemma}\label{lem:runtimephasetwo}
    Algorithm $\atwo$ terminates after $O(\log n)$ iterations.
\end{lemma}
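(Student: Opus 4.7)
The plan is to bound each iteration of Phase II against at most one iteration of Phase I, so that the desired $O(\log n)$ bound follows directly from \Cref{lem:runtimephaseone}. Intuitively, Phase II ``unwinds'' the merges of Phase I: processing a pointer $p$ exposes in $\succc(p)$ pointers that are strictly older in the creation history of $p$.

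To make this precise, I will assign to each $p \in \ps_{\fin}$ the natural number $T(p)$ equal to the Phase I iteration at which $p$ was first added to the pointer set (with $T(p) = 0$ for basic pointers), and let $T_1$ denote the total number of Phase I iterations. The key invariant to establish by induction on $i$ is: every $p \in \timey(i)$ satisfies $T(p) \leq T_1 - i + 1$. The base case $i = 1$ is immediate since $T(p) \leq T_1$ always. For the inductive step, take $q \in \timey(i+1)$ and, by \Cref{obs:summary}, the unique $p = (u,v) \in \timey(i)$ with $q \in \succc(p)$; it suffices to prove $T(q) \leq T(p) - 1$. If $q$ is one of the two ``halves'' $(u, \pred_p)$ or $(\pred_p, v)$, then $q$ was consumed by the merge that $\pred_p$ (as an active $2$-node) performed at iteration $T(p)$, so $q \in \aps_{T(p)-1}$ and hence $T(q) \leq T(p) - 1$. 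Otherwise $q = (w, \pred_p)$ is a leaf-root pointer whose edge $\last_q$ is not on the path from $u$ to $v$, which forces $\deg_{G'}(\pred_p) \geq 3$ and makes $\pred_p$ an active $3$-node at some earlier iterations; by \Cref{lem:nicesplit} the pointer $q$ is precisely the unique leaf pointer that was frozen at $\pred_p$ via step~\ref{step3} (the ``yes'' branch) on edge $\last_q$, at some iteration $t' \leq T(p) - 1$. Hence $q \in \aps_{t'-1}$ and therefore $T(q) \leq t' - 1 \leq T(p) - 2$.

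Given the invariant, any pointer $p \in \timey(T_1 + 1)$ must satisfy $T(p) \leq 0$, so it is basic and $\succc(p) = \emptyset$; consequently $\timey(T_1 + 2) = \emptyset$ and $\atwo$ halts after at most $T_1 + 2 = O(\log n)$ iterations. The main obstacle is the leaf-root-pointer case, which requires using the fine dynamics of $3$-node transitions in Phase I: one must leverage that an active $2$-node performs its merges in the very next iteration after it becomes a $2$-node, so the transition of $\pred_p$ from $3$-node to $2$-node necessarily occurs strictly before iteration $T(p)$, forcing the freezing of $q$ to happen no later than iteration $T(p) - 1$ and thereby placing $q$ in a pointer set indexed strictly below $T(p)$.
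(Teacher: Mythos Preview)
Your proposal is correct and follows essentially the same approach as the paper: both assign to each pointer its Phase-I creation time, show this time strictly decreases along $\succc(\cdot)$, and then invoke \Cref{lem:runtimephaseone}. One minor caveat: a $2$-node can in fact merge in several consecutive iterations (not only in the one immediately after becoming a $2$-node), but your conclusion is unaffected since all that is actually needed---and what the paper uses directly---is that $\pred_p$ is a $2$-node at the end of iteration $T(p)-1$ with its relevant in-edge on the $u$--$v$ path, which already forces any leaf pointer $q$ with $\last_q$ off that path to have last been active strictly before iteration $T(p)-1$.
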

\begin{proof}
    Let $p = (u,v)$ be some pointer contained in some $\timey(i)$ satisfying $\pred_p \neq \bot$, and let $p' \in \succc(p)$.
    Let $j$, resp.\ $j'$, denote the first iteration in Phase I such that $p$, resp.\ $p'$, was an active pointer at the end of iteration $j$ (possibly $j = 0$ or $j' = 0$).
    Our first goal is to show that $j' < j$.
    
    To this end, observe that, due to the design of the update rules in Phase I (and the fact that the (active) outgoing pointer of a node can only grow), $p$ must be the result of the merge operation $\merge(q, q')$, where $q = (u, \pred_p)$ and $q' = (\pred_p, v)$, and this merge must have been performed in iteration $j$.
    Hence, $q$ and $q'$ must have been active pointers in iteration $j - 1$, which implies that, if $p' \in \{ q, q'\}$, then $j' < j$, as desired.
    Thus, assume that $p' \notin \{ q, q'\}$, which, by \Cref{obs:summary}, implies that $p' = (w, \pred_p)$ for some leaf $w$, and $\last_{p'}$ does not lie on the path from $u$ to $v$.
    Since, at the end of iteration $j - 1$, node $\pred_p$ is a $2$-node with its relevant in-edge lying on the path from $u$ to $v$ (as $\merge(q, q')$ is performed in iteration $j$), each active pointer $q''$ ending in $\pred_p$ at the end of iteration $j - 1$ must satisfy that $\last_{q''}$ lies on the path from $u$ to $v$, and this fact cannot change in the further course of Phase I.
    Hence, $p'$ must have been active before iteration $j - 1$, and, again, we obtain $j' < j$.
    
    By \Cref{obs:summary}, we can conclude that for any pointer $p'$ in any $\timey(i')$, there must be a pointer $p$ in $\timey(i'-1)$ such that the first iteration in Phase I at the end of which $p$ was active is strictly larger than the first iteration in Phase I at the end of which $p'$ was active (where we consider the starting configuration to be ``at the end of iteration $0$'').
    This implies that $\timey(i'' + 2) = \emptyset$, where $i''$ is the number of iterations in Phase I.
    Hence, Algorithm $\atwo$ terminates after $O(\log n)$ iterations, by \Cref{lem:runtimephaseone}.
\end{proof}

\subsection{Implementation in the \mpc Model} \label{sec:highimple}
In this section, we describe how to implement algorithm $\fA$ from \Cref{sec:highalgo} in the low-space \mpc model. As the implementation of the first phase of $\fA$ (rooting the input tree) is provided in \Cref{sec:rooting}, we can focus on the remaining two phases.

We first consider the part of $\fA$ from \Cref{sec:trafothere}, i.e., the part of the algorithm where we bring the number of nodes down to $O(n/\log n)$, and the part where the solution computed on the compatibility tree in \Cref{sec:highmainalgo} is transformed into a solution for the considered \lcl.
Both parts can be implemented in a straightforward manner, due to \Cref{lem:gi}, unless we run into memory issues due to the fact that we have to execute $O(\log \log n)$ of the iterations described in \Cref{sec:trafothere}, instead of just one.
Note that a global memory overhead can only be possibly produced by the new edges that are introduced in the graphs $G_1, G_2, \dots, G_t$ since the node set only shrinks during that process and the memory required for the ``compatibility information'' of the compatibility graphs cannot be asymptotically larger than the memory required for storing the edges.
Moreover, as the total number of edges (produced during the process) that are incident to any particular vertex is at most $O(\log \log n)$, we cannot run into issues with the local memory.
Hence, it suffices to show that the total number of edges produced during the $O(\log \log n)$ iterations does not exceed $\Theta(n)$ (or $\Theta(m)$, since $n$ and $m$ are asymptotically equal for trees).
However, this directly follows from the fact that for each new edge that is introduced during those iterations, a node is removed.

For the remainder of the section, we consider the part of $\fA$ from \Cref{sec:highmainalgo}, i.e., algorithm $\fA'$ that solves the compatibility tree.
We start by collecting all information that has to be stored during $\fA'$.
For simplicity, we already assign this information to the nodes of the compatibility tree $G'$.
We observe that unless a node has to store more than $n^{\delta}$ words or the total amount of information to be stored is in $\omega(n)$, the algorithm can be naively implemented by standard techniques.
Soon we will see that the total amount of information to be stored is in $O(n)$, and the only issue to be taken care of is that the local memory of ``nodes'' is exceeded.
We will explain later how to resolve this issue.

For Phase I (see \Cref{sec:phaseone}), we maintain two pieces of information, namely
\begin{enumerate}
    \item a set of pointers, and
    \item the sets $\comp(u)$.
\end{enumerate}
For each pointer $p$, some additional information ($\pairs_p$, $\pred_p$, $\first_p$, $\last_p)$, and whether it is active or not) has to be stored, but since the memory required for this additional information is only a constant multiple of the memory required to store the pointer itself (in particular as there are only a constant number of output labels in $\sout$), we can ignore this information.
Moreover, as the design of the update rules in Phase I ensures, the number of pointers produced in each iteration is upper bounded by the number of nodes of the compatibility tree, which is $O(n/\log n)$, by \Cref{cor:fewernodes}.
Since, by \Cref{lem:runtimephaseone}, there are only $O(\log n)$ iterations in Phase I, the total number of pointers that have to be stored is in $O(n)$; hence, our global memory of $O(m)$ is not exceeded.
We already note that we will store each pointer $p = (u,v)$ at both of its endpoints; the overhead introduced by this does not change the required global memory asymptotically.

Together with a set $\comp(u)$, we also have to store the information about all the leaf-root pointers in $\ps_{\fin}$ that end in $u$ (in order to perform the steps in Phase II); however, by the design of the update rules of Phase I and the fact that degrees are bounded (and $|\sout|$ is constant), all of this information requires just a constant number of words to be stored.
We will store each set $\comp(u)$ and its associated information in node $u$; as the required amount of memory per node is constant (in words), we can ignore this information in the remainder of this section.

For the implementation of Phase II, the information stored in Phase I is still required, but will not be changed or expanded.
Note that the characterization of the pointers in $\timey(i)$, given by \Cref{obs:summary}, provides a straightforward implementation: the pointers that are processed in iteration $1$ are easily identified (as they are the only leaf-root pointers in $\ps_{\fin}$), and the pointers to be processed in any later iteration are precisely those in a set of the form $\succc(p)$, where $p$ is a pointer processed in the previous iteration.
Observe also that, by \Cref{lem:nicesplit,obs:summary} and the fact that degrees are bounded, each node is involved in the processing of only a constant number of pointers in each iteration.
Hence, each iteration of Phase II can be easily implemented in a constant number of rounds.

From the description of the update rules of Phase I, it is easy to see that, again, each iteration (now of Phase I) can be performed in constant time provided that we can perform all merge operations (i.e., step~\ref{step2}) in constant time.
From the above discussion, it follows that, in order to obtain the desired runtime of $O(\log n)$ rounds for the complete algorithm $\fA$, the only thing left to be done is to show that we can perform the merge operations in each iteration in constant time while storing the pointers in a way that does not exceed the local memory of the machines.
In the following, we explain how to achieve this.
Note that the merge operation that creates a pointer $(u,w)$ from $(u,v)$ and $(v,w)$ can be understood as $v$ forwarding (the head of) pointer $(u,v)$ to node $w$.

\paragraph*{Pointer Forwarding Tree.}
Our approach relies on a broadcast tree structure that we create for each node with a large number of incoming active pointers (note that this structure is not the same as the \mpc broadcast tree of \Cref{sec:broadcasttree} that contains all machines in the network).
Each node $v$ creates a $n^\delta$-ary virtual rooted tree, where the idea is to store the incoming active pointers in the leaves of the tree.
Importantly, different nodes might be stored at different machines, but since the number of incoming active pointers is bounded by $O(n)$, the communication tree has constant depth which allows us to perform operations efficiently.

To perform the actual pointer forwarding, consider a non-virtual node $v$ and suppose that $v$ wants to forward its incoming active pointers to the non-virtual node $u$ in which the active pointer starting from $v$ ends.
Notice that the active pointers incoming to $v$ are stored in the leaves of the virtual tree $T_v$ rooted at $v$ and similarly for $u$ in the tree $T_u$ rooted at $u$.
Now, we can simply attach the tree $T_v$ to the node in $T_u$ that currently stores the pointer $(u,v)$ (for $u$).
Thereby, the pointers previously incoming to $v$ are now stored in the broadcast tree of $u$, and are therefore incoming to $u$.

This might, however, result in the depth of the broadcast tree increasing by an additive term of $1/\delta$.
To mend this, consider the following balancing process.
\begin{observation}
    Let $T_u$ be a virtual rooted tree of depth $d = O(1/\delta)$ of at most $n$ nodes.
    Then, in $O(1/\delta)$ rounds, we can reduce the depth to $2/\delta$ such that all the leaf nodes of $T_u$ are still leaf nodes.
\end{observation}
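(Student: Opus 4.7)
The plan is to rebuild $T_u$ from scratch as a balanced $n^\delta$-ary tree whose leaves coincide with the original leaves of $T_u$. First I would perform a bottom-up aggregation in which every internal node $v$ computes the number $L_v$ of leaves of $T_u$ contained in its subtree. Since each internal node has at most $n^\delta$ children and $T_u$ has depth $d = O(1/\delta)$, this pass completes in $O(1/\delta)$ rounds.

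Then, in a top-down pass of $O(1/\delta)$ rounds, I would assign each leaf a unique rank in $\{1,\dots,L\}$, where $L \le n$ is the total number of leaves: the root partitions the interval $[1,L]$ among its children according to the counts $L_v$, and each internal node recursively subdivides its inherited interval among its children. Finally, I would allocate a fresh balanced $n^\delta$-ary tree $T'_u$ of depth $\lceil \log_{n^\delta} L \rceil \le \lceil 1/\delta \rceil \le 2/\delta$ in which the leaf of rank $r$ is placed at the $r$-th position in left-to-right order; each leaf computes, locally from $r$, the identifiers of its ancestors in $T'_u$ by successively dividing $r$ by $n^\delta$, and the corresponding parent--child edges are installed in $O(1)$ rounds of all-to-all communication.

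The main obstacle is arguing that the procedure respects the $O(n^\delta)$ local memory bound and preserves the leaf set. The former holds because every machine in either $T_u$ or $T'_u$ handles only $n^\delta + O(1)$ pointers, and the total number of internal nodes of $T'_u$ is at most $\sum_{i \ge 0} L/(n^\delta)^i = O(n^{1-\delta})$, which fits the \mpc machine budget. The latter holds automatically since only original leaves of $T_u$ receive ranks in the top-down pass, and by construction these ranks place them precisely at the leaf positions of $T'_u$. Summing the three passes yields the claimed $O(1/\delta)$ round complexity.
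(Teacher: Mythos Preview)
Your proposal is correct and follows essentially the same approach as the paper: both perform a converge-cast to count subtree sizes, then allocate a fresh balanced $n^\delta$-ary tree and assign the original leaves to its leaf positions via a top-down proportional partitioning. Your presentation is in fact somewhat more explicit than the paper's (you spell out the rank assignment and the local computation of ancestors), but the underlying idea is the same.
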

\begin{proof}
    The root initiates the following operation.
    First, using converge-cast, it learns the number $n_v$ of nodes in each (virtual) tree rooted from each of its child $v$.
    Then, the root creates a new $n^\delta$-ary virtual tree $T^*$ of depth $O(1/\delta)$.
    Proportionally to the number $n_v$, root $u$ assigns subtrees of $T^*$ to child $v$, such that all incoming pointers corresponding to $v$ fit into the subtree.
    Clearly, this is possible since $n_v \leq n$ and the new virtual node is assigned to at most $n^\delta$ children of $u$.
    This process is recursively continued until the leaves of $T^*$ are assigned to the leaves of $T_u$.
    Then, we can change the pointers from the old broadcast tree nodes to the new ones, and we have obtained our broadcast tree of depth $O(1/\delta)$.
    Notice that a naive implementation results in a $1/\delta$ number of converge-casts, but the number of leaves per subtree can be pre-computed and stored.
\end{proof}

\section{Rake-and-Compress Decomposition} \label{sec:rc}

In this section, we give an $O(\log \log n)$ time deterministic, low-space \mpc rake-and-compress algorithm for constant-degree trees, using $O(m)$ words of global memory. In the rake-and-compress decomposition, the node set $V$ of a graph is split into sets $V_1,V_2,\dots,V_L$ such that $V=V_1 \cup V_2 \cup \dots \cup V_L$ and $V_i \cap V_j = \emptyset$ for all $i \not= j$. Parameter $L$ is called the size of the decomposition and the sets are referred to as layers. This decomposition is an essential tool in our mid-complexity regime. Our algorithm is based on simulating the \local algorithm of Chang and Pettie~\cite{chang}, while paying special attention to memory use.  First, in \Cref{subsec:decomplocal}, we summarize their \local algorithm. Then, in \Cref{subsec:decompmpc}, we describe our low-space \mpc algorithm and prove the following lemma.
 
\begin{lemma} \label{lemma:partitioning}
    There is a deterministic $O(\log \log n)$ time rake-and-compress algorithm for constant-degree trees in the low-space \mpc model using $O(m)$ words of global memory.
\end{lemma}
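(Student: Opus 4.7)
The plan is to simulate the Chang--Pettie LOCAL rake-and-compress algorithm, which runs in $O(\log n)$ rounds, compressing its iterations into $O(\log \log n)$ MPC rounds by means of graph exponentiation while keeping the global memory linear. The key structural fact I would rely on is that in each LOCAL iteration a constant fraction of the currently surviving nodes is removed, so after $2^i$ iterations the residual tree $R_i$ has at most $O(n/\Delta^{2^i})$ nodes. This matches the batching claim sketched in the introduction.

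Concretely, I would group LOCAL iterations into geometrically increasing batches, with batch $i$ consisting of iterations $2^{i-1}+1,\dots,2^i$, yielding $O(\log \log n)$ batches overall. For each batch I alternate $O(1)$ graph exponentiation steps (doubling the stored radius at every surviving node) with a single MPC round in which each surviving node locally executes all $2^i$ LOCAL iterations of the batch on its stored neighborhood, thereby determining its layer assignment (if any) without further communication. Entering batch $i$, every surviving node needs a radius-$2^i$ view of $R_i$; in a tree of maximum degree $\Delta$ such a ball has size at most $\Delta^{O(2^i)}$, and the total global memory across surviving nodes is $|R_i|\cdot \Delta^{O(2^i)} = O(n) = O(m)$, staying within budget at every batch. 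Summed over all $O(\log \log n)$ batches, both the simulation and the graph exponentiation cost is $O(\log \log n)$ MPC rounds.

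The main obstacle is that the residual tree can contain induced paths of degree-$2$ nodes that are much longer than $2^i$, so the radius-$2^i$ view of an internal path node does not even reach the nodes where a compress step is actually triggered, and raw graph exponentiation around internal path nodes would waste memory without making decomposition progress. To handle this, I would invoke the path exponentiation subroutine of \Cref{sec:pathexp}, which contracts each maximal degree-$2$ path into a virtual edge between its two endpoints and equips every internal node with the identities of, and distances to, the two endpoints in $O(\log L)$ rounds on a path of length $L$, using only a constant number of extra edges per node. Interleaving path exponentiation with the batched graph-exponentiation, in exactly the same pipelined fashion as in \Cref{lemma:rooting}, lets all paths make progress in parallel; once a path is virtually contracted, the batched simulation above operates on the compact graph of branching nodes, and the eventual layer assignment of each internal path node can be read off locally from the endpoints' layers together with its position on the path. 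Local memory issues at occasional high-degree residual nodes are handled by the broadcast-tree construction already used in \Cref{lemma:rooting}. Combining these ingredients gives the desired $O(\log \log n)$-round deterministic algorithm using $O(m)$ global memory.
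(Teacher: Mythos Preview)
Your core approach---batching the $O(\log n)$ \local iterations into $O(\log\log n)$ geometrically growing blocks, interleaving graph exponentiation with local simulation, and using the geometric shrinking of the residual tree to keep global memory at $O(m)$---is exactly what the paper does in \Cref{subsec:decompmpc}.

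However, the ``main obstacle'' you identify is not an obstacle, and the path-exponentiation machinery you invoke to address it is unnecessary. The \textsf{Compress} step of the Chang--Pettie algorithm is a purely local test: a node $v$ is tagged $i_C$ as soon as it lies on a path of at least $l$ consecutive degree-$2$ nodes in the current residual graph, and this is decidable from $v$'s $l$-hop view alone (where $l$ is the constant \lcl parameter). There is no ``trigger'' at the path endpoints; every node on a long degree-$2$ path is tagged in the same iteration, and each one certifies this independently by inspecting its own constant-radius neighbourhood. Consequently, simulating $k$ \local iterations requires only an $O(k\cdot l)=O(k)$-hop view, irrespective of any path lengths, and your batched simulation goes through directly. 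This is precisely the paper's \textsf{Peel}$(r)$ primitive, which uses radius-$r$ vision to simulate $r/l$ iterations. Your fallback plan of contracting paths and then reading off an internal node's layer from the endpoints' layers together with its position on the path is both unnecessary and not obviously correct: nodes that lie on the same input-graph path may enter the degree-$2$ state in different iterations, as their pendant subtrees get raked away at different times.
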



\subsection{Decomposition in \local} \label{subsec:decomplocal}

The algorithm consists of two steps: a decomposition step, where nodes are partitioned into layers and a post-processing step, where we compute an $(\alpha, \beta)$-independent set (\Cref{def:alpha-beta}), in $O(\log^*N)=O(\log^*n)$ time~\cite{local} and adjust the layers slightly. Recall that $N=\text{poly}(n)$ denotes the size of our ID space.

\begin{definition}[$(\alpha, \beta)$-independent set] \label{def:alpha-beta}
Let $P$ be a path. A set $I \subset V(P)$ is called an $(\alpha, \beta)$-independent set if the following conditions are met: (i) $I$ is an independent set, and $I$ does not contain either endpoint of $P$, and (ii) each connected component induced by $V(P)-I$ has at least $\alpha$ vertices and at most $\beta$ vertices, unless $|V(P)| < \alpha$, in which case $I=\emptyset$.
\end{definition}

\begin{enumerate}
    \item Suppose $l$ is some constant depending on the \lcl problem. The algorithm begins with $U = V(G)$ and $i=1$, repeats Steps (a)--(c) until $U=\emptyset$, then proceeds to Step 2.

    \begin{enumerate}
        \item For each $v \in U$:
        
        \begin{enumerate}
        \item \textsf{Compress}. If $v$ belongs to a path $P$ such that $|V(P)| \geq l$ and $\deg_{U}(u) = 2$ for each $u \in V(P)$, then tag $v$ with $i_C$.
        \item \textsf{Rake}. If $\deg_U(v)=0$, then tag $v$ with $i_R$. If $\deg_U(v)=1$ and the unique neighbor $u$ of $v$ in $U$ satisfies either (i) $\deg_U > 1$ or (ii) $\deg_U = 1$ and $\text{ID}(v) > \text{ID}(u)$, then tag $v$ with $i_R$.
        \end{enumerate}
        \item Remove from $U$ all vertices tagged $i_C$ or $i_R$ and set $i \larr i+1$.
    \end{enumerate}

    \item Initialize $V_i$ as the set of all vertices tagged $i_C$ or $i_R$. 
    The graph induced by $V_i$ consists of unbounded length paths, but we prefer constant length paths. For each edge $\{u,v\}$ such that $v$ is tagged $i_R$ and $u$ is tagged $i_C$, promote $v$ from $V_i$ to $V_{i+1}$. For each path $P$ that is a connected component induced by vertices tagged $i_C$, compute an $(l,2l)$-independent set $I_P$ of $P$, and then promote every vertex in $I_P$ from $V_i$ to $V_{i+1}$.

\end{enumerate}

\begin{observation}\label{obs:decompProperties}
    The following properties of the decomposition are either evident or proven by Chang and Pettie~\cite[Section 3.9]{chang}.
    \begin{itemize}
    \item Define $G_i$ as the graph induced by nodes in layer $i$ or higher: $\bigcup_{j=i}^L V_j$. For each $v \in V_i$, $\deg_{G_i}(v) \leq 2$.
    \item Define $\mathcal{P}_i$ as the set of connected components (paths) induced by the nodes in $V_i$ with more than one node. For each $P \in \mathcal{P}_i$, $l \leq |V(P)| \leq 2l$ and $\deg_{G_i}(v)=2$ for each node $v \in V(P)$.
    \item The graph $G_L$ contains only isolated nodes, i.e., $\mathcal{P}_L=\emptyset$.
    \item At least a constant $\Omega(1/l)$ fraction of vertices in $U$ are eliminated in each iteration, resulting in a runtime of $O(\log n)$ and decomposition size $L=O(\log n)$.
\end{itemize}

As a consequence, each vertex $v \in V_i$ falls into exactly one of two cases: (i) $v$ has $\deg_{G_i}(v) \leq 1$ and has no neighbor in $V_i$, or (ii) $v$ has $\deg_{G_i}(v) = 2$ and is in some path $P \in \mathcal{P}_i$.

\end{observation}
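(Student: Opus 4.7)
My plan is to establish each of the four items by direct inspection of the tagging and promotion rules in Steps 1 and 2, using a standard forest-counting argument for the quantitative shrinkage in item 4; the case dichotomy then follows as an immediate corollary.

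For items 1 and 2, fix $v \in V_i$ and split on whether $v$ was tagged $i_R$, tagged $i_C$ during Step 1, or promoted from $V_{i-1}$ during Step 2. At the start of iteration $i$, the set $U$ equals $\bigcup_{j \geq i} V_j = V(G_i)$, so an $i_R$-tag (with $\deg_U(v) \leq 1$) gives $\deg_{G_i}(v) \leq 1$, an $i_C$-tag (with $\deg_U(v) = 2$) gives $\deg_{G_i}(v) = 2$, and a node promoted from $V_{i-1}$ has all its $G_i$-neighbors tagged already at iteration $i{-}1$, hence lying in $V_{<i}$, giving $\deg_{G_i}(v) = 0$. This yields item 1. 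For item 2, any path $P \in \mathcal{P}_i$ has $|V(P)| \geq 2$, which forces all its nodes to be $i_C$-tagged (since $i_R$-tagged and promoted nodes have $V_i$-degree $0$ by the case analysis just above). Hence $V(P)$ sits inside a single maximal Compress path of length $\geq l$; after Step 2 promotes the $(l,2l)$-independent set and the $i_R$-tagged neighbors at the ends of that Compress path, the remaining components satisfy $l \leq |V(P)| \leq 2l$ by the defining property of an $(l,2l)$-independent set. Each interior node of $P$ keeps both $V_i$-neighbors, and each endpoint keeps one $V_i$-neighbor inside $P$ and one neighbor promoted into $V_{i+1}$, so $\deg_{G_i}(v) = 2$ for all $v \in V(P)$.

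Item 3 is immediate: if some $P \in \mathcal{P}_L$ existed, then iteration $L$ produced an $i_C$-path of length $\geq l$, whose $(l,2l)$-independent set is nonempty and promotes at least one node into $V_{L+1}$, contradicting maximality of $L$. The main obstacle is item 4. I plan a double-counting argument. At the start of iteration $i$, partition $U$ into the low-degree set $L_U = \{v : \deg_U(v) \leq 1\}$, the degree-$2$ set $D_U$, and the high-degree set $H_U = \{v : \deg_U(v) \geq 3\}$. Since $U$ induces a forest, summing $\deg_U(v) - 2$ over each tree component of $U$ yields $|H_U| \leq |L_U|$. The nodes of $D_U$ form at most $|L_U| + |H_U|$ maximal degree-$2$ paths, and any such path of length $\geq l$ is entirely $i_C$-tagged, so at most $l(|L_U| + |H_U|)$ nodes of $D_U$ escape the Compress rule. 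Now split on $|L_U| + |H_U|$: if $|L_U| + |H_U| \geq |U|/(3l)$, then since $|H_U| \leq |L_U|$ we have $|L_U| \geq (|L_U|+|H_U|)/2$, and at least $|L_U|/2 = \Omega(|U|/l)$ leaves are $i_R$-tagged (a leaf fails the Rake rule only as the lower-ID endpoint of an isolated degree-$1$/degree-$1$ edge); otherwise at least $|D_U| - l(|L_U|+|H_U|) > |U|(1 - 1/(3l)) - |U|/3 = \Omega(|U|)$ nodes are $i_C$-tagged. Either way, $\Omega(|U|/l)$ nodes leave $U$ in the iteration, giving $L = O(\log n)$ for constant $l$.

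Finally, the case dichotomy drops out of items 1 and 2. If $v \in V_i$ has $\deg_{G_i}(v) \leq 1$, item 2 forbids $v$ from lying on any path in $\mathcal{P}_i$, so $v$ has no neighbor in $V_i$ (case (i)). If $\deg_{G_i}(v) = 2$, the subcase analysis from item 1 forces $v$ to be $i_C$-tagged, and independence of the $(l,2l)$-independent set guarantees that at least one of $v$'s two $i_C$-tagged path-neighbors is not promoted and remains in $V_i$, placing $v$ on a path in $\mathcal{P}_i$ (case (ii)).
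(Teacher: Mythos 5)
The paper does not actually prove this observation --- it defers entirely to Chang and Pettie --- so you are supplying a from-scratch argument. Its overall structure (case analysis on the tag, forest counting for item 4) is reasonable and all four items are indeed true, but two of your steps are genuinely broken as written. First, your foundational identity ``at the start of iteration $i$ the set $U$ equals $\bigcup_{j\ge i}V_j = V(G_i)$'' is false: the promotion step of Step 2 moves vertices tagged at iteration $i-1$ up into $V_i$, so $V(G_i)$ is a strict superset of $U_i$ whenever any promotion occurs. Consequently $\deg_{G_i}(v)$ is not a priori equal to $\deg_{U_i}(v)$, and your degree bounds for $i_R$- and $i_C$-tagged vertices (hence items 1 and 2) do not follow as stated. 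The missing lemma, which does hold, is that every vertex promoted from $V_{i-1}$ into $V_i$ is isolated in $G_i$: a promoted Compress vertex is a non-endpoint of its Compress component, so both of its $U_{i-1}$-neighbours are tagged at iteration $i-1$ and, by independence of $I_P$, not promoted; a promoted Rake vertex has a unique $U_{i-1}$-neighbour, which is a Compress vertex that is an endpoint of its component (being adjacent to a degree-$1$ vertex) and hence not promoted. This simultaneously repairs your claim that promoted nodes have $G_i$-degree $0$ (your justification ``tagged at iteration $i-1$, hence lying in $V_{<i}$'' is circular, since vertices tagged at iteration $i-1$ can themselves land in $V_i$) and restores $\deg_{G_i}=\deg_{U_i}$ for vertices tagged at iteration $i$.

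Second, your proof of item 3 fails: the $(l,2l)$-independent set of a Compress path with between $l$ and $2l$ vertices is empty (a nonempty $I$ would leave a component with fewer than $l$ vertices), so a Compress path produced at the last iteration need not promote anything via $I_P$. The correct contradiction is different: in a forest, a maximal degree-$2$ path has at each end an outside neighbour of degree $\ne 2$ in $U$; at the final iteration that neighbour cannot have degree $\ge 3$ (it would survive to a further iteration), so it has degree $1$, is Rake-tagged, and is promoted to $V_{L+1}$ because it is adjacent to a Compress vertex --- contradicting that $L$ is the top layer. Two smaller points: the number of maximal degree-$2$ paths is bounded by $(|L_U|+\Delta|H_U|)/2$ rather than $|L_U|+|H_U|$ (harmless for constant $\Delta$, but your case threshold needs adjusting), and in the final dichotomy the endpoint of a Compress path needs the lower bound that components of $V(P)-I$ have at least $l$ vertices, not just independence, to guarantee a surviving neighbour in $V_i$.
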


\subsection{Decomposition in \mpc} \label{subsec:decompmpc}

Function \textsf{Peel($r$)}: compute the lowest $r$ layers of the decomposition by simulating $r/l$ steps (more precisely, only Step 1) of the \local algorithm $l$ times. The $1/l$ factor stems from the fact that in each round, all nodes have to know if they belong to a path of length $\geq l$. Recall that by \Cref{obs:decompProperties}, at least a constant $\Omega(1/l)$ fraction of nodes are eliminated in each simulation. When taking a closer look into~\cite{chang}, the exact fraction is $1/2(l+1) \geq 1/4l$. Hence, we can state that at most a constant $1-1/4l$ fraction of nodes is left in the graph after each step of the \local algorithm. Set constant $c \larr \text{argmin}_c \{c : (1-1/4l)^c < 1/\Delta \}$, and observe that since $\Delta$ and $l$ are constants, $c$ is also constant. The algorithm performs the following and then simulates Step 2 of the \local algorithm.

\begin{enumerate}
    \item For $i = 1,\dots,\log \log n^\delta$: in phase $i$, all nodes apply \textsf{Peel($2^{i-1}$)} $2c$ times, and then perform one graph exponentiation step.
    \item All nodes apply \textsf{Peel($\delta \log n$)} until the graph is empty.
\end{enumerate}

\paragraph*{Correctness and Runtime.} 
Correctness follows from~\cite[Section 3.9]{chang}, as we only simulate their algorithm.
Let us bound the time complexity. In Step 1, during any phase $i$, each node sees its $2^{i-1}$-radius neighborhood due to graph exponentiation. This vision enables each node to apply \textsf{Peel($2^{i-1}$)} $2c$ times, which takes constant time. After $\log \log n^\delta$ phases, all nodes see their $\delta \log n$-radius neighborhoods and Step 1 terminates. In Step 2, nodes apply \textsf{Peel($\delta \log n$)} until the graph is empty, which takes $O(1/\delta)=O(1)$ time, since there are $O(\log n)$ layers in the decomposition in total. Since the vision of each node is $\delta \log n$ and Step 2 of the \local algorithm takes $O(\log^*N) = O(\log^*n)$ time, we can simulate it in $O(1)$ time. We conclude that the algorithm runs in $O(\log \log n)$ time.

\paragraph*{Memory.} In Step 1, during any phase $i$, applying \textsf{Peel($2^{i-1}$)} $2c$ times results in simulating Step 1 of the \local algorithm $2^{i}c$ times. Hence, after applying \textsf{Peel($2^{i-1}$)}, in any phase $i$, there are at most $n \cdot (1-1/4l)^{2^i c} < n / \Delta^{2^i}$ nodes left in the graph. Since the graph exponentiation step of phase $i$ requires at most $\Delta^{2^{i}}$ memory per node, we conclude that each phase (and hence the whole algorithm), requires at most $O(m)$ words of global memory. After $\log \log n^\delta$ phases, all nodes see their $\delta \log n$-radius neighborhoods. Since $\Delta$ is constant, the $\delta \log n$-radius neighborhood of any node contains at most $O(n^\delta)$ nodes and the local memory is never violated.

\section{The Mid-Complexity Regime}\label{sec:mid}

In this section, we will prove that all \lcl problems on trees with deterministic complexity $n^{o(1)}$ in the \local model can be solved deterministically in $O(\log \log n)$ time in the low-space \mpc model using $O(m)$ words of global memory. In previous work, Chang and Pettie showed that in the \local model, there are no \lcl problems on trees whose complexity lies between $\omega(\log n)$ and $n^{o(1)}$~\cite{chang}.
They obtain their result by showing that any problem in this range admits a canonical way to solve it using the decomposition described in \Cref{sec:rc} and a careful method that labels the tree, layer by layer (of the decomposition).

\begin{theorem} \label{thm:mid}
	Consider an \lcl problem $\Pi$ in constant-degree trees with deterministic complexity $n^{o(1)}$ in the \local model. There is a deterministic low-space \mpc algorithm that solves $\Pi$ in $O(\log \log n)$ time using $O(m)$ words of global memory.
\end{theorem}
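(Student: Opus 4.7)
The plan is to simulate the canonical $O(\log n)$-round \local algorithm of Chang and Pettie for $n^{o(1)}$ \lcl problems on trees, compressing it into $O(\log \log n)$ \mpc rounds by combining \Cref{lemma:partitioning} with a carefully scheduled interleaving of graph exponentiation and local simulation. The guiding principle, already previewed in the introduction, is that after processing sufficiently many peeling steps the surviving graph has shrunk enough that its nodes can afford a correspondingly larger visibility radius without blowing the linear global-memory budget.

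First, I would invoke \Cref{lemma:partitioning} to compute the decomposition $V = V_1 \cup \dots \cup V_L$ with $L = O(\log n)$ in $O(\log \log n)$ \mpc rounds using $O(m)$ global memory. The Chang-Pettie algorithm then processes the layers in reverse order ($V_L$ first, $V_1$ last), and for each node only a constant-radius view inside the induced subgraph of currently unlabeled nodes is needed to extend the partial labeling. Hence one layer can be processed in $O(1)$ \local rounds, and the whole algorithm in $O(L) = O(\log n)$ \local rounds.

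Next, I would group the layers into batches $B_1, B_2, \ldots, B_t$ with $t = O(\log \log n^\delta)$ where $B_i$ consists of $\Theta(2^i)$ consecutive layers; since $\sum_{i \le t} 2^i = \Theta(\log n)$, all layers are covered. By \Cref{obs:decompProperties}, each peeling step eliminates at least a $\Omega(1/l)$-fraction of the surviving nodes, so by tuning the constants I can guarantee that after batch $B_{i-1}$ has been processed the remaining graph has at most $n/\Delta^{c \cdot 2^i}$ nodes for a suitable constant $c$, exactly as in the \textsf{Peel} analysis of \Cref{subsec:decompmpc}. \mpc round $i$ then (a) performs one graph-exponentiation step on the induced subgraph of still-unlabeled nodes, raising each survivor's view from radius $\Theta(2^{i-1})$ to radius $\Theta(2^i)$, and (b) simulates, with zero further communication, the $\Theta(2^i)$ \local rounds needed to finalize the labels of all nodes in $B_i$. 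After the final batch, the remaining graph is empty and every node has its output; a constant number of extra rounds translates the canonical output back into a solution of the original \lcl, exactly as in Chang-Pettie's reduction. The total round count is $O(\log \log n)$ for the decomposition plus $O(\log \log n)$ for the batched simulation.

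The main obstacle is the memory accounting, which is the whole reason for the doubling schedule. Before batch $B_i$, at most $n/\Delta^{c \cdot 2^i}$ nodes remain; the exponentiation stores up to $\Delta^{\Theta(2^i)}$ pointers per surviving node, and by choosing $c$ large enough relative to the hidden constant in $\Theta(2^i)$ the product is $O(n)$, giving linear global memory. Local memory is controlled analogously: as long as $2^i \le \delta \log n / \log \Delta$, a single node's $\Theta(2^i)$-neighborhood fits in $O(n^\delta)$ words, which is guaranteed by capping the batch index at $t = O(\log \log n^\delta)$ and handling any residual $O(\log n)$-sized layers by the same constant-time trick used at the end of \Cref{subsec:decompmpc} (each surviving node trivially fits its $\delta \log n$ neighborhood once the survivor count has dropped below $n^{1-\delta}$). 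Correctness is essentially free because we are faithfully simulating the Chang-Pettie algorithm; the nontrivial content is getting the schedule of batch sizes, exponentiation radii, and shrinkage factors to line up so that both global and local memory stay within budget throughout all $O(\log \log n)$ \mpc rounds.
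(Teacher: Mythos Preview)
Your outline has the right ingredients---the decomposition of \Cref{lemma:partitioning}, doubling batches, and interleaved exponentiation---but it collapses the Chang--Pettie procedure into a single top-down sweep, and this creates a genuine gap in the memory argument.

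The canonical algorithm is a \emph{two-pass} procedure. First a \emph{bottom-up} pass computes, for every node $v$, the equivalence class $\class(T_v)$ of the subtree hanging from $v$ (cf.\ \Cref{def:class,lemma: replace}); only after all classes are known does a \emph{top-down} pass fix output labels, each node choosing compatibly with its parent's label and with the already-computed classes of its children's subtrees. Your sentence ``only a constant-radius view inside the induced subgraph of currently unlabeled nodes is needed to extend the partial labeling'' is only true once the class information is available; without it a high-layer node cannot know which labels are extendable through the large subtree below.

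This directional issue breaks your memory bound. The claim ``before batch $B_i$, at most $n/\Delta^{c\cdot 2^i}$ nodes remain'' holds only if the batches already processed are the \emph{lowest} layers---that is the direction of the class-computation pass. In the top-down labeling order you state ($V_L$ first, $V_1$ last), the still-unlabeled nodes after the first $i-1$ batches are essentially all of $V$, so exponentiating on them costs $\Omega(n\cdot\Delta^{\Theta(2^i)})$ global memory and the accounting collapses. The repair is to run the batched exponentiation during the bottom-up class pass (where the survivor count does shrink geometrically), keep the resulting communication graph, and then reuse it for a second, reversed sweep that propagates labels; that second sweep is another $O(\log\log n)$ rounds, not ``a constant number of extra rounds''. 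One further point the paper handles and you do not: summing the per-batch $O(n)$ communication graphs over all batches is super-linear, which the paper trims to strict $O(m)$ by first sequentially peeling $\Theta(\log\log n)$ bottom layers so that only $O(n/\log n)$ nodes ever participate in exponentiation.
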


In order to prove the theorem above, we only need to show that the canonical algorithm of Chang and Pettie~\cite{chang} can be performed deterministically in \mpc using our strict memory parameters. The basis for their algorithm is a rake-and-compress decomposition, which we have already shown can be obtained deterministically in $O(\log \log n)$ time in \mpc with the desired memory parameters (\Cref{sec:rc}). What is left, is to show that the rest of their algorithm can also be performed exponentially faster in the low-space \mpc model. For this, we provide an explicit algorithm.

Let us adopt the same node-labeled \lcl problem definition as Chang and Pettie~\cite{chang}. 
Note that this definition includes port-numberings and is hence equivalent to our definition of half-edge labeled \lcls (\Cref{def:lcl}).

\begin{definition}[\lcl, Chang and Pettie~\cite{chang}] \label{def:newlcl}
Fix a class $\fG$ of possible input graphs and let $\Delta$ be the maximum degree in any such graph. An \lcl problem $\Pi$ for $\fG$ has a radius $r$, constant size input and output alphabets $\sinn$, $\sout$, and a set $\fC$ of acceptable configurations. Note that $\sinn$ and $\sout$ can include $\bot$. Each $C \in \fC$ is a graph centered at a specific vertex, in which each vertex has a degree, a port numbering, and two labels from $\sinn$ and $\sout$. Given the input graph $G(V,E,\phinn)$ where $\phinn: V(G)  \xrightarrow{} \sinn$, a feasible labeling output is any function $\phout :V(G)  \xrightarrow{} \sout$ such that for each $v \in V(G)$, the subgraph induced by $N^{r}(v)$ (denoting the $r$-neighborhood of $v$ together with information stored there: vertex degrees, port numberings, input labels, and output labels) is isomorphic to a member of $\fC$. A complete labeling output is such that for each $v$, $\phout(v) \neq \bot$. An \lcl can be described explicitly by enumerating a finite number of acceptable configurations.
\end{definition}

Let us revisit some other definitions and results of Chang and Pettie~\cite{chang} before introducing our algorithm.

\begin{definition}[Class, Chang and Pettie~\cite{chang}] \label{def:class}
    Consider a rooted tree $T$ with a root $v$ and an \lcl problem $\Pi$ as in \Cref{def:newlcl}. The (equivalence) class of $T$, denoted $\class(T)$ is the set of all possible node labelings of the $r$-hop neighborhood of $v$ such that the labeling can be extended to a complete feasible labeling of $T$ (with respect to the \lcl problem $\Pi$). Note that for constant-degree trees, the number of equivalence classes is constant.
    
\end{definition}

\begin{lemma}\label{lemma: replace}
    Consider a graph $G$, an \lcl problem $\Pi$ as in \Cref{def:newlcl}, and the rake-and-compress decomposition described in \Cref{sec:rc}. Recall that the decomposition is parameterized by a constant $\ell$ that depends on the input $\lcl$ and that $G_i$ denotes the graph induced by the nodes in layer $i$ or higher.
    
    Let $P = (v_1, v_2, \ldots, v_x)$, for some $x \in [\ell, 2\ell]$ be a path induced by nodes in $G_i$ with degree $2$ and let $s = v_1$ and $t = v_x$. Moreover, let graph $T_1 \cup T_2 \cup \ldots \cup T_x$, denoted by $H = (T_1, T_2, \ldots, T_x)$, correspond to the sequence of disjoint trees rooted from nodes $(v_1,v_2, \dots, v_x)$. Then, there exists a tree $H^+$ with two dedicated nodes $s^+$ and $t^+$ such that the following holds.
    
    Let $G_s$ be the connected component of $G - H$ that is adjacent to node $s$ in $G$. Notice that $s \not\in G_s$. Then, if graph $G_s$ is not empty, graph $G_{s^+}$ is created by connecting a copy of $G_s$ to a copy of $H^+$ via a single edge $\{u,s^+\}$ such that $u \in G_{s}$ and $s^+ \in H^+$. The graph $G_t$ is constructed identically but using a disjoint copy of $H^+$.
    
    
    Then, if graphs $G_{s^+}$ and $G_{t^+}$ admit feasible node labelings $\phout^{s^+}$ and $\phout^{t^+}$, then the input graph $G$ admits a feasible node labeling $\phout^*$ such that for any $v \in G - H$, we have $\phi^*(v) = \phout^{s^+} (v)$ if $v \in G_s$ and $\phout^*(v) = \phout^{t^+} (v)$ if $v \in G_t$. Furthermore, the graph $H^+$ can be computed with the knowledge of $\class(T_i)$ for each $i$.
\end{lemma}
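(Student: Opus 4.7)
The plan is to reduce the lemma to an interface-preservation argument. Define, for any constant-size gadget $H$ with two marked terminals $s, t$, the set $\mathcal{F}_{H, s, t}$ of pairs $(\alpha, \beta)$ of partial labelings of the vertices of $H$ that lie within distance $r$ of $s$ (resp.\ within $r$ of $t$) such that $\alpha$ and $\beta$ admit a joint extension to a labeling $\phout^H$ of $H$ under which every internal vertex of $H$ sees a member of $\fC$. The first step I would prove is that $\mathcal{F}_{H,s,t}$ is determined solely by the tuple $(\class(T_1), \ldots, \class(T_x))$ together with the port/degree data of the constant-length path $P$. The argument is that by \Cref{def:class}, $\class(T_i)$ already captures exactly the feasibility constraints contributed by $T_i$ as seen from its root $v_i$, so a feasible labeling of $H$ factors (up to the constant-length path interaction) into a choice, for each $i$, of an element of $\class(T_i)$ subject to a consistency condition along $P$.

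The construction of $H^+$ is then obtained by an exhaustive search among bi-terminal trees of constant size: pick any $H^+$ whose constant-radius neighborhoods around $s^+, t^+$ are isomorphic, in shape, degree, port numbering, and input labels, to those of $s, t$ in $H$, and which satisfies $\mathcal{F}_{H^+, s^+, t^+} \supseteq \mathcal{F}_{H, s, t}$. Existence is immediate since $H$ itself with $s^+ \coloneqq s$ and $t^+ \coloneqq t$ works, and the search is driven entirely by $(\class(T_1), \ldots, \class(T_x))$ because $\mathcal{F}_H$ only depends on that data. To close the lemma, suppose $\phout^{s^+}$ and $\phout^{t^+}$ are feasible labelings of $G_{s^+}$ and $G_{t^+}$. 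Extract from them the labelings $\alpha^+$ and $\beta^+$ they induce on the $r$-balls around $s^+$ in $H^+$ and around $t^+$ in $H^+$, respectively. By the choice of $H^+$, the pair $(\alpha^+, \beta^+)$ (transported from $H^+$ to $H$ via the canonical boundary isomorphism that identifies $s^+$ with $s$ and $t^+$ with $t$) lies in $\mathcal{F}_H$, so it extends to a feasible labeling $\phout^H$ of $H$ agreeing with these interface labels. Defining $\phout^*$ to equal $\phout^{s^+}$ on $G_s$, $\phout^{t^+}$ on $G_t$, and $\phout^H$ on $H$ now yields a feasible labeling of $G$: every vertex's $r$-neighborhood is, by construction, isomorphic to its $r$-neighborhood either inside $G_{s^+}$, inside $G_{t^+}$, or inside the feasible extension of $H$.

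The main obstacle will be making the ``interface isomorphism'' precise, because the $r$-neighborhood of a boundary vertex $v$ close to $s$ in $G$ can extend across the edge $\{u, s\}$ into $H$, while in $G_{s^+}$ it extends across $\{u, s^+\}$ into $H^+$, and similarly on the $t$ side. I would handle this by baking into the definition of $\mathcal{F}_{H,s,t}$ all information about $H$ within distance $r$ of $s$ and $t$ (combinatorial shape, port numbering, input labels, plus the induced output labels on those half-balls), so that the matching condition used to select $H^+$ forces the boundary $r$-neighborhoods to coincide exactly. Once the definition is tightened this way, the gluing step becomes routine, and the remaining edge cases (when $G_s$ or $G_t$ is empty, or when the $r$-neighborhoods of $s$ and $t$ overlap because $x$ and $r$ interact unfavorably) follow by the same scheme with minor case analysis, using that $\ell$ was chosen as a constant depending on the \lcl.
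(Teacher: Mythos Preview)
Your overall shape—reduce to boundary data $\mathcal{F}_{H,s,t}$ and glue—is right, and the observation that $\mathcal{F}_H$ depends only on $(\class(T_1),\ldots,\class(T_x))$ is the correct starting point. But two genuine gaps remain, and they correspond precisely to the techniques the paper defers to Chang--Pettie.

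\textbf{Existence of a constant-size $H^+$.} Your sentence ``$H$ itself works'' does not close the argument: the trees $T_i$ can be arbitrarily large, so $H$ need not lie in your search space of constant-size bi-terminal trees, and taking $H^+=H$ is not computable from the class tuple alone, contradicting the ``Furthermore'' clause of the lemma. Producing, for every realized class tuple, a bounded-size representative is exactly what the \emph{pumping} and \emph{tree-surgery} arguments accomplish (find nested subtrees of equal class, excise the region between them, and iterate until everything is small).

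\textbf{The gluing step.} The inclusion $\mathcal{F}_{H^+}\supseteq\mathcal{F}_H$ is in the wrong direction and, even reversed, would not yield $(\alpha^+,\beta^+)\in\mathcal{F}_H$. The labelings $\alpha^+$ and $\beta^+$ are extracted from feasible labelings of \emph{two disjoint copies} of $H^+$, one attached only at $s^+$ and one only at $t^+$; nothing in your criterion forces the pair to be jointly realizable even in a single copy of $H^+$, let alone in $H$. What is actually needed is a decoupling condition of the form $\mathrm{proj}_s(\mathcal{F}_{H^+})\times\mathrm{proj}_t(\mathcal{F}_{H^+})\subseteq\mathcal{F}_H$. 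Chang--Pettie obtain this via \emph{pre-commitment}: $H^+$ is constructed together with a fixed interior labeling that is compatible with every boundary condition that can arise, which is what makes the $s$-side and $t$-side choices independent. Without this ingredient the gluing claim fails.
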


\begin{proof}
    The existence of $H^+$ is rigorously proven by Chang and Pettie~\cite[Lemma 9]{chang} through \textit{pumping}, \textit{tree surgery} and \textit{pre-commitment} techniques. The same previous work describes the computation of graph $H^+$~\cite[Proof of Lemma 13]{chang}.
\end{proof}

Note that in \Cref{def:class,lemma: replace} we talk about rooted trees and rooted subtrees hanging from nodes in a path. One could rightfully assume that we either root the tree beforehand or assume a rooted tree as input. However, we do none of the previous. Instead, when talking about a tree $T_v$ rooted at node $v$ in layer $i$, we simply refer to the subgraph induced by nodes in layers $< i$ that is connected to $v$ via one or possibly multiple edges.

\subsection{The Algorithm} \label{subsec:midregime}

Suppose that we are given the rake-and-compress decomposition described in \Cref{sec:rc}. First, we divide the nodes in \emph{batches} according to which layer they belong to in the decomposition. Let $c$ be such that for all $i$, $|\bigcup_{j \geq i + c} V_j| \leq |\bigcup_{j \geq i} V_j| \cdot \Delta^{-1}$, i.e., if we remove $c$ layers from the decomposition, the number of nodes drops by a factor of at least $\Delta$. By \Cref{obs:decompProperties}, we know that $c$ is a constant. Let us define nodes in layers $V_1,\dots,V_c$ as batch $B_0$. For $i>0$ and as long as $\Delta^{2^i} \leq n^{\delta}$, let us define nodes in layers $V_{(2^i-1) \cdot c + 1}, \ldots, V_{(2^{i+1}-1) \cdot c}$ as batch $i$; see \Cref{fig:mid}. Note that their are $O(\log \log n)$ batches as  defined previously, and assuming that there is enough layers, batch $i$ always consists of $2^i \cdot c$ layers. All nodes that do not belong to any batch, as defined previously, are defined as batch $B_L$. The algorithm starts with running $O(\log \log n)$ phases, each of which is executed in a constant number of \mpc rounds and consists of the following steps.

\begin{figure}
    \centering
    \includegraphics[width=0.8\textwidth]{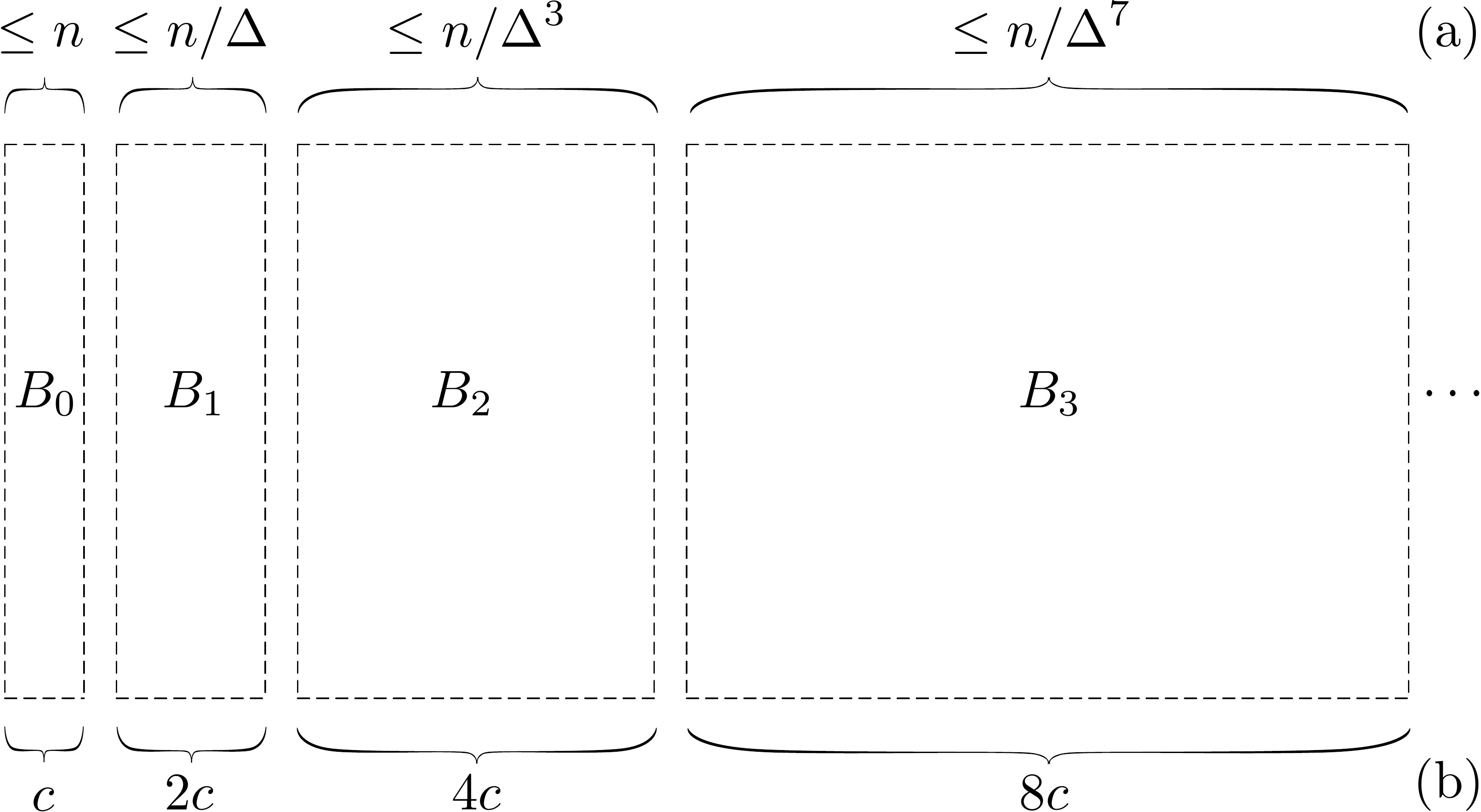}
    \caption{The first four batches, where each batch $i$ contains $2^i \cdot c$ layers and at most $n/\Delta^{2^i-1}$ nodes: (a) number of nodes in a batch, (b) number of layers in a batch.}
    \label{fig:mid}
\end{figure}

\begin{enumerate}
    \item In the start of phase $i$, our communication graph is $G^{2^i}$. We process the $i$:th batch by simulating $2^i \cdot c$ iterations of the following local process. If a node $v$ has neighbors only in higher layers, it locally computes $\class(v)$ and informs its (unique) neighbor about the class. If a node $v$ has neighbors in the same layer, then it must be part of a constant length path. Each node that is not an endpoint of such a path locally computes its class and informs the endpoints of the class. Once an endpoint $s$ learns of all classes on the path, it locally replaces the path with graph $H^+$ as described in \Cref{lemma: replace}. Then $s$ locally computes $\class(s)$ and informs its parent (if any) of the class.
    \item As the second step, for all but the last phase, each node in batch $B_{j > i}$ and in batch $B_L$ performs one step of graph exponentiation. Note that the nodes that have computed their classes, i.e., nodes in  batch $B_{j \leq i}$ do not participate in graph exponentiation. Thus, we obtain $G^{2^{i+1}}$ as our communication graph for batches $>i$ and batch $B_L$. We make the exception for the last phase as we do not want to violate local memory of the nodes in batch $B_L$.
\end{enumerate}

Since we process one batch in each phase, all numbered batches are processed in $O(\log \log n)$ phases. If batch $B_L$ is non-empty, its communication graph is $G^{\Theta(\log n)}$ and the algorithm proceeds by simulating the local process described above until all nodes in $B_L$ has derived their class.

Once all nodes have computed their class, we process the batches in the reversed order. The (local) roots begin by choosing a label that can be extended to a valid labeling on the whole graph. Then, once a node learns the feasible label of its parent (nodes have at most one neighbor in a higher layer), it can choose an extendable label. Similarly, once the parents of both endpoints of a layer-induced path of $H^+$ have decided on their labels, the endpoints choose a valid labeling of the original layer-induced path of $H$.

\paragraph*{Correctness.} By the definition of a class and by \Cref{lemma: replace}, the operations in Step 1 are always possible and all nodes can compute their class. The existence of the valid labels for each node is again provided by the definition of a class and by \Cref{lemma: replace}.

\paragraph*{Runtime.} Let us bound the time complexity of the algorithm. First, let us analyze the complexity of computing the class of each node. Each phase $i$ indeed takes constant time, since simulating $2^i \cdot c$ iterations of the local process takes $O(1)$ time due to the communication graph in batches $\geq i$ being $G^{2^i}$. As mentioned previously, after $O(\log \log n)$ phases, all batches except $B_L$ have been processed and the communication graph of batch $B_L$ is $G^{\Theta(\log n)}$. Note that by \Cref{obs:decompProperties}, batch $B_L$ contains $O(\log n)$ layers. Hence, simulating the local process of our algorithm on our communication graph $G^{\Theta(\log n)}$ takes constant time, after which all nodes have computed their class. 

Next, let us analyze the complexity of computing the label of each node. Using the communication graph created throughout the algorithm, we process the batches in the reverse order, which requires the same number of \mpc rounds as computing the class. 

Since obtaining the rake-and-compress decomposition required for this process takes $O(\log \log n)$ time, the overall runtime of the algorithm is $O(\log \log n)$.

\paragraph*{Memory.} Next, let us analyze the memory requirement. The local memory bound is respected, since a batch $i$ is defined such that $\Delta^{2^i} \leq n^\delta$ and a node in batch $i$ executes at most $i$ steps of graph exponentiation, resulting in a neighborhood containing at most $\Delta^{2^i}$ nodes. Note that batch $B_L$ performs the same number of graph exponentiation steps as does the last numbered batch and hence, local memory bounds are not violated. Now for the global memory bound. Consider batch $i$. By the definition of $c$, we have that batch $i$ contains at most $|\bigcup_{j \geq (2^i-1) \cdot c} V_j| \leq n \cdot \Delta^{-(2^i-1)}$ nodes. Since batch $i$ executes at most $i$ steps of graph exponentiation, the memory required to store the communication graph per numbered batch is at most $|\bigcup_{j \geq (2^i-1) \cdot c} V_j| \cdot \Delta^{2^i} \leq n \cdot \Delta$. Now recall that nodes in batch $B_L$ performs the same number of graph exponentiation steps as does the last numbered batch. Observe that the size of $B_L$ has the same upper bound as the last numbered batch. Hence, storing the communication graph of $B_L$ also requires at most $n \cdot \Delta$ memory. The number of batches is clearly upper bounded by the number of layers and hence, the global memory use is bounded by $O(n \log n)$.

By separating the nodes in the first $c \log \log n$ layers and computing their classes prior to the execution and their labels after the other nodes have been handled, we can drop the requirement to sharp $O(n)$, and hence $O(m)$. This is evident as separating the first $c \log \log n$ layers leaves us with at most $n/\log n$ nodes. This scheme contributes only an additive $O(\log \log n)$ term to the runtime. 

Another source of memory issues during execution could be sending messages of $\omega(1)$ size. Fortunately, our messages only contain class information (\Cref{def:class}), and since there are only a constant number of classes, our messages are of constant size.

%


\clearpage
\phantomsection
\bibliographystyle{alphaurl}
\bibliography{mpc-tree-hierarchy}

\newcommand{\etalchar}[1]{$^{#1}$}
\begin{thebibliography}{BCHM{\etalchar{+}}21}

\bibitem[ABI86]{Alon86}
Noga Alon, L\'{a}szl\'{o} Babai, and Alon Itai.
\newblock {A Fast and Simple Randomized Parallel Algorithm for the Maximal
  Independent Set Problem}.
\newblock {\em J. Algorithms}, 1986.
\newblock \href {https://doi.org/10.1016/0196-6774(86)90019-2}
  {\path{doi:10.1016/0196-6774(86)90019-2}}.

\bibitem[ANOY14]{mpcrefine1}
Alexandr Andoni, Aleksandar Nikolov, Krzysztof Onak, and Grigory Yaroslavtsev.
\newblock Parallel algorithms for geometric graph problems.
\newblock In {\em STOC}, 2014.
\newblock \href {https://doi.org/10.1145/2591796.2591805}
  {\path{doi:10.1145/2591796.2591805}}.

\bibitem[Aut21]{low-complex}
Anonymous Authors.
\newblock Personal communication, work in submission, 2021.

\bibitem[BBD{\etalchar{+}}19]{Behnezhad19}
Soheil Behnezhad, Sebastian Brandt, Mahsa Derakhshan, Manuela Fischer,
  MohammadTaghi Hajiaghayi, Richard~M. Karp, and Jara Uitto.
\newblock {Massively Parallel Computation of Matching and MIS in Sparse
  Graphs}.
\newblock In {\em PODC}, 2019.
\newblock \href {https://doi.org/10.1145/3293611.3331609}
  {\path{doi:10.1145/3293611.3331609}}.

\bibitem[BBE{\etalchar{+}}20]{balliu2020}
Alkida Balliu, Sebastian Brandt, Yuval Efron, Juho Hirvonen, Yannic Maus,
  Dennis Olivetti, and Jukka Suomela.
\newblock {Classification of Distributed Binary Labeling Problems}.
\newblock In {\em DISC}, 2020.
\newblock \href {https://doi.org/10.1145/3382734.3405703}
  {\path{doi:10.1145/3382734.3405703}}.

\bibitem[BBO{\etalchar{+}}21]{Balliu21}
Alkida Balliu, Sebastian Brandt, Dennis Olivetti, Jan Studeny, Jukka Suomela,
  and Aleksandr Tereshchenko.
\newblock {Locally Checkable Problems in Rooted Trees}.
\newblock In {\em PODC}, 2021.
\newblock \href {https://doi.org/10.1145/3465084.3467934}
  {\path{doi:10.1145/3465084.3467934}}.

\bibitem[BBOS18]{tree3}
Alkida Balliu, Sebastian Brandt, Dennis Olivetti, and J.~Suomela.
\newblock {Almost Global Problems in the LOCAL Model}.
\newblock In {\em DISC}, 2018.
\newblock \href {https://doi.org/10.4230/LIPIcs.DISC.2018.9}
  {\path{doi:10.4230/LIPIcs.DISC.2018.9}}.

\bibitem[BCHM{\etalchar{+}}21]{lclcongest}
Alkida Balliu, Keren Censor-Hillel, Yannic Maus, Dennis Olivetti, and Jukka
  Suomela.
\newblock {Locally Checkable Labelings with Small Messages}.
\newblock In {\em DISC}, 2021.
\newblock \href {https://doi.org/10.4230/LIPIcs.DISC.2021.8}
  {\path{doi:10.4230/LIPIcs.DISC.2021.8}}.

\bibitem[BFH{\etalchar{+}}16]{sinkless16}
Sebastian Brandt, Orr Fischer, Juho Hirvonen, Barbara Keller, Tuomo
  Lempi{\"{a}}inen, Joel Rybicki, Jukka Suomela, and Jara Uitto.
\newblock {A Lower Bound for the Distributed Lov{\'{a}}sz Local Lemma}.
\newblock In {\em STOC}, 2016.
\newblock \href {https://doi.org/10.1145/2897518.2897570}
  {\path{doi:10.1145/2897518.2897570}}.

\bibitem[BFU19]{sirocco}
Sebastian Brandt, Manuela Fischer, and Jara Uitto.
\newblock {Breaking the Linear-Memory Barrier in MPC: Fast MIS on Trees with
  Strongly Sublinear Memory}.
\newblock In {\em SIROCCO}, 2019.
\newblock \href {https://doi.org/10.1007/978-3-030-24922-9_9}
  {\path{doi:10.1007/978-3-030-24922-9_9}}.

\bibitem[BHK{\etalchar{+}}17]{Brandt17}
Sebastian Brandt, Juho Hirvonen, Janne~H. Korhonen, Tuomo Lempi\"{a}inen,
  Patric~R.J. \"{O}sterg\r{a}rd, Christopher Purcell, Joel Rybicki, Jukka
  Suomela, and Przemys\l{}aw Uzna\'{n}ski.
\newblock {LCL Problems on Grids}.
\newblock In {\em PODC}, 2017.
\newblock \href {https://doi.org/10.1145/3087801.3087833}
  {\path{doi:10.1145/3087801.3087833}}.

\bibitem[BHK{\etalchar{+}}18]{BHKLOS18}
Alkida Balliu, Juho Hirvonen, Janne~H. Korhonen, Tuomo Lempi{\"{a}}inen, Dennis
  Olivetti, and Jukka Suomela.
\newblock {New Classes of Distributed Time Complexity}.
\newblock In {\em STOC}, 2018.
\newblock \href {https://doi.org/10.1145/3188745.3188860}
  {\path{doi:10.1145/3188745.3188860}}.

\bibitem[BHOS19]{tree2}
Alkida Balliu, Juho Hirvonen, Dennis Olivetti, and Jukka Suomela.
\newblock {Hardness of Minimal Symmetry Breaking in Distributed Computing}.
\newblock In {\em PODC}, 2019.
\newblock \href {https://doi.org/10.1145/3293611.3331605}
  {\path{doi:10.1145/3293611.3331605}}.

\bibitem[BKS17]{mpcrefine2}
Paul Beame, Paraschos Koutris, and Dan Suciu.
\newblock Communication steps for parallel query processing.
\newblock {\em J. ACM}, 2017.
\newblock \href {https://doi.org/10.1145/3125644} {\path{doi:10.1145/3125644}}.

\bibitem[CC21]{coy2021deterministic}
Sam Coy and Artur Czumaj.
\newblock {Deterministic Massively Parallel Connectivity}, 2021.
\newblock \href {http://arxiv.org/abs/2108.04102} {\path{arXiv:2108.04102}}.

\bibitem[CDP21a]{componentstable}
Artur Czumaj, Peter Davies, and Merav Parter.
\newblock {Component Stability in Low-Space Massively Parallel Computation}.
\newblock In {\em PODC}, 2021.
\newblock \href {https://doi.org/10.1145/3465084.3467903}
  {\path{doi:10.1145/3465084.3467903}}.

\bibitem[CDP21b]{detcol}
Artur Czumaj, Peter Davies, and Merav Parter.
\newblock {Improved Deterministic {$(\Delta+1)$} Coloring in Low-Space MPC}.
\newblock In {\em PODC}, 2021.
\newblock \href {https://doi.org/10.1145/3465084.3467937}
  {\path{doi:10.1145/3465084.3467937}}.

\bibitem[CFG{\etalchar{+}}19]{Chang2019}
Yi-Jun Chang, Manuela Fischer, Mohsen Ghaffari, Jara Uitto, and Yufan Zheng.
\newblock {The Complexity of {$(\Delta+1)$} Coloring in Congested Clique,
  Massively Parallel Computation, and Centralized Local Computation}.
\newblock In {\em PODC}, 2019.
\newblock \href {https://doi.org/10.1145/3293611.3331607}
  {\path{doi:10.1145/3293611.3331607}}.

\bibitem[Cha20]{Chang2020}
Yi-Jun Chang.
\newblock {The Complexity Landscape of Distributed Locally Checkable Problems
  on Trees}.
\newblock In {\em DISC}, 2020.
\newblock \href {https://doi.org/10.4230/LIPIcs.DISC.2020.18}
  {\path{doi:10.4230/LIPIcs.DISC.2020.18}}.

\bibitem[CHL{\etalchar{+}}19]{tree1}
Yi-Jun Chang, Qizheng He, Wenzheng Li, Seth Pettie, and Jara Uitto.
\newblock {Distributed Edge Coloring and a Special Case of the Constructive
  Lovász Local Lemma}.
\newblock {\em ACM Trans. Algorithms}, 2019.
\newblock \href {https://doi.org/10.1145/3365004} {\path{doi:10.1145/3365004}}.

\bibitem[CKP19]{CKP19}
Yi{-}Jun Chang, Tsvi Kopelowitz, and Seth Pettie.
\newblock {An Exponential Separation between Randomized and Deterministic
  Complexity in the {LOCAL} Model}.
\newblock {\em {SIAM} J. Comput.}, 2019.
\newblock \href {https://doi.org/10.1137/17M1117537}
  {\path{doi:10.1137/17M1117537}}.

\bibitem[CP17]{chang}
Yi-Jun Chang and Seth Pettie.
\newblock {A Time Hierarchy Theorem for the LOCAL Model}.
\newblock In {\em FOCS}, 2017.
\newblock \href {https://doi.org/10.1109/FOCS.2017.23}
  {\path{doi:10.1109/FOCS.2017.23}}.

\bibitem[DFKL21]{spanner}
Michal Dory, Orr Fischer, Seri Khoury, and Dean Leitersdorf.
\newblock {Constant-Round Spanners and Shortest Paths in Congested Clique and
  MPC}.
\newblock In {\em PODC}, 2021.
\newblock \href {https://doi.org/10.1145/3465084.3467928}
  {\path{doi:10.1145/3465084.3467928}}.

\bibitem[DG04]{dg04}
Jeffrey Dean and Sanjay Ghemawat.
\newblock {MapReduce: Simplified Data Processing on Large Clusters}.
\newblock In {\em OSDI}, 2004.
\newblock \href {https://doi.org/10.1145/1327452.1327492}
  {\path{doi:10.1145/1327452.1327492}}.

\bibitem[FG17]{FischerG17}
Manuela Fischer and Mohsen Ghaffari.
\newblock {Sublogarithmic Distributed Algorithms for Lov{\'{a}}sz Local Lemma,
  and the Complexity Hierarchy}.
\newblock In {\em DISC}, 2017.
\newblock \href {https://doi.org/10.4230/LIPIcs.DISC.2017.18}
  {\path{doi:10.4230/LIPIcs.DISC.2017.18}}.

\bibitem[GGJ20]{GGC20}
Mohsen Ghaffari, Christoph Grunau, and Ce~Jin.
\newblock {Improved {MPC} Algorithms for {MIS}, Matching, and Coloring on Trees
  and Beyond}.
\newblock In {\em DISC}, 2020.
\newblock \href {https://doi.org/10.4230/LIPIcs.DISC.2020.34}
  {\path{doi:10.4230/LIPIcs.DISC.2020.34}}.

\bibitem[GGR21]{GhaffariGR21}
Mohsen Ghaffari, Christoph Grunau, and V{\'{a}}clav Rozhon.
\newblock {Improved Deterministic Network Decomposition}.
\newblock In {\em SODA}, 2021.
\newblock \href {https://doi.org/10.1137/1.9781611976465.173}
  {\path{doi:10.1137/1.9781611976465.173}}.

\bibitem[Gha16]{Ghaffari16}
Mohsen Ghaffari.
\newblock {An Improved Distributed Algorithm for Maximal Independent Set}.
\newblock In {\em SODA}, 2016.
\newblock \href {https://doi.org/10.1137/1.9781611974331.ch20}
  {\path{doi:10.1137/1.9781611974331.ch20}}.

\bibitem[GHK18]{GHK18}
Mohsen Ghaffari, David~G. Harris, and Fabian Kuhn.
\newblock {On Derandomizing Local Distributed Algorithms}.
\newblock In {\em FOCS}, 2018.
\newblock \href {https://doi.org/10.1109/FOCS.2018.00069}
  {\path{doi:10.1109/FOCS.2018.00069}}.

\bibitem[GKU19]{focs}
Mohsen Ghaffari, Fabian Kuhn, and Jara Uitto.
\newblock {Conditional Hardness Results for Massively Parallel Computation from
  Distributed Lower Bounds}.
\newblock In {\em FOCS}, 2019.
\newblock \href {https://doi.org/10.1109/FOCS.2019.00097}
  {\path{doi:10.1109/FOCS.2019.00097}}.

\bibitem[GRB22]{lclcomplete}
Christoph Grunau, Vaclav Rozhon, and Sebastian Brandt.
\newblock {The Landscape of Distributed Complexities on Trees and Beyond},
  2022.
\newblock \href {http://arxiv.org/abs/2202.04724} {\path{arXiv:2202.04724}}.

\bibitem[GS17]{GhaSu17}
Mohsen Ghaffari and Hsin{-}Hao Su.
\newblock {Distributed Degree Splitting, Edge Coloring, and Orientations}.
\newblock In {\em SODA}, 2017.
\newblock \href {https://doi.org/10.1137/1.9781611974782.166}
  {\path{doi:10.1137/1.9781611974782.166}}.

\bibitem[GSZ11]{broadcast}
Michael~T. Goodrich, Nodari Sitchinava, and Qin Zhang.
\newblock {Sorting, Searching, and Simulation in the Mapreduce Framework}.
\newblock In {\em ICAAC}, 2011.
\newblock \href {https://doi.org/10.1007/978-3-642-25591-5_39}
  {\path{doi:10.1007/978-3-642-25591-5_39}}.

\bibitem[GU19]{GU19}
Mohsen Ghaffari and Jara Uitto.
\newblock {Sparsifying Distributed Algorithms with Ramifications in Massively
  Parallel Computation and Centralized Local Computation}.
\newblock In {\em SODA}, 2019.
\newblock \href {https://doi.org/10.1137/1.9781611975482.99}
  {\path{doi:10.1137/1.9781611975482.99}}.

\bibitem[IBY{\etalchar{+}}07]{Isard:2007}
Michael Isard, Mihai Budiu, Yuan Yu, Andrew Birrell, and Dennis Fetterly.
\newblock {Dryad: Distributed Data-Parallel Programs from Sequential Building
  Blocks}.
\newblock In {\em EuroSys}, 2007.
\newblock \href {https://doi.org/10.1145/1272996.1273005}
  {\path{doi:10.1145/1272996.1273005}}.

\bibitem[KLM{\etalchar{+}}14]{pathexp}
Raimondas Kiveris, Silvio Lattanzi, Vahab Mirrokni, Vibhor Rastogi, and Sergei
  Vassilvitskii.
\newblock {Connected Components in {MapReduce} and Beyond}.
\newblock In {\em SOCC}, 2014.
\newblock \href {https://doi.org/10.1145/2670979.2670997}
  {\path{doi:10.1145/2670979.2670997}}.

\bibitem[KSV10]{KarloffSV10}
Howard~J. Karloff, Siddharth Suri, and Sergei Vassilvitskii.
\newblock {A Model of Computation for MapReduce}.
\newblock In {\em SODA}, 2010.
\newblock \href {https://doi.org/10.1137/1.9781611973075.76}
  {\path{doi:10.1137/1.9781611973075.76}}.

\bibitem[Lin87]{linial}
Nathan Linial.
\newblock {Distributive Graph Algorithms -- Global Solutions from Local Data}.
\newblock In {\em FOCS}, 1987.
\newblock \href {https://doi.org/10.1109/SFCS.1987.20}
  {\path{doi:10.1109/SFCS.1987.20}}.

\bibitem[Lin92]{local}
Nathan Linial.
\newblock {Locality in Distributed Graph Algorithms}.
\newblock {\em {SIAM} J. Computing}, 1992.
\newblock \href {https://doi.org/10.1137/0221015} {\path{doi:10.1137/0221015}}.

\bibitem[Lub85]{Luby85}
Michael Luby.
\newblock {A Simple Parallel Algorithm for the Maximal Independent Set
  Problem}.
\newblock In {\em STOC}, 1985.
\newblock \href {https://doi.org/10.1145/22145.22146}
  {\path{doi:10.1145/22145.22146}}.

\bibitem[LW10]{wattenhofer}
Christoph Lenzen and Roger Wattenhofer.
\newblock {Brief Announcement: Exponential Speed-Up of Local Algorithms Using
  Non-Local Communication}.
\newblock In {\em PODC}, 2010.
\newblock \href {https://doi.org/10.1145/1835698.1835772}
  {\path{doi:10.1145/1835698.1835772}}.

\bibitem[NS95]{NaorS95}
Moni Naor and Larry~J. Stockmeyer.
\newblock {What Can be Computed Locally?}
\newblock {\em {SIAM} J. Computing}, 1995.
\newblock \href {https://doi.org/10.1137/S0097539793254571}
  {\path{doi:10.1137/S0097539793254571}}.

\bibitem[Pel00]{peleg}
David Peleg.
\newblock {\em {Distributed Computing: A Locality-Sensitive Approach}}.
\newblock Society for Industrial and Applied Mathematics, 2000.
\newblock \href {https://doi.org/10.1137/1.9780898719772}
  {\path{doi:10.1137/1.9780898719772}}.

\bibitem[RG20]{RozhonG20}
V{\'{a}}clav Rozhon and Mohsen Ghaffari.
\newblock {Polylogarithmic-Time Deterministic Network Decomposition and
  Distributed Derandomization}.
\newblock In {\em STOC}, 2020.
\newblock \href {https://doi.org/10.1145/3357713.3384298}
  {\path{doi:10.1145/3357713.3384298}}.

\bibitem[RVW18]{Roughgarden18}
Tim Roughgarden, Sergei Vassilvitskii, and Joshua~R. Wang.
\newblock {Shuffles and Circuits (On Lower Bounds for Modern Parallel
  Computation)}.
\newblock {\em J. ACM}, 2018.
\newblock \href {https://doi.org/10.1145/3232536} {\path{doi:10.1145/3232536}}.

\bibitem[Whi12]{White:2012}
Tom White.
\newblock {\em {Hadoop: The Definitive Guide}}.
\newblock O'Reilly Media, Inc., 2012.
\newblock \url{https://dl.acm.org/doi/book/10.5555/1717298}.

\bibitem[ZCF{\etalchar{+}}10]{ZahariaCFSS10}
Matei Zaharia, Mosharaf Chowdhury, Michael~J. Franklin, Scott Shenker, and Ion
  Stoica.
\newblock {Spark: Cluster Computing with Working Sets}.
\newblock In {\em HotCloud}, 2010.
\newblock \url{https://dl.acm.org/doi/10.5555/1863103.1863113}.

\end{thebibliography}

\newpage
\appendix

\section{The Broadcast Tree} \label{sec:broadcasttree}

A commonly used subroutine in the \mpc model is the broadcast (or converge-cast) tree. The \mpc broadcast tree is an $n^\delta$-ary communication tree with depth $1/\delta=\Theta(1)$. It enables broadcasting constant sized messages to all machines in constant time while respecting $O(n^\delta)$ local memory and $O(m)$ global memory bounds. It is often assumed to exist without much discussion~\cite{GGC20, sirocco, Behnezhad19, broadcast}. Let us reason how a broadcast tree can be constructed. In the \mpc model, we allow global communication between all machines, and in order for it to be feasible, all machines have to somehow know the address of every other machine. Locally, each machine can do the following computation.

\begin{enumerate}
	\item Sort all addresses by some deterministic criteria and map them to set $\{1,2,\dots,M\}$.
	\item Enumerate the $n^\delta$-ary broadcast tree in a breadth-first fashion, after which each machine knows the position of all machines in the broadcast tree, in particular, its own.
\end{enumerate}

Hence, a broadcast tree can always be used as a subroutine, even in the low-space \mpc model.

\end{document}